\newdimen\proofrulebreadth \proofrulebreadth=.05em
\newdimen\proofdotseparation \proofdotseparation=1.25ex
\newdimen\proofrulebaseline \proofrulebaseline=2ex
\let\then\relax
\def\hfi{\hskip0pt plus.0001fil}
\mathchardef\squigto="3A3B
\newif\ifinsideprooftree\insideprooftreefalse
\newif\ifonleftofproofrule\onleftofproofrulefalse
\newif\ifproofdots\proofdotsfalse
\newif\ifdoubleproof\doubleprooffalse
\let\wereinproofbit\relax
\newdimen\shortenproofleft
\newdimen\shortenproofright
\newdimen\proofbelowshift
\newbox\proofabove
\newbox\proofbelow
\newbox\proofrulename
\def\shiftproofbelow{\let\next\relax\afterassignment\setshiftproofbelow\dimen0 }
\def\shiftproofbelowneg{\def\next{\multiply\dimen0 by-1 }%
\afterassignment\setshiftproofbelow\dimen0 }
\def\setshiftproofbelow{\next\proofbelowshift=\dimen0 }
\def\setproofrulebreadth{\proofrulebreadth}
\def\prooftree{
%
\ifnum  \lastpenalty=1
\then   \unpenalty
\else   \onleftofproofrulefalse
\fi
%
\ifonleftofproofrule
\else   \ifinsideprooftree
        \then   \hskip.5em plus1fil
        \fi
\fi
%
\bgroup
\setbox\proofbelow=\hbox{}\setbox\proofrulename=\hbox{}%
\let\justifies\proofover\let\leadsto\proofoverdots\let\Justifies\proofoverdbl
\let\using\proofusing\let\[\prooftree
\ifinsideprooftree\let\]\endprooftree\fi
\proofdotsfalse\doubleprooffalse
\let\thickness\setproofrulebreadth
\let\shiftright\shiftproofbelow \let\shift\shiftproofbelow
\let\shiftleft\shiftproofbelowneg
\let\ifwasinsideprooftree\ifinsideprooftree
\insideprooftreetrue
%
\setbox\proofabove=\hbox\bgroup$\displaystyle 
\let\wereinproofbit\prooftree
%
\shortenproofleft=0pt \shortenproofright=0pt \proofbelowshift=0pt
%
\onleftofproofruletrue\penalty1
}
\def\eproofbit{
%
\ifx    \wereinproofbit\prooftree
\then   \ifcase \lastpenalty
        \then   \shortenproofright=0pt  
        \or     \unpenalty\hfil         
        \or     \unpenalty\unskip       
        \else   \shortenproofright=0pt  
        \fi
\fi
%
\global\dimen0=\shortenproofleft
\global\dimen1=\shortenproofright
\global\dimen2=\proofrulebreadth
\global\dimen3=\proofbelowshift
\global\dimen4=\proofdotseparation
\global\count255=\proofdotnumber
%
$\egroup  
%
\shortenproofleft=\dimen0
\shortenproofright=\dimen1
\proofrulebreadth=\dimen2
\proofbelowshift=\dimen3
\proofdotseparation=\dimen4
\proofdotnumber=\count255
}
\def\proofover{
\eproofbit 
\setbox\proofbelow=\hbox\bgroup 
\let\wereinproofbit\proofover
$\displaystyle
}%
\def\proofoverdbl{
\eproofbit 
\doubleprooftrue
\setbox\proofbelow=\hbox\bgroup 
\let\wereinproofbit\proofoverdbl
$\displaystyle
}%
\def\proofoverdots{
\eproofbit 
\proofdotstrue
\setbox\proofbelow=\hbox\bgroup 
\let\wereinproofbit\proofoverdots
$\displaystyle
}%
\def\proofusing{
\eproofbit 
\setbox\proofrulename=\hbox\bgroup 
\let\wereinproofbit\proofusing
\kern0.3em$
}
\def\endprooftree{
\eproofbit 
  \dimen5 =0pt
%
\dimen0=\wd\proofabove \advance\dimen0-\shortenproofleft
\advance\dimen0-\shortenproofright
%
\dimen1=.5\dimen0 \advance\dimen1-.5\wd\proofbelow
\dimen4=\dimen1
\advance\dimen1\proofbelowshift \advance\dimen4-\proofbelowshift
%
\ifdim  \dimen1<0pt
\then   \advance\shortenproofleft\dimen1
        \advance\dimen0-\dimen1
        \dimen1=0pt
        \ifdim  \shortenproofleft<0pt
        \then   \setbox\proofabove=\hbox{%
                        \kern-\shortenproofleft\unhbox\proofabove}%
                \shortenproofleft=0pt
        \fi
\fi
%
\ifdim  \dimen4<0pt
\then   \advance\shortenproofright\dimen4
        \advance\dimen0-\dimen4
        \dimen4=0pt
\fi
%
\ifdim  \shortenproofright<\wd\proofrulename
\then   \shortenproofright=\wd\proofrulename
\fi
%
\dimen2=\shortenproofleft \advance\dimen2 by\dimen1
\dimen3=\shortenproofright\advance\dimen3 by\dimen4
%
\ifproofdots
\then
        \dimen6=\shortenproofleft \advance\dimen6 .5\dimen0
        \setbox1=\vbox to\proofdotseparation{\vss\hbox{$\cdot$}\vss}%
        \setbox0=\hbox{%
                \advance\dimen6-.5\wd1
                \kern\dimen6
                $\vcenter to\proofdotnumber\proofdotseparation
                        {\leaders\box1\vfill}$%
                \unhbox\proofrulename}%
\else   \dimen6=\fontdimen22\the\textfont2 
        \dimen7=\dimen6
        \advance\dimen6by.5\proofrulebreadth
        \advance\dimen7by-.5\proofrulebreadth
        \setbox0=\hbox{%
                \kern\shortenproofleft
                \ifdoubleproof
                \then   \hbox to\dimen0{%
                        $\mathsurround0pt\mathord=\mkern-6mu%
                        \cleaders\hbox{$\mkern-2mu=\mkern-2mu$}\hfill
                        \mkern-6mu\mathord=$}%
                \else   \vrule height\dimen6 depth-\dimen7 width\dimen0
                \fi
                \unhbox\proofrulename}%
        \ht0=\dimen6 \dp0=-\dimen7
\fi
%
\let\doll\relax
\ifwasinsideprooftree
\then   \let\VBOX\vbox
\else   \ifmmode\else$\let\doll=$\fi
        \let\VBOX\vcenter
\fi
\VBOX   {\baselineskip\proofrulebaseline \lineskip.2ex
        \expandafter\lineskiplimit\ifproofdots0ex\else-0.6ex\fi
        \hbox   spread\dimen5   {\hfi\unhbox\proofabove\hfi}%
        \hbox{\box0}%
        \hbox   {\kern\dimen2 \box\proofbelow}}\doll%
%
\global\dimen2=\dimen2
\global\dimen3=\dimen3
\egroup 
\ifonleftofproofrule
\then   \shortenproofleft=\dimen2
\fi
\shortenproofright=\dimen3
%
\onleftofproofrulefalse
\ifinsideprooftree
\then   \hskip.5em plus 1fil \penalty2
\fi
}
\newcommand{\hinput}[1]{}
\newcommand{\CBN}{CBN\xspace}
\newcommand{\CBV}{CBV\xspace}
\newcommand{\CBPV}{CBPV\xspace}
\newcommand{\PCF}{PCF\xspace}
\newcommand{\PCFh}{PCF{\scriptsize\textsf{H}}\xspace}
\newcommand{\typesystem}{\mathcal{H}}
\newcommand{\defn}[1]{\textbf{#1}}
\newcommand{\ih}{IH\xspace}
\newcommand{\Eg}{{\em E.g.}\xspace}
\newcommand{\eg}{{\em e.g.}\xspace}
\newcommand{\ie}{{\em i.e.}\xspace}
\newcommand{\eqdef}{\overset{\textrm{def}}{=}}
\newcommand{\defeq}{:=}
\newcommand{\eqgram}{\mathrel{::=}}
\newcommand{\set}[1]{\{#1\}}
\newcommand{\fv}[1]{\mathsf{fv}(#1)}
\newcommand{\id}{\mathtt{id}}
\renewcommand{\emptyset}{\varnothing}
\newcommand{\length}[1]{\#(#1)}
\renewcommand{\theenumi}{\arabic{enumi}}
\renewcommand{\theenumii}{\arabic{enumii}}
\renewcommand{\theenumiii}{\arabic{enumiii}}
\renewcommand\p@enumii{\theenumi.}
\renewcommand\p@enumiii{\theenumi.\theenumii.}
\renewcommand\p@enumiv{\theenumi.\theenumii.\theenumiii.}
\colorlet{darkgreen}{green!60!black}
\colorlet{DARKGREEN}{green!60!black}
\theoremstyle{break}
\newtheorem{dummythm}{dummythm}
\newtheorem{lemma}[dummythm]{Lemma}
\newtheorem{theorem}[dummythm]{Theorem}
\theoremstyle{definition}
\theoremstyle{remark}
\newtheorem{remark}[dummythm]{Remark}
\newcommand{\HS}{\hspace{.5cm}}
\newcommand{\NF}[1]{\mathsf{NF}_{#1}}
\newcommand{\fun}{f}
\newcommand{\var}{x}
\newcommand{\vartwo}{y}
\newcommand{\varthree}{z}
\newcommand{\val}{\mathbf{v}}
\newcommand{\valtwo}{\mathbf{w}}
\newcommand{\valnat}{\mathbf{k}}
\newcommand{\valnattwo}{\mathbf{l}}
\newcommand{\tm}{t}
\newcommand{\tmtwo}{s}
\newcommand{\tmthree}{u}
\newcommand{\tmfour}{r}
\newcommand{\tmsix}{q}
\newcommand{\lam}[2]{\lambda#1.\,#2}
\newcommand{\zero}{{\bf 0}}
\newcommand{\succsym}{\mathbf{S}}
\renewcommand{\succ}[1]{\succsym(#1)}
\newcommand{\bound}[2]{#1.\,#2}
\newcommand{\ifz}[4]{\mathsf{if}(#1, #2, \bound{#3}{#4})}
\newcommand{\fix}[2]{\mathsf{fix}\,\bound{#1}{#2}}
\newcommand{\sub}[2]{\{#1:=#2\}}
\newcommand{\tov}[1]{\to_{#1}}
\newcommand{\tovsil}{\to}
\newcommand{\NFv}[1]{\mathsf{NF}_{#1}}
\newcommand{\rulename}{\rho}
\newcommand{\arVd}[1]{\ar[d]_-{#1}}
\newcommand{\arVr}[1]{\ar[r]_-{#1}}
\newcommand{\arsdVd}[1]{\ar@{.>}[d]_-{#1}}
\newcommand{\arsdVr}[1]{\ar@{.>}[r]_-{#1}}
\newcommand{\deriv}{\pi}
\newcommand{\derivtwo}{\zeta}
\newcommand{\derivof}[2]{#1 \rhd #2}
\newcommand{\typ}{\tau}
\newcommand{\typtwo}{\sigma}
\newcommand{\mtyp}{\mathcal{T}}
\newcommand{\mtyptwo}{\mathcal{S}}
\newcommand{\mtypthree}{\mathcal{U}}
\newcommand{\mtypfour}{\mathcal{R}}
\newcommand{\zeroTyp}{\mathbb{0}}
\newcommand{\succTyp}[1]{\mathbb{S}(#1)}
\newcommand{\ftyp}{\mathcal{F}}
\newcommand{\abssym}{\mathsf{abs}}
\newcommand{\natsym}{\mathsf{nat}}
\newcommand{\stucksym}{\mathsf{stuck}}
\newcommand{\nature}{\nu}
\newcommand{\enature}{\varepsilon}
\newcommand{\enaturetwo}{\varepsilon'}
\newcommand{\typnat}{\mathbb{N}}
\newcommand{\typtwonat}{\mathbb{M}}
\newcommand{\typabs}{\mathbb{A}}
\newcommand{\mtypnat}{\mathcal{N}}
\newcommand{\mtyptwonat}{\mathcal{M}}
\newcommand{\mtypthreenat}{\mathcal{P}}
\newcommand{\mtypabs}{\mathcal{A}}
\newcommand{\mtyptwoabs}{\mathcal{B}}
\newcommand{\optmarker}{?}
\newcommand{\optmtyp}{\mtyp^{\optmarker}}
\newcommand{\optmtyptwo}{\mtyptwo^{\optmarker}}
\newcommand{\optmtypthree}{\mtypthree^{\optmarker}}
\newcommand{\optmtypnat}{\mtypnat^{\optmarker}}
\newcommand{\mleq}{\lhd}
\newcommand{\none}{\bot}
\newcommand{\mcount}{\mathfrak{m}}
\newcommand{\mcounttwo}{\mathfrak{n}}
\newcommand{\mset}[1]{[#1]}
\newcommand{\emset}{\mset{\,}}
\newcommand{\msetnu}[2]{\mset{#2}^{#1}}
\newcommand{\msetabs}[1]{\msetnu{\abssym}{#1}}
\newcommand{\msetnat}[1]{\msetnu{\natsym}{#1}}
\newcommand{\emsetnu}[1]{\emset^{#1}}
\newcommand{\iI}{{i \in I}}
\newcommand{\jJ}{{j \in J}}
\newcommand{\kK}{{k \in K}}
\newcommand{\dom}[1]{\mathsf{dom}(#1)}
\newcommand{\emptyctx}{\emptyset}
\newcommand{\tctx}{\Gamma}
\newcommand{\tctxtwo}{\Delta}
\newcommand{\fset}[1]{\{\!\!\{#1\}\!\!\}}
\newcommand{\efset}{\fset{\,}}
\newcommand{\fctx}{\Phi}
\newcommand{\fctxtwo}{\Psi}
\newcommand{\judgv}[5][]{#2;#3\vdash^{#1}#4:#5}
\newcommand{\emptyPremise}{\vphantom{{}^@}}
\newcommand{\indrulename}[1]{\texttt{#1}}
\newcommand{\indrule}[3]{
\ensuremath{
\begin{array}{c}
  \prooftree #2
    \justifies #3
    \thickness=0.05em
    \using \indrulename{#1}
  \endprooftree
\end{array}}}
\newcommand{\bBeta}{\indrulename{B}}
\newcommand{\bIfZero}{\indrulename{I0}}
\newcommand{\bIfSucc}{\indrulename{IS}}
\newcommand{\bFix}{\indrulename{F}}
\newcommand{\vruleToBeta}{\indrulename{r-beta}}
\newcommand{\vruleToIfZero}{\indrulename{r-if0}}
\newcommand{\vruleToIfSucc}{\indrulename{r-ifS}}
\newcommand{\vruleToFix}{\indrulename{r-fix}}
\newcommand{\vruleToCongAppL}{\indrulename{r-appL}}
\newcommand{\vruleToCongAppR}{\indrulename{r-appR}}
\newcommand{\vruleToCongSucc}{\indrulename{r-succ}}
\newcommand{\vruleToCongIf}{\indrulename{r-if}}
\newcommand{\ruleTypAbsV}{\indrulename{t-abs$_\mathsf{v}$}}
\newcommand{\vruleNFAbs}{\indrulename{nf-abs}}
\newcommand{\vruleNFApp}{\indrulename{nf-app}}
\newcommand{\vruleNFZero}{\indrulename{nf-zero}}
\newcommand{\vruleNFSuccNat}{\indrulename{nf-succ-nat}}
\newcommand{\vruleNFSuccErr}{\indrulename{nf-succ-stuck}} 
\newcommand{\vruleNFIf}{\indrulename{nf-if}}
\newcommand{\vruleTypVarOne}{\indrulename{t-var$_1$}}
\newcommand{\vruleTypVarTwo}{\indrulename{t-var$_2$}}
\newcommand{\vruleTypAbs}{\indrulename{t-abs}}
\newcommand{\vruleTypApp}{\indrulename{t-app}}
\newcommand{\vruleTypZero}{\indrulename{t-zero}}
\newcommand{\vruleTypSucc}{\indrulename{t-succ}}
\newcommand{\vruleTypIfZero}{\indrulename{t-ifZero}}
\newcommand{\vruleTypIfSucc}{\indrulename{t-ifSucc}}
\newcommand{\vruleTypFix}{\indrulename{t-fix}}
\title{Hybrid Intersection Types for \PCF (Extended Version)}
\author{
  Pablo Barenbaum\inst{1}\footnotemark{}
    \footnotetext{
      \textit{
        Partially funded by project grants PUNQ 418/22 and PICT-2023-602.
      }
    }
\and
  Delia Kesner\inst{2}
\and
  Mariana Milicich\inst{2}\footnotemark{}
    \footnotetext{\includegraphics[height=7.0mm]{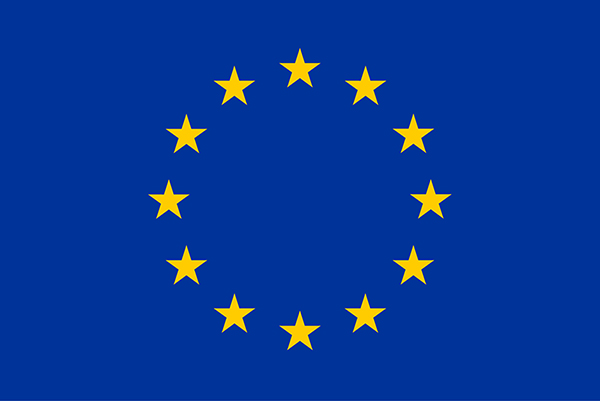}
      \textit{
        This project has received funding
        from the European Union’s Horizon 2020 research and innovation
        programme under the Marie Skłodowska-Curie grant agreement No 945332.
      }
    }
}
\institute{
  Universidad Nacional de Quilmes (CONICET), and Instituto de Ciencias de la Computación, UBA\\
  Argentina
\and
   Université Paris Cité, CNRS, IRIF \\
   France
}
\authorrunning{Barenbaum, Kesner, and Milicich}
\titlerunning{Hybrid Intersection Types for \PCF (Extended Version)}
\begin{document}

\maketitle


\begin{abstract}
Intersection type systems have been \emph{independently} applied to different 
evaluation strategies, such as call-by-name (\CBN) and call-by-value (\CBV). 
These type systems have been then generalized to different 
subsuming paradigms being able, in particular, to \emph{encode} \CBN and \CBV 
in a \emph{unique} unifying framework.
However, there are no intersection type systems that explicitly enable
\CBN and \CBV to cohabit together, without making use of an encoding
into a common target framework.

This work proposes an intersection type system  
for a specific notion of evaluation for PCF, called \PCFh.
Evaluation in \PCFh actually has a \emph{hybrid} nature, in the sense that
\CBN and \CBV operational behaviors cohabit together. 
Indeed, \PCFh combines a \CBV-like behavior for 
function application with a \CBN-like behavior for recursion.
This hybrid nature is reflected in the type system, which turns out to be
\emph{sound} and \emph{complete} with respect to \PCFh:
not only typability implies normalization, but also the converse holds.
Moreover, the type system is \emph{quantitative}, in the sense
that the size of typing derivations provides upper bounds for the
length of the reduction sequences to normal form.
This first type system is then refined to a \emph{tight} one,
offering \emph{exact} information regarding the length of
normalization sequences. 
This is the first time that 
a sound and complete quantitative type system has been designed for
a hybrid computational model.
\end{abstract}

\section{Introduction}
\label{sec:introduction}
Evaluation strategies govern how computation proceeds in programming
languages. Two prominent strategies are \emph{call-by-name}~(\CBN)
and \emph{call-by-value}~(\CBV)~\cite{Plotkin75}. In \CBN, arguments
are not evaluated before being consumed by functions, while in \CBV,
arguments must be fully evaluated to \emph{values} before function
application can proceed.

These two ways to perform evaluation have 
their own advantages and drawbacks.
\CBV evaluation can sometimes be less costly than \CBN,
since \CBV evaluates arguments only once, 
while \CBN may have to evaluate many copies of a single argument.
Conversely, sometimes \CBN evaluation can be less costly than \CBV,
as \CBN does not evaluate an \emph{unused} argument
while \CBV always evaluates arguments, even if their value is not needed.
For instance, if the function $g$
is constantly $0$ and $\Omega$ is a looping program,
then $g(\Omega)$ produces $0$ as a result in \CBN
while \CBV evaluation does not terminate.

The main motivation of this work is to enhance our
understanding of the \emph{quantitative semantics}
of programming language constructs such as
inductive data types and recursion.
The quantitative semantics
of minimalistic languages such as the $\lambda$-calculus
are relatively well-understood in both
the \CBN and \CBV settings (see~\cite{deCarvalho07,Ehrhard12}),
but inductive data types and recursion cannot
be expressed directly, and must instead rely on some sort
of \emph{encoding}.
In this work, we study \PCFh, an extension of
the $\lambda$-calculus that incorporates natural numbers,
conditional expressions, and recursion through a fixed-point
operator without resorting to an encoding.

\paragraph{Non-idempotent intersection types.}
Intersection types (IT), pioneered by Coppo and Dezani~\cite{CoppoD80},
extend simple types with a new \emph{intersection} type constructor~($\cap$),
in such a way that a term is assigned the type $\typ \cap \typtwo$
if and only if it is assigned both types $\typ$ and $\typtwo$.
Originally, intersection is commutative, associative
and, in particular, \emph{idempotent}, \ie $\typ$ and $\typ\cap\typ$ are equivalent.
An intersection $\typ_1\cap\hdots\cap\typ_n$ can thus be specified by a \emph{set}
$\set{\typ_1,\hdots,\typ_n}$.
IT were motivated by semantical concerns. A key property
is that typability is preserved by \emph{conversion}
(reduction \emph{and} expansion), in contrast with simple types,
in which typability is preserved by reduction but not expansion.
This technical property allows to study normalization from a type theoretical
point of view. 
In intersection types, a term is typable if and only if it is normalizing,
while in simple types, typable terms are normalizing but the converse does not hold.

A more recent line of research is that of \emph{non-idempotent}
IT~\cite{Gardner94,deCarvalho07}, in which $\typ\cap\typ$ is not equivalent to $\typ$.
Here, an intersection $\typ_1\cap\hdots\cap\typ_n$
may be specified by a \emph{multiset} $\mset{\typ_1,\hdots,\typ_n}$.
These systems draw inspiration from Linear Logic (LL)~\cite{Girard87},
as in fact, non-idempotent intersection corresponds to
LL's multiplicative conjunction ($\otimes$)~\cite{deCarvalho07}.
Just like in the idempotent case,
non-idempotent IT characterize normalization by means of typability~\cite{deCarvalho07}.
Moreover, non-idempotent types have some advantages over idempotent ones;
one key benefit  is their ability to capture \emph{quantitative}
information about the dynamic behavior of programs.
For example, in non-idempotent IT one can not only prove that
a term $\tm$ is normalizing if and only if it is typable,
but also obtain an \emph{upper bound} for the length of the evaluation sequence
of $\tm$ to normal form.
Since these systems are inspired by linear logic and resource consumption~\cite{Gardner94}, 
they are called \emph{quantitative} type systems.
Several works have explored non-idempotent intersection type systems in the context of \CBN~\cite{deCarvalho09, BucciarelliKV17} and
\CBV~\cite{Ehrhard12, Accattoli18, KesnerV22}.
In these systems, the quantitative information can usually be
recorded using \emph{counters} that decorate typing judgments.
For example, a typical judgment may be of the form $\tctx \vdash^n \tm : \typ$,
meaning that $\tm$ normalizes to a normal form of type $\typ$, and 
that the number of reduction steps required to reduce $\tm$ to normal form is 
\emph{bounded} by the value of the counter $n \in \mathbb{N}$.

Quantitative properties provided by non-idempotent intersection type
systems can be refined using a syntactic property, \emph{tightness},
achieved by using only \emph{minimal} type derivations~\cite{AccattoliGK20}. 
Tightness allows to extract from typing derivations not only \emph{upper bounds} 
but rather \emph{exact} quantitative information.
Indeed, in tight derivations, a concluding judgment is still 
of the form $\tctx \vdash^n \tm : \typ$,
but now the counter $n \in \mathbb{N}$ indicates that
\emph{exactly} $n$ reduction steps are required to reduce $\tm$ to normal form.

\paragraph{\PCF and \PCFh.}
\emph{Programming Computable Functions} (\PCF)~\cite{Plotkin77} 
is a 
functional programming language designed to serve as a foundational computational model.
It extends the simply-typed $\lambda$-calculus with
natural numbers, conditional\ expressions,
and general recursion (through a fixed-point operator).
\PCF is simultaneously minimalistic, expressive, and rigorous,
making it an ideal vehicle to study programming language semantics.
Historically, it has been used to study the correspondence
between operational and denotational semantics, \ie, the \emph{full abstraction}
problem~\cite{Plotkin77, Milner77, AbramskyJM00, EhrhardPT18}.

In this work, we study a variant of \PCF called \PCFh,
in which function application and evaluation of conditional
expressions follow a \CBV discipline.
Besides that, the rule to \emph{unfold} a fixed-point is the standard one, namely:
\[
  \fix{\var}{\tm} \to \tm\sub{\var}{\fix{\var}{\tm}}
\]
A noteworthy remark is that the semantics of
recursion is akin to \CBN rather than to \CBV
since the fixed-point can make copies of itself, which is \emph{not}
a value yet.
Given this ``mixed'' operational behavior, we say that evaluation in \PCFh is \emph{hybrid}.
The fact that evaluation is hybrid is challenging from the point of view of 
quantitative semantics~\cite{KesnerV22,BucciarelliKRV23} because it requires 
synthesizing characteristics of \CBN and \CBV in a single system.

\paragraph{A note on nomenclature.}
Languages closely related to \PCFh are sometimes said to
be ``call-by-value''~\cite{DowekL11}.
As explained above, evaluation in \PCFh is actually hybrid
because unfolding a fixed-point may produce copies of itself,
substituting a variable by an expression that is not a value yet.
We write \PCFh to highlight the fact that evaluation is hybrid,
as indicated by the subscript ``\textsf{H}''.

\paragraph{Related work.}
There is a vast amount of literature on \PCF.
In particular, various notions of evaluation have been proposed for \PCF;
for instance, \cite{HondaY99}
provides insights on differences and similarities
between \CBV semantics and \CBN semantics.
A probabilistic version of \PCF is presented in~\cite{EhrhardPT18},
which follows a \CBN discipline but is still hybrid in that
it exhibits a \CBV behavior for its conditional operator.  

Furthermore, some works explore semantical aspects of \PCF
that are connected to our own.
In~\cite{Ehrhard16}, Ehrhard studies an
interpretation of a Call-by-Push-Value (\CBPV) $\lambda$-calculus
with disjoint unions and fixed-points (encompassing \PCF)
in LL, and it derives an intersection type
system from the Scott model of LL.
The unfolding of a fixed-point operator can duplicate itself
by copying a value that represents its suspended computation.
Technically, this is achieved by relying on the of-course ($!$)
modality of LL.
The resulting type system is in the style of Coppo--Dezani,
with \emph{idempotent} intersections, and thus not quantitative.

Recent work in~\cite{Ehrhard23}
derives an intersection type system for
a differential extension of \PCF from a relational semantics.
It is shown that typability in this system implies normalization
by means of a reducibility argument, but the quantitative aspect
is not studied. In contrast, we prove soundness and completeness
by elementary means, and we characterize quantitative upper bounds
and exact bounds via our non-idempotent type system.

There are other languages in the literature that allow the encoding of \CBN/\CBV behavior,
such as Levy's \emph{call-by-push-value} (\CBPV)~\cite{Levy99}
and Ehrhard and Guerrieri's \emph{Bang Calculus}~\cite{EhrhardG16}.
They attempt to unify \CBN and \CBV in a single framework,
but they are not hybrid because they only allow duplicating values.
Recent works~\cite{KesnerV22,BucciarelliKRV23} propose
quantitative type systems for \emph{adequate} variants of the Bang Calculus.
These works study upper bounds and exact bounds. 
However, they are not able to express recursion and integers directly, 
but only through encodings.

\paragraph{Contributions.}
As a main contribution, we propose a quantitative model for 
the hybrid calculus \PCFh,
based on a non-idempotent intersection type  system.
The type system is proved to be sound and complete with respect to the
operational semantics: it characterizes normalizing terms and
provides upper bounds for the length of reductions to normal form.
A peculiar feature of this system is that it distinguishes between
those variables bound by $\lambda$-abstractions from those bound by
fixed-point operators. 
The former may be substituted by \emph{values} while the latter may be 
substituted by arbitrary \emph{expressions}.
Type assignment rules work differently for these two kinds of variables 
because a value is assigned one type each time it will be used (as in \CBV), 
while an arbitrary expression may be copied to produce many values (as in \CBN).

As a second contribution, we refine the type system to obtain
a \emph{tight} version, which provides \emph{exact} bounds for
reduction lengths to normal form, instead of upper bounds.

To the best of our knowledge, this is the first quantitative type interpretation 
that is adequate for a hybrid system.

\paragraph{Structure of the paper.}
\cref{sec:PCF} covers the syntax and operational semantics of \PCFh.
\cref{sec:quantitative_type_system_PCF} introduces a quantitative type system 
for \PCFh.
Finally, the paper concludes by summarizing our work and outlining potential directions for future work.

\section{The \PCFh calculus}
\label{sec:PCF}

This section introduces the syntax of \PCFh, 
together with its associated operational semantics,
which follows a hybrid evaluation.
We characterize the set of normal forms induced by the operational semantics 
through a grammar (\cref{propCBV:characterization_NF}).
Moreover, as the evaluation strategy is not necessarily deterministic, 
we show that it enjoys the Diamond Property (\cref{propCBV:diamond}), 
which implies in particular that all evaluation sequences to normal form have the same length,
thus justifying the use of the proposed strategy to study quantitative behaviors.

Given a denumerable set of 
\textbf{variables} $(\var, \vartwo, \varthree, \hdots)$,
we define \defn{terms} $(\tm, \tmtwo, \tmthree, \hdots)$,
\defn{values} $(\val, \valtwo, \hdots)$,
and \defn{natural values} $(\valnat, \valnattwo, \hdots)$
using the following grammar:
\[
  \tm \eqgram \var
      \mid    \lam{\var}{\tm}
      \mid    \tm \, \tm
      \mid    \zero
      \mid    \succ{\tm}
      \mid    \ifz{\tm}{\tm}{\var}{\tm}
      \mid    \fix{\var}{\tm}
\]
\[
  \val \eqgram \lam{\var}{\tm} \mid \valnat
  \HS\HS
  \valnat \eqgram \zero \mid \succ{\valnat}
\]
Terms include
\textbf{variables} $\var$,
\textbf{abstractions} $\lam{\var}{\tm}$, 
\textbf{applications} $\tm \, \tmtwo$, a
\textbf{fixed-point operator}~$\fix{\var}{\tm}$,
as well as syntax for natural numbers:
\textbf{zero}~$\zero$,
\textbf{the successor constructor}~$\succ{\tm}$,
and a
\textbf{conditional operator}~$\ifz{\tm}{\tmtwo}{\var}{\tmthree}$.
Free and bound occurrences of variables are defined as expected,
considering that the free occurrences of $\var$ in $\tmthree$
are bound in $\ifz{\tm}{\tmtwo}{\var}{\tmthree}$
and in $\fix{\var}{\tmthree}$. 
Terms are considered up to $\alpha$-renaming of bound variables.
We write $\tm\sub{\var}{\tmtwo}$ for the capture-avoiding substitution
of the free occurrences of $\var$ by $\tmtwo$ in $\tm$.

The operational semantics of \PCFh is given by a reduction
relation $\tov{\rulename}$ indexed by a \defn{rule name}
$\rulename \in \set{\bBeta,  \bIfZero, \bIfSucc, \bFix}$,
and it is defined by the following set of rules:
\[
  \indrule{\vruleToBeta}{
  }{
    (\lam{\var}{\tm})\,\val \tov{\bBeta} \tm\sub{\var}{\val}
  }
  \indrule{\vruleToIfZero}{
  }{
    \ifz{\zero}{\tm}{\var}{\tmtwo}
    \tov{\bIfZero}
    \tm
  }
\]
\[
  \indrule{\vruleToIfSucc}{
  }{
    \ifz{\succ{\valnat}}{\tm}{\var}{\tmtwo}
    \tov{\bIfSucc}
    \tmtwo\sub{\var}{\valnat}
  }
  \indrule{\vruleToFix}{
  }{
    \fix{\var}{\tm}
    \tov{\bFix}
    \tm\sub{\var}{\fix{\var}{\tm}}
  }
\]
\[
  \indrule{\vruleToCongAppL}{
    \tm \tov{\rulename} \tm'
  }{
    \tm\,\tmtwo \tov{\rulename} \tm'\,\tmtwo
  }
  \indrule{\vruleToCongAppR}{
    \tmtwo \tov{\rulename} \tmtwo'
  }{
    \tm\,\tmtwo \tov{\rulename} \tm\,\tmtwo'
  }
\]
\[
  \indrule{\vruleToCongSucc}{
    \tm \tov{\rulename} \tm'
  }{
    \succ{\tm} \tov{\rulename} \succ{\tm'}
  }
  \indrule{\vruleToCongIf}{
    \tm \tov{\rulename} \tm'
  }{
    \ifz{\tm}{\tmtwo}{\var}{\tmthree}
    \tov{\rulename}
    \ifz{\tm'}{\tmtwo}{\var}{\tmthree}
  }
\]
The resulting evaluation strategy is \emph{closed}
(no reduction of terms with free variables) and 
\emph{weak} (no evaluation under abstractions).

Several variants of \PCF can be found in the literature.
In this presentation, function application follows a \CBV discipline because
the argument of an application must be evaluated until
it becomes a value, so that the rule $\vruleToBeta$
can be applied. The conditional operator follows a
\CBV discipline as well, as the guard must be fully evaluated to a
value so that one of the rules $\vruleToIfZero$ or
$\vruleToIfSucc$ can be applied.
The conditional $\ifz{\tm}{\tmtwo}{\var}{\tmthree}$ returns 
$\tmtwo$ if $\tm$ is zero, and $\tmthree$ if $\tm$ is non-zero,
binding $\var$ to its predecessor. 
A consequence of this is that the predecessor function can be defined
using the conditional operator; \eg
$\ifz{\succ{\valnat}}{\zero}{\var}{\var} \tov{\bIfSucc} \valnat$, 
a variant which can  be found in~\cite{EhrhardPT18}.

To illustrate this strategy, let us evaluate the term
$\ifz{\succ{\zero}}{\id}{\var}{\lam{\vartwo}{\var}} \, \succ{\succ{\zero}}$
in one step, where $\id \defeq \lam{\varthree}{\varthree}$:
\[
  \indrule{\vruleToCongAppL}{
    \indrule{\vruleToIfSucc}{
      \emptyPremise
    }{
      \ifz{\succ{\zero}}{\id}{\var}{\lam{\vartwo}{\var}} \tov{\bIfSucc}
      (\lam{\vartwo}{\var})\sub{\var}{\zero}
    }
  }{
    \ifz{\succ{\zero}}{\id}{\var}{\lam{\vartwo}{\var}} \, \succ{\succ{\zero}}
    \tov{\bIfSucc}
    (\lam{\vartwo}{\zero}) \, \succ{\succ{\zero}}
  }
\]
After one more reduction step, by rule $\vruleToBeta$, we obtain
$(\lam{\vartwo}{\zero}) \, \succ{\succ{\zero}} \tov{\bBeta} \zero$.

Let us write $\tm \tovsil \tmtwo$ if $\tm \tov{\rulename} \tmtwo$ for some
rule name $\rulename$.
A term $\tm$
is \defn{$\tovsil$-reducible} if there exists $\tmtwo$ such that
$\tm \tovsil \tmtwo$; and $\tm$
is \defn{$\tovsil$-irreducible} if $\tm$ is not $\tovsil$-reducible.
For the example above, $\zero$ is $\tovsil$-irreducible, while
$\ifz{\succ{\zero}}{\id}{\var}{\lam{\vartwo}{\var}} \, \succ{\succ{\zero}}$
is $\tovsil$-reducible.

Terms that cannot be further $\tovsil$-reduced are
also called \defn{normal forms}. In the following, we characterize
the normal forms of the proposed strategy by means of an inductive definition.
Normal forms are of different kinds or \emph{natures}. 
For instance, $\lam{\var}{\var\,\var}$ is a normal form of
\emph{abstraction} nature,
while $\succ{\zero}$ is a normal form of \emph{natural number} nature.
The preceding normal forms are intuitively ``meaningful'',
and they are said to be \emph{proper}; these normal forms
are those that can be successfully typed in a quantitative type system, 
as we discuss in the next section.
In contrast, there are intuitively ``meaningless'' normal forms such as
$\succ{\lam{\var}{\tm}}$, or $\zero\,\tm$, or
$\ifz{\lam{\var}{\tm}}{\tmtwo}{\vartwo}{\tmthree}$.
These are said to be \emph{stuck}.
Stuck normal forms do not have any meaning,
so they cannot be typed in a quantitative type system. 
We return to this point in the next section.

Formally,
we first distinguish between different \emph{natures} which will be used to
index the sets of normal forms.
Thus, we establish the sets of \defn{proper natures},
$\nature \in \set{\abssym, \natsym}$
and \defn{fallible natures}, 
$\enature \in \set{\abssym, \natsym, \stucksym}$.

We write $\NFv{\enature}$ to denote the set of \defn{normal forms} indexed 
by a fallible nature $\enature$, which is defined by the following rules:
\[
  \indrule{\vruleNFAbs}{
    \emptyPremise
  }{
    \lam{\var}{\tm} \in \NFv{\abssym}
  }
  \indrule{\vruleNFApp}{
    \tm \in \NFv{\enature}
    \HS
    \tmtwo \in \NFv{\enature'}
    \HS
    \enature \neq \abssym
  }{
    \tm\,\tmtwo \in \NFv{\stucksym}
  }
  \indrule{\vruleNFZero}{
    \emptyPremise
  }{
    \zero \in \NFv{\natsym}
  }
\]
\[
  \indrule{\vruleNFSuccNat}{
    \tm \in \NFv{\natsym}
  }{
    \succ{\tm} \in \NFv{\natsym}
  }
  \indrule{\vruleNFSuccErr}{
    \tm \in \NFv{\enature}
    \HS
    \enature \neq \natsym
  }{
    \succ{\tm} \in \NFv{\stucksym}
  }
  \indrule{\vruleNFIf}{
    \tm \in \NFv{\enature}
    \HS
    \enature \neq \natsym
  }{
    \ifz{\tm}{\tmtwo}{\var}{\tmthree} \in \NFv{\stucksym}
  }
\]
For instance, $\lam{\vartwo}{\id \, \id} \in \NF{\abssym}$
because of rule $\vruleNFAbs$, while $(\lam{\vartwo}{\vartwo \, \zero}) \, \id$
is not in normal form since its left subterm is
in normal form but with the proper nature
$\abssym$.

The results below show that this inductively defined notion of normal form
characterizes exactly the set of irreducible terms.
This syntactic characterization is a key ingredient to show that
the quantitative type system studied in \cref{sec:quantitative_type_system_PCF}
is sound and complete.
We start with a simple observation regarding the expected form of normal forms 
based on their assigned proper natures:
\begin{restatable}{lemma}{formsnfnatures}
\label{lemCBV:forms_nf_natures}
Let $\tm \in \NFv{\enature}$.
Then:
\begin{enumerate}
\item
  $\enature = \abssym$
  if and only if
  $\tm$ is of the form $\lam{\var}{\tmtwo}$,
\item
  $\enature = \natsym$
  if and only if
  $\tm$ is of the form $\valnat$.
\end{enumerate}
\end{restatable}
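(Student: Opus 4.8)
The plan is to prove both statements by a simple case analysis on the structure of $\tm$, combined with an inspection of the inductive rules defining $\NFv{\enature}$. Since the "if" directions are immediate — if $\tm = \lam{\var}{\tmtwo}$ then the only rule that can have concluded $\tm \in \NFv{\enature}$ is $\vruleNFAbs$, forcing $\enature = \abssym$; and if $\tm = \valnat$ then by the grammar $\valnat$ is either $\zero$ or $\succ{\valnat'}$, so the derivation ends in $\vruleNFZero$ or $\vruleNFSuccNat$, and an easy induction on $\valnat$ gives $\enature = \natsym$ — the bulk of the work is the "only if" directions.

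For the "only if" direction of item 1, I would assume $\enature = \abssym$ and inspect which rules can produce a normal form of nature $\abssym$. Scanning the six rules, the only one whose conclusion carries nature $\abssym$ is $\vruleNFAbs$, whose conclusion has exactly the shape $\lam{\var}{\tm}$; hence $\tm$ is an abstraction. Symmetrically, for the "only if" direction of item 2, assume $\enature = \natsym$: the only rules concluding with nature $\natsym$ are $\vruleNFZero$ (conclusion $\zero$, which is a natural value) and $\vruleNFSuccNat$ (conclusion $\succ{\tm'}$ with $\tm' \in \NFv{\natsym}$). In the latter case I would proceed by induction on the derivation: the induction hypothesis applied to $\tm' \in \NFv{\natsym}$ gives that $\tm'$ is of the form $\valnat'$, and then $\succ{\tm'} = \succ{\valnat'}$ is again a natural value by the grammar for $\valnat$. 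This closes the induction.

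A cleaner way to package this is to do a single induction on the derivation of $\tm \in \NFv{\enature}$, proving the full biconditional statement simultaneously for both natures; each of the six rules is then a trivial case, and only $\vruleNFSuccNat$ uses the induction hypothesis. The "if" directions come essentially for free because the rules are syntax-directed: each term constructor matches at most a couple of rules, so knowing the head constructor of $\tm$ pins down the possible natures.

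I do not expect any real obstacle here: the statement is a routine structural fact, and the only mild subtlety is remembering that $\vruleNFSuccNat$ recurses on a premise of nature $\natsym$, so one genuinely needs induction (not just a one-step inversion) to conclude that an entire tower of successors built over a $\natsym$-normal form is a natural value. Everything else is immediate inversion on the rule set.
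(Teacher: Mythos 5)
Your plan is correct and, in its ``cleaner packaging'' form, is exactly the paper's proof: a single induction on the derivation of $\tm \in \NFv{\enature}$, establishing both biconditionals simultaneously, with the successor case carrying the inductive content. One small correction: the case $\vruleNFSuccErr$ also genuinely needs the induction hypothesis (not just $\vruleNFSuccNat$), since to conclude that $\succ{\tmtwo} \in \NFv{\stucksym}$ is \emph{not} of the form $\valnat$ you must invoke the \ih on the premise $\tmtwo \in \NFv{\enature'}$ with $\enature' \neq \natsym$ to rule out that $\tmtwo$ is a natural value.
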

\begin{proof}
See \cref{app:PCF} for details.
\end{proof}

The characterization of normal forms then follows using the previous lemma:
\begin{restatable}[Characterization of normal forms]{proposition}{characterizationNF}
\label{propCBV:characterization_NF}
For any closed term $\tm$ the following are equivalent:
\begin{enumerate}
\item
  There exists a fallible nature $\enature$ such that $\tm \in \NFv{\enature}$.
\item
  $\tm$ is $\tovsil$-irreducible.
\end{enumerate}
\end{restatable}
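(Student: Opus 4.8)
First I would establish the direction (1) $\Rightarrow$ (2): if $\tm \in \NFv{\enature}$ for some fallible nature $\enature$, then $\tm$ is $\tovsil$-irreducible. I proceed by induction on the derivation of $\tm \in \NFv{\enature}$, inspecting the last rule used. For $\vruleNFAbs$ and $\vruleNFZero$ the term is an abstraction or $\zero$, and no reduction rule applies (the strategy is weak, so no reduction under $\lambda$). For $\vruleNFSuccNat$ and $\vruleNFSuccErr$, the term is $\succ{\tm'}$; the only rule that could fire is $\vruleToCongSucc$, which requires $\tm'$ to be reducible, contradicting the induction hypothesis applied to the premise $\tm' \in \NFv{\enature'}$. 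The interesting cases are $\vruleNFApp$ and $\vruleNFIf$. For $\vruleNFApp$ we have $\tm = \tm_1\,\tm_2$ with $\tm_1 \in \NFv{\enature_1}$, $\tm_2 \in \NFv{\enature_2}$, and $\enature_1 \neq \abssym$. The reduction rules that could apply are $\vruleToBeta$, $\vruleToCongAppL$, $\vruleToCongAppR$. The congruence rules are excluded by the induction hypothesis on $\tm_1, \tm_2$. For $\vruleToBeta$, we would need $\tm_1 = \lam{\var}{\tmthree}$, but then \cref{lemCBV:forms_nf_natures}(1) would force $\enature_1 = \abssym$, contradiction. The case $\vruleNFIf$ is analogous: $\tm = \ifz{\tm_1}{\tm_2}{\var}{\tm_3}$ with $\tm_1 \in \NFv{\enature_1}$ and $\enature_1 \neq \natsym$; rules $\vruleToIfZero$ and $\vruleToIfSucc$ require $\tm_1$ to be $\zero$ or $\succ{\valnat}$, which by \cref{lemCBV:forms_nf_natures}(2) would force $\enature_1 = \natsym$, contradiction; and $\vruleToCongIf$ is excluded by the induction hypothesis.

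Next I would prove the converse (2) $\Rightarrow$ (1), i.e.\ every $\tovsil$-irreducible closed term lies in $\NFv{\enature}$ for some fallible $\enature$. This goes by structural induction on $\tm$. Since $\tm$ is closed it cannot be a variable. If $\tm = \lam{\var}{\tm'}$, then $\tm \in \NFv{\abssym}$ by $\vruleNFAbs$. If $\tm = \zero$, then $\tm \in \NFv{\natsym}$ by $\vruleNFZero$. If $\tm = \fix{\var}{\tm'}$, then $\tm$ is reducible by $\vruleToFix$, so this case is vacuous. If $\tm = \succ{\tm'}$, then $\tm'$ must be irreducible (else $\vruleToCongSucc$ applies), so by the induction hypothesis $\tm' \in \NFv{\enature'}$; if $\enature' = \natsym$, \cref{lemCBV:forms_nf_natures}(2) gives $\tm' = \valnat$, so $\tm = \succ{\valnat} \in \NFv{\natsym}$ by $\vruleNFSuccNat$; otherwise $\enature' \neq \natsym$ and $\tm \in \NFv{\stucksym}$ by $\vruleNFSuccErr$. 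If $\tm = \tm_1\,\tm_2$: both $\tm_1$ and $\tm_2$ must be irreducible (else a congruence rule applies), and they are closed, so by the induction hypothesis $\tm_i \in \NFv{\enature_i}$; if $\enature_1 = \abssym$, then by \cref{lemCBV:forms_nf_natures}(1) $\tm_1 = \lam{\var}{\tmthree}$ — but then whatever $\tm_2$ is, since $\tm_2 \in \NFv{\enature_2}$ it is in particular a value or a stuck form; one must check that when $\tm_2$ is a value (abstraction or $\valnat$) the $\vruleToBeta$ rule fires, contradicting irreducibility, while if $\tm_2$ is stuck a subtlety arises. If $\enature_1 \neq \abssym$, then $\tm \in \NFv{\stucksym}$ by $\vruleNFApp$. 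The conditional case $\tm = \ifz{\tm_1}{\tm_2}{\var}{\tm_3}$ is handled similarly: $\tm_1$ is irreducible and closed, so $\tm_1 \in \NFv{\enature_1}$ by the induction hypothesis; if $\enature_1 = \natsym$ then $\tm_1 = \valnat$ by \cref{lemCBV:forms_nf_natures}(2) and one of $\vruleToIfZero$, $\vruleToIfSucc$ fires, contradicting irreducibility; hence $\enature_1 \neq \natsym$ and $\tm \in \NFv{\stucksym}$ by $\vruleNFIf$.

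\textbf{The main obstacle} I anticipate is in the application case of direction (2) $\Rightarrow$ (1) when $\enature_1 = \abssym$: I need to argue that $\tm_1 = \lam{\var}{\tmthree}$ applied to any normal-form $\tm_2$ is reducible. If $\tm_2$ is a value this is immediate from $\vruleToBeta$, but the grammar of $\NFv{\cdot}$ also admits $\tm_2$ being a stuck term (e.g.\ an application $\tm_2 = \tm_2'\,\tm_2''$ with $\tm_2'$ of non-abstraction nature), which is not a value, so $\vruleToBeta$ does \emph{not} fire and $\tm_1\,\tm_2$ would be irreducible — yet $\lam{\var}{\tmthree}\,\tm_2$ is \emph{not} derivable in $\NFv{\cdot}$ because $\vruleNFApp$ demands the left nature differ from $\abssym$. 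This suggests the statement as literally phrased may require $\tm_2$ closed and irreducible to actually be a value, which in turn needs closedness to rule out the stuck subcases, or else the intended reading is that reduction is permitted under the right-hand side of an application even when the left side is not yet a value — which is exactly what $\vruleToCongAppR$ provides. The resolution is that $\vruleToCongAppR$ lets us reduce $\tm_2$; but $\tm_2$ is already irreducible; so the genuine content is that a \emph{closed} irreducible term of nature $\abssym$ forces its argument, if also closed and irreducible, into a shape where $\vruleToBeta$ applies — and this is false in general ($\id\,(x\,x)$ is stuck), so closedness of $\tm_2$ together with the induction hypothesis giving $\tm_2\in\NFv{\enature_2}$ must be leveraged to show $\tm_2$ cannot be stuck when the whole term is to be irreducible and we are not already done. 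I would handle this by observing that if $\enature_1=\abssym$ and $\tm_2$ is stuck, then $\tm_1\,\tm_2$ is in fact still $\tovsil$-irreducible, so the correct conclusion is that $\tm=\tm_1\,\tm_2$ need not be placed in $\NFv{\cdot}$ — meaning the grammar must actually be checked to cover this case, and if it genuinely does not, the proposition should be read with the implicit understanding (consistent with the surrounding text on "stuck" forms) that such terms are still captured, forcing a careful re-examination of rule $\vruleNFApp$'s side condition. I expect the paper resolves this cleanly, and I would follow its lead on the precise bookkeeping of natures in this case.
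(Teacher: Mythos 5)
Your overall strategy coincides with the paper's: direction $(1)\Rightarrow(2)$ by induction on the derivation of $\tm \in \NFv{\enature}$, using \cref{lemCBV:forms_nf_natures} to rule out root redexes and the induction hypothesis to rule out the congruence rules; direction $(2)\Rightarrow(1)$ by structural induction on the closed term $\tm$, with the variable case excluded by closedness and the fixed-point case vacuous because $\vruleToFix$ always fires. Case for case this matches the paper's proof.

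The one place where your proposal does not close is exactly the case you flag: $\tm = \tm_1\,\tm_2$ with $\tm_1$ irreducible of nature $\abssym$ and $\tm_2$ irreducible but stuck. You correctly observe that $\vruleToBeta$ only fires when the argument is a \emph{value}, so a term such as the closed instance $(\lam{\var}{\var})\,(\zero\,\zero)$ is $\tovsil$-irreducible, while $\vruleNFApp$ cannot derive it because its side condition demands that the left nature differ from $\abssym$. You end by deferring to the paper's bookkeeping, but the paper does not supply the missing argument: its proof of this case asserts that if $\tm_1$ were an abstraction then $\tm$ would reduce by $\vruleToBeta$, which is false when $\tm_2$ is not a value. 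So your worry is legitimate rather than a misunderstanding: as the rules are literally written, $(\lam{\var}{\var})\,(\zero\,\zero)$ is a closed $\tovsil$-irreducible term not covered by the grammar of $\NFv{\cdot}$, and repairing the statement requires either weakening the side condition of $\vruleNFApp$ (for instance, also admitting the case where the argument has nature $\stucksym$) or restricting attention to terms without stuck subterms. As a standalone proof your proposal therefore has a genuine gap here --- you diagnose the problem precisely but do not resolve it --- and following the paper's lead would not repair it, since the paper's own treatment of this case is the erroneous step.
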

\begin{proof}
See \cref{app:PCF} for details.
\end{proof}

Evaluation of applications is non-deterministic, \eg\ consider
$((\lam{\var}{\succ{\var}}) \, \zero) \,
(\ifz{\zero}{\id}{\vartwo}{\vartwo \, \id})$.
This term reduces in one step to
$\succ{\zero} \, (\ifz{\zero}{\id}{\vartwo}{\vartwo \, \id})$
by rules $\vruleToCongAppL$ and $\vruleToBeta$, and it also 
reduces in one step to $((\lam{\var}{\succ{\var}}) \, \zero) \, \id$
by rule $\vruleToIfZero$.
Thus, we need to ensure that 
evaluating terms leads to unique normal forms,
despite some form of non-determinism during the computation.
To achieve this, we prove the Diamond Property for the relation $\tovsil$, 
saying that if a term reduces in one step
to two different terms $\tm_1$ and $\tm_2$, then both terms converge in one step
to the same common reduct.
This is a strong form of confluence, which in particular ensures
that all reductions to normal form have the same length:

\begin{restatable}[Diamond property]{proposition}{diamond}
\label{propCBV:diamond}
If $\tm \tov{\rulename_1} \tm_1$ and $\tm \tov{\rulename_2} \tm_2$
where $\tm_1 \neq \tm_2$
then there exists $\tm'$
such that $\tm_1 \tov{\rulename_2} \tm'$ and $\tm_2 \tov{\rulename_1} \tm'$.
\end{restatable}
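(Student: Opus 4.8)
The plan is to prove the Diamond Property by induction on the derivation of the first reduction step $\tm \tov{\rulename_1} \tm_1$, performing a case analysis on which rule is applied at the root, and then on the possible shapes of the second reduction step $\tm \tov{\rulename_2} \tm_2$. The key structural observation that makes this work is that the root rules $\vruleToBeta$, $\vruleToIfZero$, $\vruleToIfSucc$, $\vruleToFix$ fire on very specific redex shapes, and the congruence rules $\vruleToCongAppL$, $\vruleToCongAppR$, $\vruleToCongSucc$, $\vruleToCongIf$ only propagate reduction into immediate subterms. So for a fixed term $\tm$, two distinct one-step reducts can only arise from overlaps that are quite constrained.

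First I would handle the cases where $\rulename_1$ is applied by one of the root rules. If $\tm = (\lam{\var}{\tmtwo})\,\val$ and $\rulename_1 = \bBeta$, then since $\val$ is a value it is $\tovsil$-irreducible (this requires the small fact, provable by a trivial induction on the value grammar using the characterization of normal forms, that every value is irreducible — values are either abstractions, which are not reduced because evaluation is weak, or natural values $\valnat$, which are in $\NFv{\natsym}$), so the only other reduct comes from reducing the function part $\lam{\var}{\tmtwo}$, which is again impossible by weakness. Hence $\tm_1 = \tm_2$, contradicting the hypothesis; the same reasoning disposes of $\bIfZero$, $\bIfSucc$ (here the guard $\succ{\valnat}$ is a natural value, hence irreducible), and $\bFix$ (the term $\fix{\var}{\tmtwo}$ has no proper subterm where a congruence rule could act, since reduction is weak and there is no congruence rule for $\mathsf{fix}$). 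So whenever $\rulename_1$ is a root rule and $\tm_1 \ne \tm_2$, the situation cannot occur, and these cases are vacuous.

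Next, the interesting cases: $\rulename_1$ is a congruence rule. Say $\tm = \tm^{\circ}\,\tmtwo^{\circ}$ and the first step is $\vruleToCongAppL$ with $\tm^{\circ} \tov{\rulename_1} \tm^{\circ\prime}$, so $\tm_1 = \tm^{\circ\prime}\,\tmtwo^{\circ}$. For the second step there are three subcases. (a) It is also $\vruleToCongAppL$: then $\tm^{\circ} \tov{\rulename_2} \tm^{\circ\prime\prime}$ with $\tm^{\circ\prime} \ne \tm^{\circ\prime\prime}$, so I apply the induction hypothesis to $\tm^{\circ}$ to get a common reduct $\tm^{\circ\ast}$ and close the diamond with two $\vruleToCongAppL$ steps. (b) It is $\vruleToCongAppR$, reducing $\tmtwo^{\circ} \tov{\rulename_2} \tmtwo^{\circ\prime}$: then the two steps act on disjoint subterms and commute, so $\tm' \defeq \tm^{\circ\prime}\,\tmtwo^{\circ\prime}$ closes the diamond. (c) The second step is a root rule: this forces $\tm^{\circ}$ to be a value (an abstraction for $\bBeta$, since $\tm^{\circ}\,\tmtwo^{\circ}$ must be a $\beta$-redex); but a value is irreducible, contradicting $\tm^{\circ} \tov{\rulename_1} \tm^{\circ\prime}$ — so this subcase is vacuous. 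The remaining congruence-rule cases ($\vruleToCongAppR$, $\vruleToCongSucc$, $\vruleToCongIf$ as the first step) are entirely analogous, using irreducibility of $\succ{\valnat}$ / $\zero$ / $\succ{\valnat}$ to rule out the overlap with a root rule on the guard.

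The main obstacle — really the only subtlety — is bookkeeping the overlap cases cleanly: one must be careful that there is no genuine critical pair, i.e. that a redex for a root rule and a redex reached by a congruence rule never truly overlap. Concretely, one needs the auxiliary lemma that \emph{every value is $\tovsil$-irreducible} (and likewise every natural value), which follows immediately from \cref{propCBV:characterization_NF} and \cref{lemCBV:forms_nf_natures} together with the fact that evaluation is weak; with that in hand, each purported overlap collapses because the "inner" subterm targeted by the root rule (the argument $\val$, or the guard $\valnat$ / $\zero$) is irreducible and the "outer" position (under a $\lambda$, or at a $\mathsf{fix}$) is blocked by weakness. Once these are observed, every case either is vacuous, commutes on disjoint redexes, or reduces to the induction hypothesis, so the proof goes through smoothly.
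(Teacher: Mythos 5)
Your proposal is correct and follows essentially the same argument as the paper's proof: the paper inducts on the term $\tm$ and enumerates pairs of applicable rules, while you induct on the derivation of the first step, but the case analysis is identical — genuine overlaps reduce to the induction hypothesis or commute on disjoint subterms, and all root-rule/congruence-rule overlaps are vacuous because values (and natural values) are irreducible and evaluation is weak. The auxiliary fact you isolate (every value is $\tovsil$-irreducible) is exactly what the paper uses implicitly in those vacuous cases.
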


\begin{proof}
By induction on $\tm$. Details in \cref{app:PCF}.
\end{proof}

Let us take the term of the example above to illustrate this property:
\[
  \xymatrix{
    ((\lam{\var}{\succ{\var}}) \, \zero) \, (\ifz{\zero}{\id}{\vartwo}{\vartwo \, \id}) 
      \arVr{\bBeta}
      \arVd{\bIfZero}
  & \succ{\zero} \, (\ifz{\zero}{\id}{\vartwo}{\vartwo \, \id})
      \arsdVd{\bIfZero}
  \\
    ((\lam{\var}{\succ{\var}}) \, \zero) \, \id
      \arsdVr{\bBeta}
  & \succ{\zero} \, \id
  }
\]
Moreover $\succ{\zero} \, \id \in \NFv{\stucksym}$ by rule $\vruleNFApp$.


\section{A Quantitative Type System for \PCFh}
\label{sec:quantitative_type_system_PCF}

In this section, we introduce system $\typesystem$ (for \emph{hybrid}), a 
non-idempotent intersection type system for \PCFh.
A type derivation for a term in system $\typesystem$ provides upper bounds for 
its normalization sequences.
Moreover, by considering only \emph{tight} typing derivations, we obtain 
\emph{exact} bounds. 
System $\typesystem$ aligns with other formulations of non-idempotent
intersection type systems and satisfies fundamental properties such as Subject 
Reduction (\cref{lemCBV:subject_reduction}) and Subject Expansion 
(\cref{lemCBV:subject_expansion}). 
Remarkably, this system captures the combined essence of the two underlying 
evaluation strategies found in \PCFh, which are \CBV and \CBN.

The remainder of the section unfolds as follows: \cref{sec:typesystem}
defines the set of types and typing rules governing $\typesystem$.
\cref{sec:properties_typesystem} delves into the proof of soundness
and completeness of $\typesystem$ with respect to the evaluation
strategy \PCFh. For this, we show two quantitative results:
\cref{thmCBV:soundness_completeness}
provides upper bounds for normalization sequences
for any kind of type derivations, while
\cref{thmCBV:tight_soundness_completeness} provides exact bounds for normalization
sequences for those type derivations that are tight.

\subsection{The Type System $\typesystem$}
\label{sec:typesystem}

Recall that values in \PCFh are either abstractions or natural values of the 
form $\succ{\succ{\hdots\succ{\zero}}}$.
A value in \PCFh may be \emph{used} zero, one, or many times.
For example, the underlined identity function in $(\lam{f}{f\,(f\,\zero)})\,\underline{\id}$
is used 
twice in a reduction to normal form since it is bound to the variable f that appears twice, 
and the underlined zero in $(\lam{\var}{\ifz{\var}{\zero}{\vartwo}{\vartwo}})\,\underline{\zero}$
is used  
exactly once in a reduction to normal form because it is bound to a single occurrence of $\var$ 
tested for zero equality.

A noteworthy characteristic of non-idempotent intersection type systems 
is their ability to capture the several roles a single expression may play
in different contexts, so in these type systems
\emph{terms do not necessarily have a unique type}.
Indeed, in system $\typesystem$, 
values are given a \emph{multitype}
of the form
$\mtyp = \msetnu{\nature}{\typ_1,\hdots,\typ_n}$,
which consists of a multiset of types $\mset{\typ_1,\hdots,\typ_n}$
decorated with a nature $\nature$,
and
corresponding to the (non-idempotent) intersection $\typ_1\cap\hdots\cap\typ_n$.
Each $\typ_i$ corresponds to one \emph{use} of the value.
System $\typesystem$ captures the hybrid behavior of \PCFh
by splitting the typing information into two parts,
called the \emph{typing context} and the \emph{family context}.
The typing context contains typing information for
variables bound by \emph{abstractions} and \emph{conditional operators}.
These variables behave like in \CBV, in the sense that they can only be substituted
by \emph{values}, and thus they are assigned a multitype.
The family context contains typing information for
variables bound by \emph{fixed-point operators}.
These variables behave like in \CBN, as they can be substituted
by \emph{unevaluated expressions}.
Since an unevaluated expression may be copied many times
and each copy produces a value, these variables are assigned a
multiset of multitypes $\ftyp = \fset{\mtyp_1,\hdots,\mtyp_n}$,
called a \emph{multitype family}.

We note $\mset{\hdots}$ for the multisets of types and $\fset{\hdots}$ for the multisets of multitypes. 
This is just for visual clarity
and to emphasize the different roles they play in the type system.
However, both notations denote multisets, and both
behave like non-idempotent intersections.

Formally, types of $\typesystem$
are given by the following grammar:
\[
\begin{array}{rrclrcl}
  (\textbf{Types}) &
    \typ & \eqgram & \typabs \mid \typnat
  \\
  (\textbf{$\abssym$-Types}) &
    \typabs & \eqgram & \optmtyp \to \mtyp
  \\
  (\textbf{$\natsym$-Types}) &
    \typnat, \typtwonat, \hdots
            & \eqgram & \zeroTyp
               \mid \succTyp{\mtypnat}
  \\
  (\textbf{Optional Multitypes}) &
    \optmtyp & \eqgram & \none \mid \mtyp
  \\
  (\textbf{Multitypes}) &
    \mtyp, \mtyptwo, \mtypthree, \mtypfour, \hdots
             & \eqgram & \mtypabs
                \mid \mtypnat
  \\
  (\textbf{$\abssym$-Multitypes}) &
    \mtypabs,\mtyptwoabs & \eqgram & \msetnu{\abssym}{\typabs_i}_\iI
  \\
  (\textbf{$\natsym$-Multitypes}) &
    \mtypnat,\mtyptwonat,\mtypthreenat,\hdots
          & \eqgram & \msetnu{\natsym}{\typnat_i}_\iI
  \\
  (\textbf{Multitype Families}) &
    \ftyp & \eqgram & \fset{\mtyp_i}_\iI
\end{array}
\]
A $\nature$-multitype is a multiset of $\nature$-types decorated
with a nature~$\nature$.
\Eg, $\mset{\zeroTyp,\zeroTyp,\succTyp{\emsetnu{\natsym}}}$
is a multiset of $\natsym$-types,
and $\msetnu{\natsym}{\zeroTyp,\zeroTyp,\succTyp{\emsetnu{\natsym}}}$
is a $\natsym$-multitype.
The decoration is important to distinguish $\emsetnu{\natsym}$
from $\emsetnu{\abssym}$,
so that a clear distinction is made between variables
that are bound to natural numbers from those that are bound to abstractions.

Note that a multitype is either a $\natsym$-multitype or an $\abssym$-multitype.
A $\nature$-multitype is said to be of nature $\nature$.
Two multitypes are \defn{compatible} if they are of the same nature.
For example, $\msetnu{\natsym}{\zeroTyp}$
is compatible with $\msetnu{\natsym}{\succTyp{\emsetnu{\natsym}}}$
but not with $\msetnu{\abssym}{\optmtyp\to\mtyp}$.
This notion is extended to optional multitypes
by declaring that $\optmtyp$ and $\optmtyptwo$
are compatible if either of them is $\none$
or if they are compatible multitypes.
The \defn{sum} of compatible multitypes
is defined by
$\msetnu{\abssym}{
   \typabs_1,\hdots,\typabs_n
 }
 +
 \msetnu{\abssym}{
   \typabs_{n+1},\hdots,\typabs_{n+m}
 }
= \msetnu{\abssym}{
    \typabs_1,\hdots,\typabs_{n+m}
  }$
and, similarly,
$\msetnu{\natsym}{
   \typnat_1,\hdots,\typnat_n
 }
 +
 \msetnu{\natsym}{
   \typnat_{n+1},\hdots,\typnat_{n+m}
 }
= \msetnu{\natsym}{
    \typnat_1,\hdots,\typnat_{n+m}
  }$.
This operation is extended to compatible optional multitypes
in the following way: 
$\none+\none = \none$
and $\none+\mtyp = \mtyp$
and $\mtyp+\none = \mtyp$.
Note that $\none$ is the neutral element of the sum.

A \defn{family context}, written $\fctx,\fctxtwo,\hdots$,
is a function mapping variables to multitype families
such that $\fctx(\var) \neq \efset$ for finitely many variables $\var$.
The \defn{sum} of family contexts is defined by
$(\fctx + \fctxtwo)(\var) = \fctx(\var) \oplus \fctxtwo(\var)$,
where $\oplus$ is the sum of multisets of multitypes, defined as
the disjoint union of multisets.
The \defn{domain} of a family context $\fctx$ is defined as
$\dom{\fctx} \eqdef \set{\var \mid \fctx(\var) \neq \efset}$,
and $\emptyctx$ denotes the empty family context,
mapping every variable to $\efset$.

A \defn{typing context}, written $\tctx,\tctxtwo,\hdots$,
is a function mapping variables to optional multitypes
such that $\tctx(\var) \neq \none$ for finitely many variables $\var$.
Two typing contexts $\tctx,\tctxtwo$
are compatible if $\tctx(\var)$ and $\tctxtwo(\var)$
are compatible for every variable $\var$.
The \defn{sum} of compatible typing contexts is defined by
$(\tctx+\tctxtwo)(\var) = \tctx(\var)+\tctxtwo(\var)$.
The \defn{domain} of a typing context $\tctx$ is defined as
$\dom{\tctx} \eqdef \set{\var \mid \tctx(\var) \neq \none}$,
and $\emptyctx$ denotes the empty typing context,
mapping every variable to $\emset$.

A binary relation of \defn{subsumption} $\optmtyp_1 \mleq \mtyp_2$ is defined by
two cases, stating that $\none \mleq \emsetnu{\nature}$ and $\mtyp \mleq \mtyp$
hold.
Subsumption is used to introduce a controlled form of weakening
in the system (see Example \ref{ex:1} below).

A \defn{multi-counter} $\mcount$ is a multiset of rule names, whose cardinality 
is denoted by $\length{\mcount}$.
\defn{Typing judgments} are of the form 
$\judgv[\mcount]{\fctx}{\tctx}{\tm}{\mtyp}$, where $\fctx$ is a family context, 
$\tctx$ is a typing context, $\tm$ is a term and $\mtyp$ is a multitype. 
Moreover, each judgment typing a term $\tm$ is  decorated with a multi-counter
$\mcount$, which traces all the rewriting rules that are used to evaluate the 
term $\tm$ to normal form.
This means in particular that $\length{\mcount}$ reflects the length of 
evaluation sequences to normal form.
The decision to use counters that are multisets of rule names instead of natural 
numbers is to store more precise information. 
Moreover, we use a multiset rather than a 4-uple of distinct counters, as used 
for example in~\cite{KesnerV22}, to avoid unnecessarily inflating the notation. 

As mentioned before, variables can receive two distinct assignments depending 
on whether they occur in the family context or the typing context.
When a variable is assigned a multitype family in the family context,
it will be involved in the evaluation of a fixed-point operator,
meaning that it is intended to be substituted by an unevaluated expression, like in \CBN.
Instead, when a variable is assigned a multitype
in the typing context,
it is intended to be substituted by a value, like in \CBV.
There is an invariant in $\typesystem$ stating that 
family contexts and typing contexts do not share variables,
\ie $\dom{\fctx} \cap \dom{\tctx} = \emptyset$.
When studying properties for $\typesystem$ we
assume implicitly that this invariant holds.

Typing rules for the typing system are the following:
\[
  \indrule{\vruleTypVarOne}{
  }{
    \judgv[\emset]{\emptyctx}{\var : \mtyp}{\var}{\mtyp}
  }
  \indrule{\vruleTypVarTwo}{
  }{
    \judgv[\emset]{\var:\fset{\mtyp}}{\emptyctx}{\var}{\mtyp}
  }
\]
\[
  \indrule{\vruleTypAbs}{
    (\judgv[\mcount_i]{\fctx_i}{\tctx_i,\var:\optmtyp_i}{\tm}{\mtyptwo_i})_\iI
  }{
    \judgv[+_\iI\mcount_i]{+_\iI\fctx_i}{+_\iI\tctx_i}{\lam{\var}{\tm}}{\msetnu{\abssym}{\optmtyp_i \to \mtyptwo_i}_\iI}
  }
\]
\[
  \indrule{\vruleTypApp}{
    \judgv[\mcount_1]{\fctx_1}{\tctx_1}{\tm}{\msetnu{\abssym}{\optmtyp\to\mtyptwo}}
    \HS
    \optmtyp \mleq \mtyp
    \HS
    \judgv[\mcount_2]{\fctx_2}{\tctx_2}{\tmtwo}{\mtyp}
  }{
    \judgv[\mset{\bBeta}+\mcount_1+\mcount_2]{\fctx_1+\fctx_2}{\tctx_1+\tctx_2}{\tm\,\tmtwo}{\mtyptwo}
  }
\]
\[
  \indrule{\vruleTypZero}{
    \emptyPremise
  }{
    \judgv[\emset]{\emptyctx}{\emptyctx}{\zero}{\msetnu{\natsym}{\zeroTyp}_\iI}
  }
  \indrule{\vruleTypSucc}{
    \judgv[\mcount]{\fctx}{\tctx}{\tm}{\mtypnat}
    \HS
    \mtypnat= +_\iI \mtypnat_i
  }{
    \judgv[\mcount]{\fctx}{\tctx}{\succ{\tm}}{\msetnu{\natsym}{\succTyp{\mtypnat_i}}_\iI}
  }
\]
\[
  \indrule{\vruleTypIfZero}{
    \judgv[\mcount_1]{\fctx_1}{\tctx_1}{\tm}{\msetnat{\zeroTyp}}
    \HS
    \judgv[\mcount_2]{\fctx_2}{\tctx_2}{\tmtwo}{\mtyp}
  }{
    \judgv[\mset{\bIfZero} + \mcount_1+\mcount_2]{\fctx_1+\fctx_2}{\tctx_1 + \tctx_2}{\ifz{\tm}{\tmtwo}{\var}{\tmthree}}{\mtyp}
  }
\]
\[
  \indrule{\vruleTypIfSucc}{
    \judgv[\mcount_1]{\fctx_1}{\tctx_1}{\tm}{\msetnat{\succTyp{\mtypnat}}}
    \HS
    \optmtypnat \mleq \mtypnat
    \HS
    \judgv[\mcount_2]{\fctx_2}{\tctx_2,\var:\optmtypnat}{\tmthree}{\mtyp}
  }{
    \judgv[\mset{\bIfSucc} + \mcount_1+\mcount_2]{\fctx_1+\fctx_2}{\tctx_1 + \tctx_2}{\ifz{\tm}{\tmtwo}{\var}{\tmthree}}{\mtyp}
  }
\]
\[
  \indrule{\vruleTypFix}{
    \judgv[\mcount]{\fctx,\var:\fset{\mtyp_i}_\iI}{\tctx}{\tm}{\mtyptwo}
    \HS
    (\judgv[\mcount_i]{\fctx_i}{\tctx_i}{\fix{\var}{\tm}}{\mtyp_i})_\iI
  }{
    \judgv[\mset{\bFix} +\mcount+_\iI\mcount_i]{\fctx+_\iI\fctx_i}{\tctx+_\iI\tctx_i}{\fix{\var}{\tm}}{\mtyptwo}
  }
\]

The set $I$ in the typing rules above can be empty.
For example, when $I = \emptyset$ in rule $\vruleTypAbs$, 
we obtain a judgment of the form
$\judgv[\emset]{\emptyctx}{\emptyctx}{\lam{\var}{\tm}}{\emsetnu{\abssym}}$
with no premises.
As we discussed at the beginning of this subsection,
this result is because the abstraction is not used in the program.
When $I = \emptyset$ in rule $\vruleTypFix$, it means that there are zero uses
of the recursive calls in the program.

Moreover, $I = \emptyset$ in $\vruleTypSucc$
we may encounter a special case that is the following:
\[
  \indrule{\vruleTypSucc}{
    \judgv[\mcount]{\fctx}{\tctx}{\tm}{\emsetnu{\natsym}}
  }{
    \judgv[\mcount]{\fctx}{\tctx}{\succ{\tm}}{\emsetnu{\natsym}}
  }
  \indrule{\vruleTypSucc}{
    \judgv[\mcount]{\fctx}{\tctx}{\tm}{\emsetnu{\natsym}}
  }{
    \judgv[\mcount]{\fctx}{\tctx}{\succ{\tm}}{\msetnat{\succTyp{\emsetnu{\natsym}}}}
  }
\]
Note
that there may be several ways to split a $\natsym$-multitype $\mtypnat$
into a sum $+_\iI \mtypnat_i$ of $\natsym$-multitypes $\mtypnat_i$
in rule $\vruleTypSucc$. Therefore, this rule is non-deterministic
when read from top to bottom.

A \defn{(typing) derivation} is a tree obtained by applying the rules above.
We write $\derivof{\deriv}{\judgv[\mcount]{\fctx}{\tctx}{\tm}{\mtyp}}$
when $\deriv$ is a derivation of the judgment $\judgv[\mcount]{\fctx}{\tctx}{\tm}{\mtyp}$.

As anticipated in the introduction, the hybrid \CBN/\CBV operational nature
of \PCFh
is reflected in the quantitative type system.
This is apparent by comparing the typing rules of system $\typesystem$ with
the rules of quantitative type systems for \CBN/\CBV in the literature. 
To see this formal resemblance more clearly, consider for example
the \CBN quantitative type system $\mathcal{N}$ and
the \CBV quantitative type system $\mathcal{V}$,
both described in~\cite{KesnerV22}.
Rule $\vruleTypVarOne$ of $\typesystem$ types variables involved in \CBV computations,
aligning with rule $\indrulename{var$^{\mathcal{V}}_\texttt{c}$}$ of $\mathcal{V}$.
Conversely, rule $\vruleTypVarTwo$ types variables involved in \CBN computations,
corresponding to rule $\indrulename{var$_\texttt{c}$}$ of $\mathcal{N}$.
The typing rule for abstractions, $\vruleTypAbs$,
coincides with rule $\indrulename{abs$^{\mathcal{V}}_\texttt{c}$}$ of $\mathcal{V}$.
The multitype in the conclusion of rule $\vruleTypAbs$
corresponds to the number of times an abstraction is \emph{used},
as in quantitative type systems for \CBV~\cite{Ehrhard12},
rather than to the number of times the abstraction duplicates
its argument, as in quantitative type systems for \CBN~\cite{Gardner94,deCarvalho07},
reflecting the fact that function applications in \PCFh are evaluated like in \CBV.
Conversely, in rule $\vruleTypFix$
the cardinality of the multitype family $\fset{\mtyp_i}_\iI$
corresponds intuitively to the number of recursive calls,
\ie to the number of times that the fixed-point operator duplicates itself,
which is similar to the typing rule of an application in \CBN.

The goal of providing \PCFh with such system is to characterize normalization
of its evaluation strategy, so, as we briefly mentioned above, the purpose of 
multi-counters $\mcount$ in $\typesystem$ is to provide \emph{upper bounds} for 
the number of evaluation steps to normal form by means of its cardinality.
However, we would also like system $\typesystem$ to go beyond this, by giving 
\emph{exact bounds}, and for that, we make use of special derivations called 
\emph{tight}.
A multitype is \defn{tight} if it is of the form $\emsetnu{\nature}$.
A derivation $\derivof{\deriv}{\judgv[\mcount]{\fctx}{\tctx}{\tm}{\mtyp}}$ is
\defn{tight} if $\fctx$ and $\tctx$ are both empty and if the multitype $\mtyp$ 
is tight.

For example, the following derivation types the term $\lam{\var}{\var \, \zero}$:
\[
  \centering
  \scalebox{0.9}{
    \indrule{\vruleTypAbs}{
      \indrule{\vruleTypApp}{
        \indrule{\vruleTypVarOne}{
          \emptyPremise
        }{
          \judgv[\emset]
            {\emptyctx}{\var : \msetnu{\abssym}{\emsetnu{\natsym} \to \emsetnu{\abssym}}}
            {\var}{\msetnu{\abssym}{\emsetnu{\natsym} \to \emsetnu{\abssym}}}
        }
        \HS
        \indrule{\vruleTypZero}{
          \emptyPremise
        }{
          \judgv[\emset]{\emptyctx}{\emptyctx}{\zero}{\emsetnu{\natsym}}
        }
      }{
        \judgv[\mset{\bBeta}]
          {\emptyctx}{\var : \msetnu{\abssym}{\emsetnu{\abssym} \to \emsetnu{\natsym}}}
          {\var \, \zero}
          {\msetnu{\abssym}{\emsetnu{\abssym}}}
      }
    }{
      \judgv[\mset{\bBeta}]
        {\emptyctx}{\emptyctx}
        {\lam{\var}{\var \, \zero}}
        {\msetnu{\abssym}{\msetnu{\abssym}{\emsetnu{\natsym} \to \emsetnu{\abssym}} \to \emsetnu{\abssym}}}
    }
  }
\]
and the derivation is not tight since the resulting
multitype for the term is not empty.
Moreover, the multi-counter is not an exact bound but rather an upper bound,
given that this term is a normal form. 

On the other hand, we can also build a \emph{tight} derivation for the same term, 
ending in $\judgv[\emset]{\emptyctx}{\emptyctx}{\lam{\var}{\var \, \zero}}{\emsetnu{\abssym}}$
by rule $\vruleTypAbs$ with no premises. Furthermore, the multi-counter here provides
an exact bound.

\paragraph{Controlled weakening}
As mentioned in the introduction,
due to their quantitative nature,
non-idempotent intersection type systems are \emph{resource-aware}.
In consonance with this, system $\typesystem$ is \emph{linear},
hence every single type assumption is used only once.
However, in two specific points a controlled form of \emph{weakening} is needed,
namely in the $\vruleTypApp$ and $\vruleTypIfSucc$ rules,
reflected in the premises that contain the subsumption relation
$\optmtyp \mleq \mtyp$. This allows a variable to be bound to a value which
is then \emph{discarded} without being used.
Observe that the subsumption relation $\mleq$ is used for
controlling the \CBV-like behavior.
For example, in a term like $(\lam{\var}{\zero})\,\tm$, 
the argument $\tm$ must be fully evaluated to a value before
the application can proceed.
Since system $\typesystem$ intends to characterize normalization,
this means that $\tm$ must be typed to ensure that it is normalizing.
If the type of $\tm$ is $\mtyp$ then, in principle, $\var$ should also
be of type $\mtyp$.
But by linearity, $\zero$ is only typable under the empty typing contexts.
Assuming that $\tm$ is a closed term of type $\emsetnu{\natsym}$,
a type derivation can be given as follows:
\label{ex:1}
\[
  \indrule{\vruleTypApp}{
    \indrule{\vruleTypAbs}{
      \indrule{\vruleTypZero}{
        \emptyPremise
      }{
        \judgv[\emset]
          {\emptyctx}{\emptyctx, \var:\none}
          {\zero}{\emsetnu{\natsym}}
      }
    }{
      \judgv[\emset]
        {\emptyctx}{\emptyctx}
        {\lam{\var}{\zero}}{\msetnu{\abssym}{\none \to \emsetnu{\natsym}}}
    }
    \HS
    \none \mleq \emsetnu{\natsym}
    \HS
    \begin{array}{c}
      \vdots
    \\
      \judgv[\mcount]
        {\emptyctx}{\emptyctx}
        {\tm}{\emsetnu{\natsym}}
    \end{array}
  }{
    \judgv[\mset{\bBeta}+\mcount]
      {\emptyctx}{\emptyctx}
      {(\lam{\var}{\zero}) \, \tm}{\emsetnu{\natsym}}
  }
\]
Note that $\none \mleq \emsetnu{\natsym}$ implies the hypothesis $\var : \none$
appears on the typing judgment of $\zero$,
while it forces the argument $\tm$ on the right premise to be typable.
Also, recall that $\emptyctx, \var:\none$ is equal to $\emptyctx$.

Let us now give an example of a type derivation for fixed-point operators, so
let $\tm \defeq (\fix{\fun}{\lam{n}{\ifz{n}{\zero}{m}{\succ{\succ{\fun \, m}}}}}) \, \succ{\zero}$,
where the left subterm is a function returning the double of any natural number:
\eg $\tm$ computes the double of $\succ{\succ{\zero}}$.
Let 
$\mtypnat \defeq \msetnat{\succTyp{\msetnat{\zeroTyp}}}$,
$\mtyptwonat \defeq \msetnat{\zeroTyp}$,
$\mtypabs \defeq \msetabs{\mtypnat \to \emsetnu{\natsym}}$ and
$\mtyptwoabs \defeq \msetabs{\mtyptwonat \to \emsetnu{\natsym}}$.

A tight type derivation for $\tm$ follows: 
\[
  \indrule{\vruleTypApp}{
    \deriv_\text{rec1}
    \HS
    \indrule{\vruleTypSucc}{
      \indrule{\vruleTypZero}{
        \emptyPremise
      }{
        \judgv[\emset]{\emptyctx}{\emptyctx}{\zero}{\mtyptwonat}
      }
    }{
      \judgv[\emset]{\emptyctx}{\emptyctx}{\succ{\zero}}{\mtypnat}
    }
  }{
    \judgv[\mset{\bBeta, \bFix, \bIfSucc, \bBeta, \bFix, \bIfZero}]
      {\emptyset}{\emptyset}
      {(\fix{\fun}{\lam{n}{\ifz{n}{\zero}{m}{\succ{\succ{\fun \, m}}}}}) \, \succ{\zero}}
      {\emsetnu{\natsym}}  
  }
\]
Note that the argument $\succ{\zero}$ is typed with 
the singleton $\natsym$-multitype $\mtypnat$. 
This reflects the single use of $\succ{\zero}$, 
as it corresponds to a single occurrence of $n$ tested for zero equality.
Moreover, its subterm $\zero$ is assigned the singleton $\natsym$-multitype $\mtyptwonat$,
since it is bound to a single occurrence of $n$ which is also tested for zero equality
in the base case of the recursive function.

The typing derivation for the left subterm (the fixed-point operator) follows:
\[
  \centering
  \deriv_\text{rec1} \defeq \left(
    \scalebox{0.87}{
    \indrule{\vruleTypFix}{
      \indrule{\vruleTypAbs}{
        \indrule{\vruleTypIfSucc}{
          \indrule{\vruleTypVarOne}{
          }{
            \judgv[\emset]
              {\emptyctx}{n : \mtypnat}
              {n}{\mtypnat}
          }
          \HS
          \deriv_\text{cond}
        }{
          \judgv[\mset{\bIfSucc, \bBeta}]
            {\fun : \fset{\mtyptwoabs}}{n : \mtypnat}
            {\ifz{n}{\zero}{m}{\succ{\succ{\fun \, m}}}}
            {\emsetnu{\natsym}}
        }
      }{
        \judgv[\mset{\bIfSucc, \bBeta}]
          {\fun : \fset{\mtyptwoabs}}{\emptyctx}
          {\lam{n}{\ifz{n}{\zero}{m}{\succ{\succ{\fun \, m}}}}}
          {\mtypabs}  
      }
      \deriv_\text{rec2}
    }{
      \judgv[\mset{\bFix, \bIfSucc, \bBeta, \bFix, \bIfZero}]
        {\emptyctx}{\emptyctx}
        {\fix{\fun}{\lam{n}{\ifz{n}{\zero}{m}{\succ{\succ{\fun \, m}}}}}}
        {\mtypabs}  
    }
  }
  \right)
\]
on which the function $\lam{n}{\ifz{n}{\zero}{m}{\succ{\succ{\fun \, m}}}}$ 
must be typed
considering that $n$ shall be bound to $\succ{\zero}$,
so that $n$ is of type $\mtypnat$. 
Moreover, $\fun$ is bound to the
result of the recursive call,
typed in $\deriv_\text{rec2}$:
\[
  \deriv_\text{rec2} \defeq \left(
    \scalebox{0.82}{
      \indrule{\vruleTypFix}{
        \indrule{\vruleTypAbs}{
          \indrule{\vruleTypIfZero}{
            \indrule{\vruleTypVarOne}{
              \emptyPremise
            }{
              \judgv[\emset]
                {\emptyctx}{n : \mtyptwonat}
                {n}{\mtyptwonat}
            }
            \indrule{\vruleTypZero}{
              \emptyPremise
            }{
              \judgv[\emset]
                {\emptyctx}{\emptyctx}
                {\zero}
                {\emsetnu{\natsym}}
            }
          }{
            \judgv[\mset{\bIfZero}]
              {\emptyctx}{n : \mtyptwonat}
              {\ifz{n}{\zero}{m}{\succ{\succ{\fun \, m}}}}
              {\emsetnu{\natsym}}
          }
        }{
          \judgv[\mset{\bIfZero}]
            {\emptyctx}{\emptyctx}
            {\lam{n}{\ifz{n}{\zero}{m}{\succ{\succ{\fun \, m}}}}}
            {\mtyptwoabs}
        }
      }{
        \judgv[\mset{\bFix, \bIfZero}]
          {\emptyctx}{\emptyctx}
          {\fix{\fun}{\lam{n}{\ifz{n}{\zero}{m}{\succ{\succ{\fun \, m}}}}}}
          {\mtyptwoabs}
      }
    }
  \right)
\]
This derivation is the one in charge of typing the base case,
hence $\lam{n}{\ifz{n}{\zero}{m}{\succ{\succ{\fun \, m}}}}$ is typed
considering that $n$ is bound to $\zero$,
so that $n$ is of type $\mtyptwonat$. 
Since $\deriv_\text{rec2}$ corresponds to a base case,
there are no right premises in the rule $\vruleTypFix$.

We finish with the derivation of the \emph{else} branch of the conditional:
\[
  \centering
  \deriv_\text{cond} \defeq \left(
    \scalebox{0.79}{
      \indrule{\vruleTypSucc}{
        \indrule{\vruleTypSucc}{
          \indrule{\vruleTypApp}{
            \indrule{\vruleTypVarTwo}{
              \emptyPremise
            }{
              \judgv[\emset]
                {\fun : \fset{\mtyptwoabs}}{\emptyctx}
                {\fun}
                {\mtyptwoabs}
            }
            \indrule{\vruleTypVarOne}{
              \emptyPremise
            }{
              \judgv[\emset]
                {\emptyctx}{m : \mtyptwonat}
                {m}
                {\mtyptwonat}
            }
          }{
            \judgv[\mset{\bBeta}]
              {\fun : \fset{\mtyptwoabs}}{m : \mtyptwonat}
              {\fun \, m}
              {\emsetnu{\natsym}}
          }
        }{
          \judgv[\mset{\bBeta}]
            {\fun : \fset{\mtyptwoabs}}{m : \mtyptwonat}
            {\succ{\fun \, m}}
            {\emsetnu{\natsym}}
        }
      }{
        \judgv[\mset{\bBeta}]
          {\fun : \fset{\mtyptwoabs}}{m : \mtyptwonat}
          {\succ{\succ{\fun \, m}}}
          {\emsetnu{\natsym}}          
      }
    }
  \right)
\]
here, the term $\succ{\succ{\fun \, m}}$ represents
the result of doubling $\succ{\zero}$, which is the normal form of $\tm$.
As the normal form is not ``used'' in any way
(\ie there are no further interactions with the environment),
it is typed with $\emsetnu{\natsym}$.

\subsection{Properties of $\typesystem$}
\label{sec:properties_typesystem}
This subsection
develops the meta-theory of system $\typesystem$
and proves the main results of this work.
We begin by studying basic properties of the typing system, such as
relevance and splitting lemmas.
Then,
we state auxiliary lemmas to prove soundness for system $\typesystem$,
as well as auxiliary lemmas to prove completeness.

A first remark states
that a natural number must be typed with a $\natsym$-multitype:
\begin{remark}
\label{remCBV:valnat_has_multitype_nat}
If $\judgv[\mcount]{\fctx}{\tctx}{\valnat}{\mtyp}$,
then $\mtyp$ is of the form $\mtypnat$.
\end{remark}

Recall that System $\typesystem$ is linear so, in particular, all assumptions are used.
This property is known as \emph{relevance}:
\begin{lemma}[Relevance] 
\label{lemCBV:relevance}
If $\derivof{\deriv}{\judgv[\mcount]{\fctx}{\tctx}{\tm}{\mtyp}}$ 
then $\dom{\fctx} \cup \dom{\tctx} \subseteq \fv{\tm}$.
\end{lemma}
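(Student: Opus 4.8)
The plan is to argue by induction on the typing derivation $\deriv$ and not on the term $\tm$, since in rule \vruleTypFix one premise types the very same term $\fix{\var}{\tm}$, so that only the premises being \emph{strictly smaller subderivations} makes the induction go through. I would proceed by case analysis on the last rule applied, and in each case combine the induction hypothesis with the definition of $\fv{\cdot}$. Two routine auxiliary facts should be recorded first. The first is that the domain of a sum of contexts is the union of the domains: for compatible typing contexts $\dom{\tctx_1+\tctx_2}=\dom{\tctx_1}\cup\dom{\tctx_2}$, and for family contexts $\dom{\fctx_1+\fctx_2}=\dom{\fctx_1}\cup\dom{\fctx_2}$; in both cases the sum is computed pointwise and a pointwise sum lands outside the domain (it equals $\none$, resp.\ $\efset$) exactly when both arguments do, because $\none$ and $\efset$ are neutral and any other sum is again a multitype, resp.\ a nonempty multiset. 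This extends at once to the finite sums $+_\iI$ appearing in the rules. The second fact is just the standard $\alpha$-conversion convention: in every instance of a binding rule we may assume the bound variable does not occur in any context of the derivation.

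For the base cases \vruleTypVarOne and \vruleTypVarTwo the only nonempty context is the singleton $\var:\mtyp$ (resp.\ $\var:\fset{\mtyp}$), whose domain is exactly $\set{\var}=\fv{\var}$; in \vruleTypZero both contexts are empty, matching $\fv{\zero}=\emptyset$. For the inductive rules that bind no variable relevant to the conclusion, namely \vruleTypApp, \vruleTypSucc and \vruleTypIfZero, the contexts in the conclusion are sums of the premises' contexts, so by the first fact their domain is the union of the premises' domains; by the induction hypothesis this is contained in the union of the free-variable sets of the corresponding subterms, which is a subset of $\fv{\tm}$. In particular, for \vruleTypIfZero only $\fv{\tm}\cup\fv{\tmtwo}$ is needed, which is contained in $\fv{\ifz{\tm}{\tmtwo}{\var}{\tmthree}}$ whatever the binder $\var$ does.

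The binding rules \vruleTypAbs, \vruleTypIfSucc and \vruleTypFix require the bound variable to be discarded from the free-variable set, and this is where the bookkeeping is most delicate. In \vruleTypAbs, the induction hypothesis on the $i$-th premise gives $\dom{\fctx_i}\cup\dom{\tctx_i,\var:\optmtyp_i}\subseteq\fv{\tm}$, hence $\dom{\fctx_i}\cup\dom{\tctx_i}\subseteq\fv{\tm}$; since $\var\notin\dom{\tctx_i}$ (the premise context is written $\tctx_i,\var:\optmtyp_i$) and $\var\notin\dom{\fctx_i}$ (by the $\alpha$-convention, or by the domain-disjointness invariant when $\optmtyp_i\neq\none$), this improves to $\dom{\fctx_i}\cup\dom{\tctx_i}\subseteq\fv{\tm}\setminus\set{\var}=\fv{\lam{\var}{\tm}}$, and summing over $\iI$ (with $I=\emptyset$ trivial) closes the case. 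Rule \vruleTypIfSucc is handled identically, discarding the binder $\var$ of the else branch from the premise typing $\tmthree$ and taking the union with the domain of the premise typing $\tm$. The hard part is \vruleTypFix: the induction hypothesis on the premise typing $\tm$ under $\fctx,\var:\fset{\mtyp_i}_\iI$ gives, after removing $\var$ (which is absent from $\fctx$ by the extended-context notation and from $\tctx$ by the $\alpha$-convention), $\dom{\fctx}\cup\dom{\tctx}\subseteq\fv{\tm}\setminus\set{\var}=\fv{\fix{\var}{\tm}}$; the induction hypothesis on each premise typing $\fix{\var}{\tm}$ — legitimate precisely because these are strictly smaller derivations — gives $\dom{\fctx_i}\cup\dom{\tctx_i}\subseteq\fv{\fix{\var}{\tm}}$; and summing all of these inclusions over $\iI$ together with the first one yields $\dom{\fctx+_\iI\fctx_i}\cup\dom{\tctx+_\iI\tctx_i}\subseteq\fv{\fix{\var}{\tm}}$, which is exactly the desired conclusion.
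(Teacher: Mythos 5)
Your proposal is correct and takes exactly the same route as the paper, which proves this lemma by a straightforward induction on the derivation $\deriv$ and leaves the details implicit. Your expansion of those details is accurate, including the two points that actually matter: that the domain of a (pointwise) sum of contexts is the union of the domains, and that in the $\vruleTypFix$ case the induction must be on $\deriv$ rather than on $\tm$ because one premise types the same term $\fix{\var}{\tm}$.
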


\begin{proof}
The proof is straightforward by induction on $\deriv$.
\end{proof}

The next lemma establishes the equivalence between two different conditions 
regarding the typability of a value. 
Specifically, it states that a value is typable with a family context, a typing 
context, and a multi-counter that are empty if and only if it is typed with an 
empty $\nature$-multiset, for any proper nature $\nature$.

\begin{lemma}
\label{lemCBV:typable_values_emptyctx}
Let $\val$ be a value.
Then the following are equivalent:
\begin{enumerate}
\item 
  $\judgv[\mcount]{\fctx}{\tctx}{\val}{\emsetnu{\nature}}$,
  where $\nature = \abssym$ if $\val$ is an abstraction, and
  $\nature = \natsym$ if $\val = \valnat$
\item
  $\fctx = \emptyctx$, $\tctx = \emptyctx$ and $\mcount = \emset$
\end{enumerate}
\end{lemma}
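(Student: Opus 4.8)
The plan is to establish the two implications separately, in each case by a short structural analysis of the value $\val$, relying on the fact that the only typing rules whose conclusion types a value are $\vruleTypAbs$ (for abstractions), $\vruleTypZero$ (for $\zero$) and $\vruleTypSucc$ (for $\succ{\valnat'}$).

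For the direction $(2) \Rightarrow (1)$ I would exhibit the derivation directly, by induction on the structure of $\val$. If $\val = \lam{\var}{\tm}$, then rule $\vruleTypAbs$ instantiated with $I = \emptyset$ has no premises and derives $\judgv[\emset]{\emptyctx}{\emptyctx}{\lam{\var}{\tm}}{\emsetnu{\abssym}}$. If $\val = \zero$, rule $\vruleTypZero$ with $I = \emptyset$ directly derives $\judgv[\emset]{\emptyctx}{\emptyctx}{\zero}{\emsetnu{\natsym}}$. If $\val = \succ{\valnat'}$, the induction hypothesis gives $\judgv[\emset]{\emptyctx}{\emptyctx}{\valnat'}{\emsetnu{\natsym}}$, and rule $\vruleTypSucc$ with $I = \emptyset$ --- so that the split $\mtypnat = +_\iI \mtypnat_i$ is the empty sum $\emsetnu{\natsym}$ --- derives $\judgv[\emset]{\emptyctx}{\emptyctx}{\succ{\valnat'}}{\emsetnu{\natsym}}$.

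For $(1) \Rightarrow (2)$ I would argue by inversion on the last rule of a derivation $\derivof{\deriv}{\judgv[\mcount]{\fctx}{\tctx}{\val}{\emsetnu{\nature}}}$, proceeding again by induction on the structure of $\val$; note that $\nature$ is determined by $\val$, and \cref{remCBV:valnat_has_multitype_nat} confirms that a natural value is typed with a $\natsym$-multitype. If $\val = \lam{\var}{\tm}$, the last rule is $\vruleTypAbs$, whose conclusion carries the multitype $\msetnu{\abssym}{\optmtyp_i \to \mtyptwo_i}_\iI$; this equals $\emsetnu{\abssym}$ only when $I = \emptyset$, in which case the $I$-indexed sums $+_\iI \mcount_i$, $+_\iI \fctx_i$ and $+_\iI \tctx_i$ are all the respective neutral elements, so $(2)$ holds. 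If $\val = \zero$, the last rule is $\vruleTypZero$, whose conclusion already has empty contexts and empty multi-counter. If $\val = \succ{\valnat'}$, the last rule is $\vruleTypSucc$, and $\emsetnu{\natsym} = \msetnu{\natsym}{\succTyp{\mtypnat_i}}_\iI$ forces $I = \emptyset$ once more, so its premise reads $\judgv[\mcount]{\fctx}{\tctx}{\valnat'}{\emsetnu{\natsym}}$; the induction hypothesis applied to $\valnat'$ yields $\fctx = \emptyctx$, $\tctx = \emptyctx$ and $\mcount = \emset$.

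I do not anticipate any real difficulty: the argument is essentially bookkeeping. The two points that deserve a little care are the observation that an empty index set $I$ collapses every $I$-indexed sum of multi-counters, family contexts and typing contexts to its neutral element, and setting up the induction on the structure of the value so that the successor case can legitimately invoke the hypothesis on its (again natural-valued) predecessor.
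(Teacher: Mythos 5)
Your proof is correct and follows essentially the same route as the paper: induction on the structure of the value, with the abstraction and zero cases handled by direct inspection of rules \vruleTypAbs{} and \vruleTypZero{} (instantiated with $I=\emptyset$), and the successor case by inversion on \vruleTypSucc{} plus the induction hypothesis. The paper merely states this in one line; your version spells out the same bookkeeping, including the key observation that an empty index set collapses all the $I$-indexed sums to their neutral elements.
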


\begin{proof}
We prove both items by induction on $\val$.
Cases $\val = \lam{\var}{\tm}$ and $\val = \zero$ are trivial.
Case $\val = \succ{\valnattwo}$ is straightforward using the \ih.
\qedhere

\end{proof}

The following property establishes a form of stability of 
the subsumption relation concerning the sum operator.
\begin{restatable}[Multitype Splitting Lemma]{lemma}{multitypesplitting}
\label{lemCBV:multitype_splitting}
The following hold:
\begin{enumerate}
\item
  If $\none \mleq \mtyp$
  then $\mtyp$ is of the form $\emsetnu{\nature}$ for some proper nature $\nature$.
\item
  $\optmtyp_1 + \optmtyp_2 \mleq \mtyp$
  if and only if there exist multitypes $\mtyp_1,\mtyp_2$
  such that $\mtyp = \mtyp_1 + \mtyp_2$ and
  $\optmtyp_i \mleq \mtyp_i$ for all $i \in \set{1,2}$.
\item
  $+_\iI \optmtyp_i \mleq \mtyp$
  if and only if 
  either $I$ is empty and $\mtyp = \emsetnu{\nature}$ for some proper nature $\nature$,
  or
  $I$ is non-empty and there exist multitypes $(\mtyp_i)_\iI$
  such that $\mtyp = +_\iI \mtyp_i$ and
  $\optmtyp_i \mleq \mtyp_i$ for all $\iI$.
\end{enumerate}
\end{restatable}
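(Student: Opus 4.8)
The plan is to establish the three items in sequence, with item~1 falling straight out of the definition of $\mleq$, item~2 by a short case analysis, and item~3 by an induction on $\length{I}$ that reduces to item~2 at each step; thus essentially all the content lies in item~2. For item~1: a derivation of $\none \mleq \mtyp$ must use the clause $\none \mleq \emsetnu{\nature}$, because the only other clause of $\mleq$ has the shape $\mtyp' \mleq \mtyp'$ and never mentions $\none$ on the left (it is not a multitype). Hence $\mtyp = \emsetnu{\nature}$ for some proper nature $\nature$.

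For item~2, I would argue both directions by distinguishing which of $\optmtyp_1$, $\optmtyp_2$ is $\none$ and which clause of $\mleq$ applies. Right to left: assume $\mtyp = \mtyp_1+\mtyp_2$ and $\optmtyp_i \mleq \mtyp_i$. If both $\optmtyp_i$ are $\none$, item~1 gives $\mtyp_i = \emsetnu{\nature_i}$; well-definedness of $\mtyp_1+\mtyp_2$ forces $\nature_1 = \nature_2$, and then $\optmtyp_1+\optmtyp_2 = \none \mleq \emsetnu{\nature_1} = \mtyp$. If exactly one, say $\optmtyp_1$, is $\none$, then $\mtyp_1 = \emsetnu{\nature}$ by item~1 and is absorbed in the sum by compatibility, so $\optmtyp_1+\optmtyp_2 = \mtyp_2 = \mtyp$ and $\mtyp \mleq \mtyp$. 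If neither is $\none$, then $\optmtyp_1+\optmtyp_2 = \mtyp_1+\mtyp_2 = \mtyp$. Left to right: assume $\optmtyp_1+\optmtyp_2 \mleq \mtyp$. If $\optmtyp_1+\optmtyp_2 = \none$, then both summands are $\none$, $\mtyp = \emsetnu{\nature}$ by item~1, and $\mtyp_1 = \mtyp_2 = \emsetnu{\nature}$ does the job. Otherwise $\optmtyp_1+\optmtyp_2$ is a genuine multitype equal to $\mtyp$; then I take $\mtyp_i := \optmtyp_i$ when $\optmtyp_i$ is a multitype and $\mtyp_i := \emsetnu{\nature}$, with $\nature$ the nature of $\mtyp$, when $\optmtyp_i = \none$, and I check directly that $\mtyp_1+\mtyp_2 = \mtyp$ and $\optmtyp_i \mleq \mtyp_i$.

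For item~3, I would induct on $\length{I}$. If $I = \emptyset$, then $+_\iI\optmtyp_i = \none$ (the empty sum equals the neutral element $\none$) and the statement is exactly item~1. If $I \neq \emptyset$, fix $i_0 \in I$ and let $J := I \setminus \set{i_0}$, so $+_\iI\optmtyp_i = \optmtyp_{i_0} + \bigl(+_\jJ\optmtyp_j\bigr)$; item~2 then yields multitypes $\mtyp_{i_0}, \mtyptwo$ with $\mtyp = \mtyp_{i_0} + \mtyptwo$, $\optmtyp_{i_0} \mleq \mtyp_{i_0}$ and $+_\jJ\optmtyp_j \mleq \mtyptwo$, and the induction hypothesis applied to the latter completes the decomposition --- in the sub-case $J = \emptyset$ one has $\mtyptwo = \emsetnu{\nature}$, which is absorbed into $\mtyp_{i_0}$, so $\mtyp = \mtyp_{i_0}$ and the singleton family suffices, while in the sub-case $J \neq \emptyset$ one simply appends $\mtyp_{i_0}$ to the family obtained from the induction hypothesis. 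The converse direction is obtained by running the same induction in reverse, using the right-to-left halves of items~1 and~2.

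I expect the only delicate point to be the bookkeeping around $\none$: whenever a decomposition calls for a multitype facing an optional multitype that equals $\none$, one must conjure an empty multitype $\emsetnu{\nature}$ with the correct nature and then check that the resulting sum of multitypes is well-defined (all summands compatible) and still equals $\mtyp$. This is routine once one observes that $\mtyp$ fixes the nature as soon as it is non-empty and that the nature can be chosen freely when $\mtyp$ is empty; no step requires more than careful case analysis.
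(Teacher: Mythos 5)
Your proposal is correct and follows essentially the same route as the paper: item~1 directly from the definition of $\mleq$, item~2 by the same four-way case analysis on which of $\optmtyp_1,\optmtyp_2$ equals $\none$ (conjuring $\emsetnu{\nature}$ of the appropriate nature where needed), and item~3 by induction on the cardinality of $I$ reducing to items~1 and~2. No gaps.
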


The \nameref{lemCBV:multitype_splitting} is used
to split and merge the type derivation of values:

\begin{lemma}[Generalized Value Splitting / Merging]
\label{lemCBV:generalized_value_splitting_merging}
Let $I$ be a finite set, $(\optmtyp_i)_\iI$ a family of optional multitypes 
and $(\ftyp_i)_\iI$ a family of multitype families.
Then the following are equivalent:
\begin{enumerate}
\item
  $\derivof{\deriv}{\judgv[\mcount]{\fctx}{\tctx}{\val}{\mtyp}}$
  with $\mtyp$ a multitype such that $+_\iI \optmtyp_i \mleq \mtyp$
\item
  There exist family contexts $(\fctx_i)_\iI$, typing contexts $(\tctx_i)_\iI$,
  multi-counters $(\mcount_i)_\iI$ and multitypes $(\mtyp_i)_\iI$
  such that $\fctx = +_\iI \fctx_i$ and $\tctx = +_\iI \tctx_i$ and
  $\mcount = +_\iI \mcount_i$ and
  $\derivof{\deriv_i}{\judgv{\fctx_i}{\tctx_i}{\val}{\mtyp_i}}$ and
  $\optmtyp_i \mleq \mtyp_i$ for all $\iI$.
\end{enumerate}
\end{lemma}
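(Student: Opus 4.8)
The plan is to prove the two directions by induction on the value $\val$, using \Cref{lemCBV:multitype_splitting} (the Multitype Splitting Lemma) to handle the interaction between $+$ and $\mleq$, and \Cref{remCBV:valnat_has_multitype_nat} together with \Cref{lemCBV:typable_values_emptyctx} to deal with natural-number values. Since $\val$ is a value, it is either an abstraction $\lam{\var}{\tm}$ or a natural value $\valnat$, and for $\valnat$ we further distinguish $\zero$ and $\succ{\valnattwo}$. In both directions the empty-index case ($I = \emptyset$) is the delicate one, because then $+_\iI \optmtyp_i = \none$ (resp.\ $+_\iI \mtyp_i$ is an empty sum) and we must produce or consume a \emph{tight} multitype; here \Cref{lemCBV:multitype_splitting}(1) and (3) and \Cref{lemCBV:typable_values_emptyctx} supply exactly what is needed, namely that $\mtyp = \emsetnu{\nature}$ forces $\fctx,\tctx,\mcount$ all empty (and conversely).

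First I would treat the direction $(1) \Rightarrow (2)$. Given $\derivof{\deriv}{\judgv[\mcount]{\fctx}{\tctx}{\val}{\mtyp}}$ with $+_\iI \optmtyp_i \mleq \mtyp$, apply \Cref{lemCBV:multitype_splitting}(3): if $I = \emptyset$, then $\mtyp = \emsetnu{\nature}$ for a proper nature $\nature$, so by \Cref{remCBV:valnat_has_multitype_nat} (when $\val = \valnat$) and by inspection of the typing rules this $\nature$ is the nature of $\val$, and then \Cref{lemCBV:typable_values_emptyctx} gives $\fctx = \tctx = \emptyctx$ and $\mcount = \emset$; we take the empty family of derivations and the conditions hold vacuously. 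If $I \neq \emptyset$, \Cref{lemCBV:multitype_splitting}(3) yields multitypes $(\mtyp_i)_\iI$ with $\mtyp = +_\iI \mtyp_i$ and $\optmtyp_i \mleq \mtyp_i$; it then remains to split the derivation $\deriv$ into derivations $\deriv_i$ typing $\val$ with $\mtyp_i$ and with contexts/counters summing to $\fctx,\tctx,\mcount$. For $\val = \lam{\var}{\tm}$ this is immediate by inspecting rule $\vruleTypAbs$, whose premises are already indexed by the elements making up $\mtyp$, so we regroup them along the partition induced by the $\mtyp_i$. For $\val = \zero$ the only rule is $\vruleTypZero$, whose multitype $\msetnu{\natsym}{\zeroTyp}_\iI$ is again an indexed family, so we split it directly, each $\deriv_i$ being an instance of $\vruleTypZero$ with empty contexts and counter. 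For $\val = \succ{\valnattwo}$, by \Cref{remCBV:valnat_has_multitype_nat} the last rule is $\vruleTypSucc$ with premise $\judgv[\mcount]{\fctx}{\tctx}{\valnattwo}{\mtypnat}$ and $\mtyp = \msetnu{\natsym}{\succTyp{\mtypnat_j}}_\jJ$ with $\mtypnat = +_\jJ \mtypnat_j$; we re-split $\mtypnat$ according to the desired partition of $\mtyp$, apply the \ih\ to $\valnattwo$ to obtain the component derivations for $\valnattwo$, and re-apply $\vruleTypSucc$ to each.

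For the converse $(2) \Rightarrow (1)$, given the families $(\deriv_i)_\iI$ with $\fctx = +_\iI \fctx_i$, $\tctx = +_\iI \tctx_i$, $\mcount = +_\iI \mcount_i$ and $\optmtyp_i \mleq \mtyp_i$, I would merge them. By \Cref{lemCBV:multitype_splitting}(3) applied in the direction that builds up the subsumption, $+_\iI \optmtyp_i \mleq +_\iI \mtyp_i =: \mtyp$ (treating the case $I = \emptyset$ via part (1)/(3): then each $\val$-derivation forces empty contexts and counters by \Cref{lemCBV:typable_values_emptyctx}, and we produce a fresh tight derivation of $\judgv[\emset]{\emptyctx}{\emptyctx}{\val}{\emsetnu{\nature}}$, so $+_\iI \optmtyp_i = \none \mleq \emsetnu{\nature}$). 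For $I \neq \emptyset$ and $\val = \lam{\var}{\tm}$ we collect all the abstraction premises across the $\deriv_i$ into a single instance of $\vruleTypAbs$; for $\val = \zero$ we collect the indices into one instance of $\vruleTypZero$; for $\val = \succ{\valnattwo}$ we apply the \ih\ to merge the $\valnattwo$-subderivations and then apply $\vruleTypSucc$ with the summed $\natsym$-multitype. The main obstacle I anticipate is purely bookkeeping: managing the two-level indexing (an outer index $i \in I$ and, inside each $\deriv_i$, an inner index set coming from $\vruleTypAbs$/$\vruleTypZero$/$\vruleTypSucc$) and verifying that the flattened family and the summed contexts/counters match, together with checking that the invariant $\dom{\fctx} \cap \dom{\tctx} = \emptyset$ is preserved by the splitting and merging. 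The genuinely substantive points are already isolated in \Cref{lemCBV:multitype_splitting} and \Cref{lemCBV:typable_values_emptyctx}, so the induction itself should be routine once the indexing is set up carefully.
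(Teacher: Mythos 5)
Your proposal is correct, but it is organized differently from the paper. The paper proves this generalized lemma as an almost immediate consequence of a \emph{binary} Value Splitting / Merging Lemma (\cref{lemCBV:value_splitting_merging}), by induction on the cardinality of $I$; all the real work (case analysis on the last rule $\vruleTypAbs$/$\vruleTypZero$/$\vruleTypSucc$, regrouping the indexed premises of $\vruleTypAbs$, invoking the IH only in the $\vruleTypSucc$ case, and using \cref{lemCBV:multitype_splitting} and \cref{lemCBV:typable_values_emptyctx} for the empty/subsumption bookkeeping) lives in the proof of that binary lemma. You instead prove the $n$-ary statement directly by induction on $\val$, applying \cref{lemCBV:multitype_splitting}(3) in its $n$-ary form; the content of your case analysis is essentially the same as the paper's binary proof, just lifted to an arbitrary index set in one pass. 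Your route saves one lemma and one outer induction at the cost of heavier two-level indexing inside each case (the outer $i \in I$ against the inner index sets of $\vruleTypAbs$/$\vruleTypZero$/$\vruleTypSucc$), which is exactly the bookkeeping burden you flag; the paper's factoring keeps each induction simple but requires stating and proving the binary lemma separately. Both treatments of the delicate $I=\emptyset$ case (forcing $\mtyp=\emsetnu{\nature}$ and empty contexts/counters via \cref{lemCBV:typable_values_emptyctx}, and matching $\nature$ to the nature of $\val$ via \cref{remCBV:valnat_has_multitype_nat}) coincide, so I see no gap.
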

\begin{proof}
The proof is straightforward by induction on the cardinality of $I$, 
using \cref{lemCBV:value_splitting_merging}.
\end{proof}

\subsubsection{Lemmas for Soundness}

To show that the type system $\typesystem$ is sound, 
we follow well-known techniques for non-idempotent types (see~\cite{BucciarelliKV17,AccattoliGK20}).
First, a Subject Reduction lemma is established, which is in turn based on two substitution lemmas:
one for each kind of substitution that \PCFh has.

\begin{restatable}[Value Substitution Lemma]{lemma}{valuesubstitution}
\label{lemCBV:value_substitution}
Let $\derivof{\derivtwo}{\judgv[\mcounttwo]{\fctxtwo}{\tctxtwo}{\val}{\mtyptwo}}$ and
let $\optmtyptwo$ be such that $\optmtyptwo \mleq \mtyptwo$.
If $\derivof{\deriv}{\judgv[\mcount]{\fctx}{\tctx,\var:\optmtyptwo}{\tm}{\mtyp}}$
then there exists a derivation $\deriv'$ such that
$\derivof{\deriv'}{\judgv[\mcount+\mcounttwo]{\fctx+\fctxtwo}{\tctx+\tctxtwo}{\tm\sub{\var}{\val}}{\mtyp}}$.
\end{restatable}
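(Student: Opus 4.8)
The plan is to argue by induction on the derivation $\deriv$, proceeding by case analysis on its last rule. In every case the key move is the same: track how the optional multitype $\optmtyptwo$ assigned to $\var$ in the conclusion is distributed among the premises, distribute the derivation $\derivtwo$ of $\val$ accordingly, apply the induction hypothesis premise by premise, and then re-apply the original rule. Throughout we work up to $\alpha$-conversion, choosing the variable bound by an abstraction, a conditional, or a fixed-point distinct from $\var$ and not free in $\val$, so that $(\cdot)\sub{\var}{\val}$ commutes with the binder; by \cref{lemCBV:relevance} the free variables of $\val$ never collide with such binders, which also guarantees that a family context declaration for a fixed-point variable is preserved under the substitution.

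The leaf cases are immediate. If $\tm = \var$, the derivation ends necessarily in $\vruleTypVarOne$ (since $\var$ is declared in the typing context), forcing $\fctx = \tctx = \emptyctx$, $\mcount = \emset$ and $\optmtyptwo = \mtyp$; as $\optmtyptwo$ is then a genuine multitype, $\optmtyptwo \mleq \mtyptwo$ can only hold by $\mtyp \mleq \mtyp$, so $\mtyp = \mtyptwo$, and $\tm\sub{\var}{\val} = \val$ is already typed by $\derivtwo$, the contexts and counters matching because $\emptyctx$ and $\emset$ are neutral for the sums. If $\tm$ is a variable $\vartwo \neq \var$ or the constant $\zero$, then $\var$ is not declared with a proper multitype, so $\optmtyptwo = \none$; \cref{lemCBV:multitype_splitting}(1) gives $\mtyptwo = \emsetnu{\nature}$, and \cref{lemCBV:typable_values_emptyctx} gives $\fctxtwo = \tctxtwo = \emptyctx$ and $\mcounttwo = \emset$, so, since $\tm\sub{\var}{\val} = \tm$, the derivation $\deriv$ itself is the required $\deriv'$.

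For the inductive rules $\vruleTypApp$, $\vruleTypAbs$, $\vruleTypIfZero$, $\vruleTypIfSucc$ and $\vruleTypFix$, the typing context of the conclusion is the sum of the typing contexts of the premises (after discharging any variable bound by the rule), say $+_{\iI}\tctx_i$ over the family of premises; hence $\optmtyptwo$ decomposes as $+_{\iI}\optmtyptwo_i$ with $\var:\optmtyptwo_i$ part of the $i$-th premise's context. Feeding the family $(\optmtyptwo_i)_{\iI}$ together with $\derivtwo$ into \cref{lemCBV:generalized_value_splitting_merging} produces derivations $\derivtwo_i$ of $\val$ at multitypes $\mtyptwo_i$ with $\optmtyptwo_i \mleq \mtyptwo_i$ and with $\fctxtwo = +_{\iI}\fctxtwo_i$, $\tctxtwo = +_{\iI}\tctxtwo_i$, $\mcounttwo = +_{\iI}\mcounttwo_i$. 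Applying the induction hypothesis to the $i$-th premise, using $\derivtwo_i$ in the role of $\derivtwo$, and re-applying the last rule yields a derivation of $\tm\sub{\var}{\val}$; the resulting multi-counter is $\mcount + \mcounttwo$ because multiset addition distributes over the sums over $I$, and the rule's own subsumption side conditions (e.g. $\optmtyp \mleq \mtyp$ in $\vruleTypApp$, $\optmtypnat \mleq \mtypnat$ in $\vruleTypIfSucc$) are left untouched, since they concern the formal parameter rather than $\var$. Rule $\vruleTypSucc$ has a single premise whose typing context equals the conclusion's, so one applies the induction hypothesis directly and re-applies $\vruleTypSucc$ with the same splitting of the $\natsym$-multitype; rule $\vruleTypZero$ is as in the leaf case.

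The main obstacle is precisely the interaction between the subsumption relation $\mleq$ and the context sums in the multi-premise rules: one needs the decomposition of $\optmtyptwo \mleq \mtyptwo$ to align, index by index, with how the typing context of the conclusion splits, and to carry along a matching decomposition of the derivation $\derivtwo$ of $\val$. This is exactly what \cref{lemCBV:multitype_splitting} and \cref{lemCBV:generalized_value_splitting_merging} provide, so once those are in hand the remaining work is bookkeeping: keeping the (possibly empty) optional-multitype components and the multi-counters aligned across the split, and invoking $\alpha$-conversion to avoid variable capture when $\val$ is pushed under a binder.
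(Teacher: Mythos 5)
Your proposal matches the paper's proof: the same induction on $\deriv$, the same treatment of the leaf cases (identifying $\mtyp = \mtyptwo$ when $\tm = \var$, and using \cref{lemCBV:multitype_splitting} together with \cref{lemCBV:typable_values_emptyctx} when $\optmtyptwo = \none$), and the same use of \cref{lemCBV:generalized_value_splitting_merging} to distribute $\derivtwo$ across the premises of the multi-premise rules before invoking the induction hypothesis and re-applying the rule. The only cosmetic difference is that the paper invokes the binary splitting lemma (\cref{lemCBV:value_splitting_merging}) for the two-premise rules and the generalized one only for $\vruleTypAbs$ and $\vruleTypFix$, whereas you use the generalized version uniformly.
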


\begin{restatable}[Substitution Lemma]{lemma}{substitution}
\label{lemCBV:substitution}
Let $I$ be a finite set, and
$(\derivof{\derivtwo_i}{\judgv[\mcounttwo_i]{\fctxtwo_i}{\tctxtwo_i}{\tmsix}{\mtyptwo_i}})_\iI$ 
a family of type derivations.
If $\derivof{\deriv}{\judgv[\mcount]{\fctx,\var: \fset{\mtyptwo_i}_\iI}{\tctx}{\tm}{\mtyp}}$
then there exists a derivation $\deriv'$ such that
$\derivof{\deriv'}{
  \judgv[\mcount+_\iI \mcounttwo_i]
    {\fctx+_\iI \fctxtwo_i}{\tctx+_\iI \tctxtwo_i}
    {\tm\sub{\var}{\tmsix}}
    {\mtyp}}$.
\end{restatable}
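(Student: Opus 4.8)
The plan is to prove this by induction on the derivation $\deriv$. The statement concerns substitution of a variable $\var$ bound in the family context (a fixed-point-bound variable, behaving \CBN-style), so the variable $\var$ is assigned a \emph{multitype family} $\fset{\mtyptwo_i}_\iI$, and each component $\mtyptwo_i$ records one value that a copy of $\tmsix$ may produce. The key structural fact driving the argument is the Relevance Lemma (\cref{lemCBV:relevance}): since all assumptions are used, each element of the multitype family $\fset{\mtyptwo_i}_\iI$ corresponds to exactly one occurrence of $\var$ in $\tm$ that is typed (via rule $\vruleTypVarTwo$) with the singleton family $\fset{\mtyptwo_i}$. Thus substituting $\tmsix$ for $\var$ amounts to replacing each leaf $\vruleTypVarTwo$-derivation of $\var$ by the corresponding $\derivtwo_i$, and propagating the extra contexts and counters upward through the derivation tree, accumulating the sums $\fctx +_\iI \fctxtwo_i$, $\tctx +_\iI \tctxtwo_i$ and $\mcount +_\iI \mcounttwo_i$ at the root.

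First I would handle the base cases. If $\tm = \var$, then by Relevance the family $\fset{\mtyptwo_i}_\iI$ must be a singleton $\fset{\mtyptwo}$ (say $I = \set{i_0}$), the derivation $\deriv$ is just an instance of $\vruleTypVarTwo$ with $\mtyp = \mtyptwo$, $\fctx = \emptyctx$, $\tctx = \emptyctx$, $\mcount = \emset$; and $\tm\sub{\var}{\tmsix} = \tmsix$, so $\deriv' := \derivtwo_{i_0}$ works directly. If $\tm = \vartwo \neq \var$ is typed with $\vruleTypVarOne$ or $\vruleTypVarTwo$, or $\tm = \zero$, then by Relevance $\var \notin \fv{\tm}$, so $I$ must be empty, $\tm\sub{\var}{\tmsix} = \tm$, and $\deriv' := \deriv$ works (all the added sums are over the empty index set and hence trivial).

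For the inductive step I would go through each remaining typing rule. In every case the crucial move is to \emph{partition} the multitype family $\fset{\mtyptwo_i}_\iI$ according to which occurrences of $\var$ lie in which premise subterm: in the $\vruleTypApp$ case, $\fctx = \fctx_1 + \fctx_2$ already records the split of the family context between $\tm$ and $\tmtwo$, and one restricts the index family $(\derivtwo_i)_\iI$ accordingly to $I_1 \uplus I_2 = I$; apply the \ih\ to each premise with its sub-family, and reassemble with $\vruleTypApp$, checking that the counters add up ($\mset{\bBeta} + (\mcount_1 +_{I_1}\mcounttwo_i) + (\mcount_2 +_{I_2}\mcounttwo_i) = (\mset{\bBeta}+\mcount_1+\mcount_2) +_\iI \mcounttwo_i$). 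The cases $\vruleTypAbs$, $\vruleTypSucc$, $\vruleTypIfZero$, $\vruleTypIfSucc$ and $\vruleTypFix$ are analogous; note $\var$ cannot clash with the bound variable in $\vruleTypAbs$, $\vruleTypIfSucc$, or $\vruleTypFix$ by $\alpha$-renaming, and in those cases one must be careful that the bound variable lives in the \emph{typing} context (for $\vruleTypAbs$, $\vruleTypIfSucc$) or the \emph{family} context (for $\vruleTypFix$), disjoint from the free variable $\var$ being substituted. For $\vruleTypFix$ there is an extra subtlety: $\var$ may occur both in the body $\tm$ (left premise) and in the recursive-call subterms $\fix{\var'}{\tm}$ (right premises), so the family $\fset{\mtyptwo_i}_\iI$ gets split across the left premise's family context and the right premises' family contexts, and the \ih\ is applied to each.

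The main obstacle I expect is purely bookkeeping: keeping the partition of the index set $I$ coherent with the \emph{already-present} splittings of $\fctx$, $\tctx$ and $\mcount$ in each multi-premise rule, and verifying that the reassembled sums of family contexts, typing contexts and multi-counters match exactly what the rule at the root of $\deriv$ prescribes once decorated with $+_\iI\fctxtwo_i$, $+_\iI\tctxtwo_i$, $+_\iI\mcounttwo_i$. There is no genuinely hard conceptual step — unlike the Value Substitution Lemma, here we do not even need to inspect the shape of $\tmsix$ or invoke any splitting/merging lemma for it, because a fixed-point-bound variable is substituted by an arbitrary expression with no evaluation constraint; each occurrence simply takes its own independent derivation $\derivtwo_i$. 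The only care needed is associativity and commutativity of the context sums (which hold since these are multiset operations) to rearrange $\mcount +_\iI \mcounttwo_i$ into the form demanded by the conclusion.
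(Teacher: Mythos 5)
Your proposal is correct and follows essentially the same route as the paper's proof: induction on $\deriv$, dispatching the variable and $\zero$ leaves via relevance/the disjointness invariant (which force $I$ to be empty or a singleton), and in multi-premise rules partitioning the index set $I$ coherently with the existing split of the family context, applying the induction hypothesis premise-wise, and reassembling with the same rule modulo $\alpha$-renaming of bound variables. Your observation that, unlike the Value Substitution Lemma, no splitting/merging lemma for $\tmsix$ is needed also matches the paper exactly.
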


Both substitution lemmas are proved by induction on $\deriv$.
Details are in \cref{app:quantitative_type_system_PCF}.

Now we move to Subject Reduction, which
gives preservation of types.
Moreover, it also provides quantitative information in the sense
that the multi-counter for typing $\tm'$ is smaller than the
multi-counter for typing $\tm$, by exactly one element.

\begin{restatable}[Subject Reduction]{lemma}{subjectreduction}
\label{lemCBV:subject_reduction}
Let $\tm$ be such that
$\tm \tov{\rulename} \tm'$
and $\derivof{\deriv}{\judgv[\mcount]{\fctx}{\tctx}{\tm}{\mtyp}}$.
Then there exist a derivation $\deriv'$ and a multi-counter $\mcount'$
such that $\mset{\rulename} + \mcount' = \mcount$
and $\derivof{\deriv'}{\judgv[\mcount']{\fctx}{\tctx}{\tm'}{\mtyp}}$.
\end{restatable}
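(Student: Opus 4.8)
The plan is to proceed by induction on the derivation of $\tm \tov{\rulename} \tm'$, which amounts to a case analysis on the last rule used in the reduction. First I would dispose of the four base cases — the rewrite rules $\vruleToBeta$, $\vruleToIfZero$, $\vruleToIfSucc$, $\vruleToFix$ — and then handle the four congruence rules $\vruleToCongAppL$, $\vruleToCongAppR$, $\vruleToCongSucc$, $\vruleToCongIf$ uniformly via the \ih. In every case I would begin by inverting the typing derivation $\deriv$ of $\judgv[\mcount]{\fctx}{\tctx}{\tm}{\mtyp}$, reading off the shape of the premises forced by the syntactic form of $\tm$; the typing rules are syntax-directed enough that this inversion is essentially unique up to the choice of the index set $I$ and the splitting of contexts and multi-counters.

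For the $\vruleToBeta$ case, $\tm = (\lam{\var}{\tmtwo})\,\val \tov{\bBeta} \tmtwo\sub{\var}{\val} = \tm'$. Inverting $\vruleTypApp$ gives $\mset{\bBeta} + \mcount_1 + \mcount_2 = \mcount$, a splitting $\fctx = \fctx_1 + \fctx_2$, $\tctx = \tctx_1 + \tctx_2$, a derivation of $\judgv[\mcount_1]{\fctx_1}{\tctx_1}{\lam{\var}{\tmtwo}}{\msetnu{\abssym}{\optmtyp \to \mtyp}}$, the side condition $\optmtyp \mleq \mtypthree$, and a derivation $\judgv[\mcount_2]{\fctx_2}{\tctx_2}{\val}{\mtypthree}$. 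Inverting $\vruleTypAbs$ on the first (a singleton multitype, so $I$ is a singleton) yields $\judgv[\mcount_1]{\fctx_1}{\tctx_1,\var:\optmtyp}{\tmtwo}{\mtyp}$. Now I apply the \nameref{lemCBV:value_substitution} with this derivation and the derivation of $\val$ (using $\optmtyp \mleq \mtypthree$) to obtain $\judgv[\mcount_1+\mcount_2]{\fctx_1+\fctx_2}{\tctx_1+\tctx_2}{\tmtwo\sub{\var}{\val}}{\mtyp}$, and set $\mcount' := \mcount_1 + \mcount_2$, so that $\mset{\bBeta} + \mcount' = \mcount$ as required. The $\vruleToIfZero$ case is analogous but simpler (invert $\vruleTypIfZero$; the branch $\tmtwo$ is already typed, discard the derivation of $\zero$, which by Lemma~\ref{lemCBV:typable_values_emptyctx} has empty contexts and counter), and $\vruleToIfSucc$ combines inversion of $\vruleTypIfSucc$ with the \nameref{lemCBV:value_substitution} applied to $\valnat$, after peeling off one $\succTyp{-}$ layer via inversion of $\vruleTypSucc$. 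The $\vruleToFix$ case $\fix{\var}{\tm} \tov{\bFix} \tm\sub{\var}{\fix{\var}{\tm}}$ is the one that uses the general \nameref{lemCBV:substitution}: invert $\vruleTypFix$ to get $\judgv[\mcount_0]{\fctx_0,\var:\fset{\mtyp_i}_\iI}{\tctx_0}{\tm}{\mtyptwo}$ together with the family $(\judgv[\mcount_i]{\fctx_i}{\tctx_i}{\fix{\var}{\tm}}{\mtyp_i})_\iI$, apply the substitution lemma to plug these derivations of $\fix{\var}{\tm}$ into the occurrences of $\var$, and set $\mcount' := \mcount_0 +_\iI \mcount_i$, so $\mset{\bFix} + \mcount' = \mcount$.

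For the congruence cases, say $\vruleToCongAppL$ with $\tm\,\tmtwo \tov{\rulename} \tm'\,\tmtwo$ from $\tm \tov{\rulename} \tm'$: invert $\vruleTypApp$, apply the \ih\ to the subderivation of $\tm$ to get a derivation of $\tm'$ with the same contexts and type and a multi-counter smaller by $\mset{\rulename}$, then reassemble with $\vruleTypApp$ and the untouched derivation of $\tmtwo$. The cases $\vruleToCongAppR$, $\vruleToCongSucc$ (here one must additionally match the $+_\iI \mtypnat_i$ splitting — but the \ih\ preserves the type $\mtypnat$ exactly, so the same splitting works), and $\vruleToCongIf$ (two subcases depending on whether $\vruleTypIfZero$ or $\vruleTypIfSucc$ was used, since the guard is being reduced and has not yet become a value) are entirely similar. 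I expect the main obstacle to be purely bookkeeping: correctly tracking the splittings of the family context, typing context, and multi-counter across the inversions and reassemblies, and — in the $\vruleToCongSucc$ and $\vruleToFix$ cases — making sure the chosen decomposition of a $\natsym$-multitype or a multitype family is carried through consistently. There is no deep difficulty once the two substitution lemmas are in hand; the arithmetic of multiset union is associative and commutative, so all the $+$-rearrangements go through mechanically.
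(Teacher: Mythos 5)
Your plan matches the paper's proof essentially step for step: induction on the reduction derivation, inversion of the (syntax-directed) typing rules, the Value Substitution Lemma for \indrulename{r-beta} and \indrulename{r-ifS}, the general Substitution Lemma for \indrulename{r-fix}, and IH plus reassembly for the congruence cases, with the multi-counter bookkeeping exactly as you describe. One minor slip: in the \indrulename{r-if0} case, \cref{lemCBV:typable_values_emptyctx} does not apply (the guard $\zero$ is typed with $\msetnat{\zeroTyp}$, not an empty multitype); the fact that its premise contributes empty contexts and counter follows directly from inversion of \indrulename{t-zero}, which is what the paper uses.
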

\begin{proof}
By induction on the derivation of $\tm \tov{\rulename} \tm'$.
We show only cases $\vruleToBeta$ and $\vruleToFix$ to see how the substitution
lemmas are applied;
the remaining cases are in \cref{app:quantitative_type_system_PCF}.
\begin{itemize}
\item $\vruleToBeta$.
  Then $\tm = (\lam{\var}{\tmtwo}) \, \val \tov{\bBeta} \tmtwo\sub{\var}{\val} = \tm'$.
  The conclusion of $\deriv$ can then only be derived using rule $\vruleTypApp$,
  so $\deriv$ has the form:
  \[
    \indrule{\vruleTypApp}{
      \indrule{\vruleTypAbs}{
        \derivof{\deriv_1}{\judgv[\mcount_1]{\fctx_1}{\tctx_1, \var : \optmtyptwo}{\tmtwo}{\mtyp}}
      }{
        \judgv[\mcount_1]
          {\fctx_1}{\tctx_1}
          {\lam{\var}{\tmtwo}}
          {\msetabs{\optmtyptwo \to \mtyp}}
      }
      (1)\ \optmtyptwo \mleq \mtyptwo
      \HS
      \derivof{\deriv_2}{\judgv[\mcount_2]{\fctx_2}{\tctx_2}{\val}{\mtyptwo}}
    }{
      \judgv[\mset{\bBeta} + \mcount_1 + \mcount_2]
        {\fctx_1 + \fctx_2}{\tctx_1 + \tctx_2}
        {(\lam{\var}{\tmtwo}) \, \val}
        {\mtyp}
    }
  \]
  where $\fctx = \fctx_1 + \fctx_2$ and $\tctx = \tctx_1 + \tctx_2$
  and $\mcount = \mset{\bBeta} + \mcount_1 + \mcount_2$.
  Given (1), we can apply \cref{lemCBV:value_substitution} to $\deriv_1$ with $\deriv_2$,
  yielding
  $\derivof{\deriv'}{
    \judgv[\mcount_1 + \mcount_2]
      {\fctx_1 + \fctx_2}{\tctx_1 + \tctx_2}{\tmtwo\sub{\var}{\val}}{\mtyp}}$,
   and we conclude with $\mcount' = \mcount_1 + \mcount_2$, since 
   $\mset{\bBeta} + \mcount' =\mset{\bBeta} + \mcount_1 + \mcount_2 = \mcount$. 
\item $\vruleTypFix$.
  Then $\tm = \fix{\var}{\tmtwo}\tov{\rulename} \tmtwo\sub{\var}{\fix{\var}{\tmtwo}}= \tm'$.
  The conclusion of $\deriv$ can then only be derived by rule $\vruleTypFix$, so
  $\deriv$ has the form:
  \[
    \indrule{\vruleTypFix}{
      \derivof{\deriv_0}{\judgv[\mcount_0]{\fctx_0,\var:\fset{\mtyptwo_i}_\iI}{\tctx_0}{\tmtwo}{\mtyp}}
      \HS
      (\derivof{\deriv_i}{\judgv[\mcount_i]{\fctx_i}{\tctx_i}{\fix{\var}{\tmtwo}}{\mtyptwo_i}})_\iI
    }{
      \judgv[\mset{\bFix} +\mcount_0 +_\iI \mcount_i]
        {\fctx_0 +_\iI \fctx_i}{\tctx_0 +_\iI\tctx_i}
        {\fix{\var}{\tmtwo}}{\mtyp}
    }
  \]
  where $\fctx = \fctx_0 +_\iI\fctx_i$ and $\tctx = \tctx_0 +_\iI\tctx_i$ and
  $\mcount = \mcount_0 +_\iI\mcount_i$.
  By \Cref{lemCBV:substitution} 
  on $\deriv_0$ with $(\deriv_i)_\iI$ we obtain
  $\derivof{\deriv'}{
    \judgv[\mcount_0 +_\iI\mcount_i]
      {\fctx_0 +_\iI\fctx_i}{\tctx_0 +_\iI\tctx_i}
      {\tmtwo\sub{\var}{\fix{\var}{\tmtwo}}}
      {\mtyp}}$. 
  We let $\mcount' = \mcount_0 +_\iI\mcount_i$
  and we conclude since $\mset{\bFix} + \mcount'=
  \mset{\bFix} + \mcount_0 +_\iI\mcount_i = \mcount$.
  \qedhere
\end{itemize}
\end{proof}

\subsubsection{Lemmas for Completeness}

To show that the type system $\typesystem$ is complete,
the procedure is analogous to the one to prove soundness.
First, a Subject Expansion lemma is required, 
which is in turn based on two anti-substitution lemmas:
one for each kind of substitution that \PCFh has.

\begin{restatable}[Value Anti-Substitution]{lemma}{valueantisubstitution}
\label{lemCBV:value_anti_substitution}
Let $\derivof{\deriv'}{\judgv[\mcount]{\fctx}{\tctx}{\tm\sub{\var}{\val}}{\mtyp}}$.
Then there exist
family contexts $\fctx_1$, $\fctx_2$,
typing contexts $\tctx_1$, $\tctx_2$,
multi-counters $\mcount_1$, $\mcount_2$, 
an optional multitype $\optmtyptwo$ and a multitype $\mtyptwo$ satisfying:
\begin{enumerate}
\item 
  $\derivof{\deriv}{\judgv[\mcount_1]{\fctx_1}{\tctx_1, \var : \optmtyptwo}{\tm}{\mtyp}}$
\item 
  $\derivof{\derivtwo}{\judgv[\mcount_2]{\fctx_2}{\tctx_2}{\val}{\mtyptwo}}$
\item
  $\fctx = \fctx_1 + \fctx_2$ and $\tctx = \tctx_1 + \tctx_2$ and 
  $\mcount = \mcount_1 + \mcount_2$ and $\optmtyptwo \mleq \mtyptwo$
\end{enumerate}
\end{restatable}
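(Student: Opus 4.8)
=== PROOF PROPOSAL ===

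The plan is to prove the \nameref{lemCBV:value_anti_substitution} by induction on the structure of the term $\tm$, mirroring the inductive proof of the \nameref{lemCBV:value_substitution} but running it ``backwards''. The statement is essentially an inversion principle: given a typing derivation for $\tm\sub{\var}{\val}$, we must recover a derivation for $\tm$ (with a fresh assumption $\var : \optmtyptwo$ recording where the copies of $\val$ went) together with a single shared derivation for $\val$ accounting for all its occurrences.

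First I would treat the base case $\tm = \var$. Here $\tm\sub{\var}{\val} = \val$, so $\deriv'$ is already a derivation of $\judgv[\mcount]{\fctx}{\tctx}{\val}{\mtyp}$; I take $\derivtwo \defeq \deriv'$, $\mtyptwo \defeq \mtyp$, $\optmtyptwo \defeq \mtyp$ (using $\mtyp \mleq \mtyp$), and for $\deriv$ I use rule $\vruleTypVarOne$ to type $\var$ under the context $\var : \mtyp$ with empty family context and empty multi-counter, setting $\fctx_1 = \tctx_1 = \emptyctx$, $\mcount_1 = \emset$, $\fctx_2 = \fctx$, $\tctx_2 = \tctx$, $\mcount_2 = \mcount$. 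For a variable $\vartwo \neq \var$, $\tm\sub{\var}{\val} = \vartwo$, and I take $\optmtyptwo = \none$, $\mtyptwo = \emsetnu{\nature}$ for the appropriate nature, $\derivtwo$ a trivial derivation of $\val$ typed with the empty multitype (which forces $\fctx_2 = \tctx_2 = \emptyctx$, $\mcount_2 = \emset$ by \cref{lemCBV:typable_values_emptyctx}), and $\deriv = \deriv'$ with the extra harmless assumption $\var : \none$.

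For the inductive cases ($\tm$ an abstraction, application, successor, conditional, or fixed-point) I would proceed uniformly: since substitution commutes with every term constructor, the last rule of $\deriv'$ is determined by the head constructor of $\tm$. I invert that rule to expose derivations of the immediate subterms with $\var$ substituted, apply the induction hypothesis to each, and then reassemble. The bookkeeping to watch is that each subderivation yields its own split $\fctx = \fctx_1^j + \fctx_2^j$, $\tctx = \tctx_1^j + \tctx_2^j$, $\mcount = \mcount_1^j + \mcount_2^j$ and its own optional multitype $\optmtyptwo_j$ for $\val$; these must be recombined. The key technical tool here is \cref{lemCBV:generalized_value_splitting_merging} (the merging direction), which lets me fuse the family $(\derivtwo_j)$ of derivations of $\val$ — with $+_j \optmtyptwo_j \mleq$ the merged multitype — into a single derivation $\derivtwo$ of $\val$ with $(+_j \optmtyptwo_j) \mleq \mtyptwo$; the new assumption for $\var$ in $\deriv$ is then $\optmtyptwo \defeq \sum_j \optmtyptwo_j$, and the \nameref{lemCBV:multitype_splitting} ensures all sums involving $\none$ and subsumption are well-behaved. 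The rules with internal multiset-splitting ($\vruleTypSucc$) and the rule $\vruleTypFix$, which itself recursively types the fixed-point, require some care, since in the $\vruleTypFix$ case the recursive premises $(\derivof{\deriv_i}{\judgv{\fctx_i}{\tctx_i}{\fix{\var}{\tmtwo}}{\mtyptwo_i}})_\iI$ already mention the original term $\fix{\var}{\tmtwo}$ rather than a substituted one, so the induction hypothesis applies only to the body premise.

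The main obstacle I anticipate is precisely this merging step in the application and fixed-point cases: keeping track of all the independent context/counter decompositions produced by the induction hypotheses and showing they recombine to exactly $\fctx = \fctx_1 + \fctx_2$, $\tctx = \tctx_1 + \tctx_2$, $\mcount = \mcount_1 + \mcount_2$ with $\optmtyptwo \mleq \mtyptwo$, while simultaneously handling the possibility that $\val$ does not occur in some subterm (so $\optmtyptwo_j = \none$ and the associated $\fctx_2^j, \tctx_2^j, \mcount_2^j$ are empty). This is routine once \cref{lemCBV:generalized_value_splitting_merging} and \cref{lemCBV:multitype_splitting} are invoked, but it is where all the linearity bookkeeping lives. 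I expect the full details to go in \cref{app:quantitative_type_system_PCF} alongside the companion substitution lemmas.
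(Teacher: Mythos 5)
Your overall strategy (base cases for $\var$ and $\vartwo\neq\var$, inversion of the last typing rule, applying the induction hypothesis to the premises, and re-merging the resulting derivations of $\val$ via \cref{lemCBV:generalized_value_splitting_merging} and \cref{lemCBV:multitype_splitting}) matches the paper's proof closely. However, there is one genuine gap, and it is exactly in the case you flag as delicate: $\vruleTypFix$. You claim that the recursive premises $(\judgv[\mcount_i]{\fctx_i}{\tctx_i}{\fix{\vartwo}{\tmtwo}}{\mtyptwo_i})_\iI$ ``already mention the original term'' and therefore need no treatment, so that the induction hypothesis is applied only to the body premise. This is a misreading. The given derivation $\deriv'$ types $\tm\sub{\var}{\val} = \fix{\vartwo}{\tmtwo'\sub{\var}{\val}}$ (with $\vartwo\neq\var$, $\vartwo\notin\fv{\val}$ by $\alpha$-conversion), so by the shape of rule $\vruleTypFix$ the recursive premises type $\fix{\vartwo}{\tmtwo'\sub{\var}{\val}} = (\fix{\vartwo}{\tmtwo'})\sub{\var}{\val}$ --- a \emph{substituted} term, indeed the whole substituted term again. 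To reassemble a derivation of $\tm = \fix{\vartwo}{\tmtwo'}$ by $\vruleTypFix$, you need recursive premises typing the \emph{unsubstituted} $\fix{\vartwo}{\tmtwo'}$, so anti-substitution must be applied to those premises as well; if you skip them, the premises and the conclusion of your reassembled rule instance do not match.

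This is also why your choice of induction measure fails: you induct on the structure of $\tm$, but the term appearing in the recursive premises corresponds to $\tm$ itself, not to a proper subterm, so the structural induction hypothesis is not available there. The paper avoids this by inducting on the typing derivation $\deriv'$: the recursive premises $\deriv'_i$ are strictly smaller subderivations (even though they type the same term), so the induction hypothesis applies to each of them, and the resulting families of value derivations are then merged with \cref{lemCBV:generalized_value_splitting_merging} exactly as you describe for the other multi-premise cases. Switching your induction to be on $\deriv'$ and applying the \ih to all premises of $\vruleTypFix$ (body and recursive alike) repairs the argument; the rest of your proposal then goes through as in the paper.
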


\begin{restatable}[Anti-Substitution]{lemma}{antisubstitution}
\label{lemCBV:anti_substitution}
Let $\derivof{\deriv'}{\judgv[\mcount]{\fctx}{\tctx}{\tm\sub{\var}{\tmsix}}{\mtyp}}$.
Then there exist a finite set $I$, 
family contexts $\fctx_0$, $(\fctxtwo_i)_\iI$, 
typing contexts $\tctx_0$, $(\tctxtwo_i)_\iI$,
multi-counters $\mcount_0$, $(\mcounttwo_i)_\iI$ and 
multitypes $(\mtyptwo_i)_\iI$
satisfying:
\begin{enumerate}
\item
  \label{it:anti_substitution_1}
  $\derivof{\deriv}{\judgv[\mcount_0]{\fctx_0, \var : \fset{\mtyptwo_i}_\iI}{\tctx_0}{\tm}{\mtyp}}$
\item
  \label{it:anti_substitution_2}
  $(\derivof{\derivtwo_i}{\judgv[\mcounttwo_i]{\fctxtwo_i}{\tctxtwo_i}{\tmsix}{\mtyptwo_i}})_\iI$
\item
  \label{it:anti_substitution_3}
  $\fctx = \fctx_0 +_\iI \fctxtwo_i$ and $\tctx = \tctx_0 +_\iI \tctxtwo_i$ and
  $\mcount = \mcount_0 +_\iI \mcounttwo_i$
\end{enumerate}
\end{restatable}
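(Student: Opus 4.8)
The plan is to proceed by induction on the structure of the derivation $\deriv'$ of $\judgv[\mcount]{\fctx}{\tctx}{\tm\sub{\var}{\tmsix}}{\mtyp}$, following the term structure of $\tm$ (before substitution) rather than of $\tm\sub{\var}{\tmsix}$, since the statement quantifies over $\tm$ and the shape of $\tm$ controls how $\tmsix$ is propagated into subderivations. The single genuinely interesting case is $\tm = \var$: here $\tm\sub{\var}{\tmsix} = \tmsix$, so $\deriv'$ is directly a derivation of $\judgv[\mcount]{\fctx}{\tctx}{\tmsix}{\mtyp}$, and we take $I = \set{\star}$ a singleton, $\mtyptwo_\star = \mtyp$, $\fctxtwo_\star = \fctx$, $\tctxtwo_\star = \tctx$, $\mcounttwo_\star = \mcount$, while $\deriv$ is the axiom $\judgv[\emset]{\var:\fset{\mtyp}}{\emptyctx}{\var}{\mtyp}$ (rule $\vruleTypVarTwo$) with $\fctx_0 = \emptyctx$, $\tctx_0 = \emptyctx$, $\mcount_0 = \emset$. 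Item~\ref{it:anti_substitution_3} then holds because $\emptyctx + \fctx = \fctx$, $\emptyctx + \tctx = \tctx$, and $\emset + \mcount = \mcount$.

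For the remaining cases, $\tm$ has a top constructor distinct from a variable, so the substitution commutes with that constructor (e.g.\ $(\lam{\vartwo}{\tmtwo})\sub{\var}{\tmsix} = \lam{\vartwo}{\tmtwo\sub{\var}{\tmsix}}$ using $\alpha$-conversion to keep $\vartwo \notin \fv{\tmsix} \cup \set{\var}$, and likewise for application, successor, conditional, and $\fix$). In each case, the last rule of $\deriv'$ is forced by the shape of $\tm\sub{\var}{\tmsix}$, so we invert it, apply the induction hypothesis to each premise (which is a derivation of a subterm of $\tm$ with $\sub{\var}{\tmsix}$ applied), and then reassemble: we collect the finite index sets returned by the IH on the premises into one big disjoint union $I$, sum the corresponding $\tmsix$-subderivations $\derivtwo_i$, sum the residual contexts and multi-counters, and re-apply the same typing rule to the reconstructed premises to obtain $\deriv$. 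The bookkeeping is identical in spirit to the (already proved) Substitution Lemma run backwards, and the family-context sums behave well because $\oplus$ is just multiset union. For the axiom cases where $\tm$ is a constant constructor with no premises, such as $\tm = \zero$, we have $\var \notin \fv{\tm}$, so $\tm\sub{\var}{\tmsix} = \tm$; we take $I = \emptyset$, $\deriv = \deriv'$, and all the $\tmsix$-data empty, and item~\ref{it:anti_substitution_3} is the identity on each component.

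The main obstacle, and the one requiring care rather than cleverness, is the $\vruleTypAbs$, $\vruleTypSucc$, $\vruleTypFix$, and $\vruleTypIfZero/\vruleTypIfSucc$ cases, where a premise is itself an indexed family $(\cdot)_{j \in J}$: applying the IH to each of the $|J|$ premises yields a family of index sets $(I_j)_{j\in J}$, and one must set $I = \biguplus_{j \in J} I_j$ and verify that the reconstructed sums $+_{i\in I}\fctxtwo_i = +_{j\in J}(+_{i\in I_j}\fctxtwo_i)$ match what the re-applied rule demands — a routine but notation-heavy reassociation of disjoint unions. A secondary subtlety is ensuring, via the standard $\alpha$-renaming convention on bound variables, that in the $\vruleTypAbs$, $\vruleTypFix$, and conditional cases the bound variable $\vartwo$ does not clash with $\var$ or with $\fv{\tmsix}$, so that the substitution genuinely commutes with the binder and the bound variable does not appear in the $\tmsix$-subderivations; since terms are taken up to $\alpha$-equivalence this is harmless, but it must be invoked explicitly. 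We conclude that the lemma holds by induction; full details are deferred to \cref{app:quantitative_type_system_PCF}.
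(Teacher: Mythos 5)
Your plan matches the paper's proof: split off the key case $\tm=\var$ (singleton $I$, axiom \vruleTypVarTwo{} for $\deriv$, and $\deriv'$ itself as the single $\tmsix$-derivation), then induct on $\deriv'$, inverting the last rule, applying the IH to each premise, taking the disjoint union of the returned index sets, and reassembling with the same rule, using $\alpha$-conversion (together with Relevance, \cref{lemCBV:relevance}) to keep bound variables out of the $\tmsix$-contexts. The only wording to tighten is that in the \vruleTypFix{} case the recursive premises type $\fix{\vartwo}{\tmtwo}$ itself rather than a proper subterm of $\tm$, so the induction must indeed be on the derivation, as you state, not on $\tm$.
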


Both anti-substitution lemmas are proved by induction on $\deriv$.
Details are in \cref{app:quantitative_type_system_PCF}.

We can proceed now to show Subject Expansion, which is analogous
to Subject Reduction in \cref{lemCBV:subject_reduction}. However, it
goes in the opposite direction of the reduction relation $\tm \tov{\rulename} \tm'$:
given a type derivation of $\tm'$, it proves that $\tm$ has the same type as $\tm'$.

\begin{restatable}[Subject Expansion]{lemma}{subjectexpansion}
\label{lemCBV:subject_expansion}
Let $\tm \tov{\rulename} \tm'$ and 
$\derivof{\deriv'}{\judgv[\mcount']{\fctx}{\tctx}{\tm'}{\mtyp}}$.
Then there exist a derivation $\deriv$ and a multi-counter $\mcount$ such that
$\mcount = \mset{\rulename} + \mcount'$ and
$\judgv[\mcount]{\fctx}{\tctx}{\tm}{\mtyp}$.
\end{restatable}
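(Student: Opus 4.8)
The plan is to argue by induction on the derivation of $\tm \tov{\rulename} \tm'$, in close parallel with the proof of Subject Reduction (\cref{lemCBV:subject_reduction}) but running each substitution step \emph{backwards} via the anti-substitution lemmas. In every case the goal is to produce a derivation $\deriv$ of $\tm$ whose multi-counter is exactly $\mset{\rulename}$ added to the multi-counter $\mcount'$ of the given derivation $\deriv'$ of $\tm'$, leaving both the family context $\fctx$, the typing context $\tctx$, and the final multitype $\mtyp$ unchanged.

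For the four base cases (the reduction axioms) the common shape is: decompose $\deriv'$, then reassemble a derivation of the redex using one extra instance of the matching typing rule. For $\vruleToBeta$, where $\tm = (\lam{\var}{\tmtwo})\,\val$ and $\tm' = \tmtwo\sub{\var}{\val}$, apply \cref{lemCBV:value_anti_substitution} to $\deriv'$; this yields a derivation of $\tmtwo$ with an extra assumption $\var:\optmtyptwo$, a derivation of $\val$ with some $\mtyptwo$ satisfying $\optmtyptwo \mleq \mtyptwo$, and a matching splitting of the contexts and counter. We then apply $\vruleTypAbs$ with a singleton index set followed by $\vruleTypApp$, which inserts exactly one $\bBeta$. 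The cases $\vruleToIfSucc$ (apply \cref{lemCBV:value_anti_substitution}; the extracted value is a natural value, which by \cref{remCBV:valnat_has_multitype_nat} carries a $\natsym$-multitype $\mtypnat$, so we rebuild $\succ{\valnat}$ with $\vruleTypSucc$ using a singleton index set and conclude with $\vruleTypIfSucc$) and $\vruleToFix$ (apply \cref{lemCBV:anti_substitution} to split the derivation of $\tm\sub{\var}{\fix{\var}{\tm}}$ into a derivation of $\tm$ with $\var:\fset{\mtyptwo_i}_\iI$ together with derivations of $\fix{\var}{\tm}$ with the $\mtyptwo_i$, then apply $\vruleTypFix$) are handled analogously. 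The case $\vruleToIfZero$, with $\tm = \ifz{\zero}{\tmtwo}{\var}{\tmthree}$ and $\tm' = \tmtwo$, is simpler still: add as the guard premise a derivation typing $\zero$ with $\msetnat{\zeroTyp}$ — which has empty contexts and empty counter — and apply $\vruleTypIfZero$; no constraint is imposed on the discarded branch $\tmthree$, since $\vruleTypIfZero$ does not type it.

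For the congruence cases $\vruleToCongAppL$, $\vruleToCongAppR$, $\vruleToCongSucc$ and $\vruleToCongIf$ we invert the last rule of $\deriv'$, apply the induction hypothesis to the premise typing the subterm that is reduced, and reassemble with the same typing rule using the same splitting of contexts, counters and — in the $\vruleTypSucc$ case — of the $\natsym$-multitype. Since the induction hypothesis adds one occurrence of $\rulename$ to that premise's counter and the typing rules combine counters additively (with $\vruleTypSucc$ copying the counter unchanged), the resulting counter for $\tm$ is $\mset{\rulename} + \mcount'$. For $\vruleToCongIf$ the derivation of $\tm'$ may end in either $\vruleTypIfZero$ or $\vruleTypIfSucc$, but in both the guard is typed, so the same reasoning applies in either subcase.

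The main obstacle is the $\vruleToFix$ case: \cref{lemCBV:anti_substitution} must deliver a decomposition whose shape exactly matches the premises of $\vruleTypFix$ — one derivation of the body with the fixed-point variable assigned a multitype family $\fset{\mtyptwo_i}_\iI$, plus one derivation of $\fix{\var}{\tm}$ for each $i \in I$, with the family contexts, typing contexts and multi-counters summing to those of $\deriv'$ — so that reassembly is possible and the counter grows by precisely one $\bFix$. Everything else is bookkeeping: associativity and commutativity of $+$ on contexts and multi-counters, neutrality of the empty contexts and counters introduced for constant subterms such as $\zero$, and transporting the subsumptions produced by the anti-substitution lemmas unchanged into the reassembled $\vruleTypApp$ and $\vruleTypIfSucc$ instances.
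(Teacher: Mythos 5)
Your proposal is correct and follows essentially the same route as the paper's proof: induction on the reduction derivation, with the redex cases handled by running \cref{lemCBV:value_anti_substitution} and \cref{lemCBV:anti_substitution} backwards and reassembling via $\vruleTypApp$, $\vruleTypIfZero$, $\vruleTypIfSucc$ (using \cref{remCBV:valnat_has_multitype_nat} to recover the $\natsym$-multitype) and $\vruleTypFix$, and the congruence cases handled by inverting the last typing rule and applying the induction hypothesis to the reduced subterm's premise. No gaps.
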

\begin{proof}
By induction on the derivation of $\tm \tov{\rulename} \tm'$.
Details are in \cref{app:quantitative_type_system_PCF}.
\end{proof}

Furthermore, to prove completeness (\ie normalization implies
typability) it is necessary to show that 
non-stuck normal forms are typable.
As stated in \cref{sec:PCF}, the focus is only on proper normal forms since the
stuck ones do not have any meaning, \ie they represent computation errors that type systems
such as $\typesystem$ do not capture.
\begin{restatable}[Normal Forms are Typable]{lemma}{NFtypable}
\label{lemCBV:NF_typable}
Let $\tm \in \NFv{\nature}$ with $\nature$ a proper nature.
Then $\judgv[\mcount]{\fctx}{\tctx}{\tm}{\mtyp}$
for some $\mcount,\fctx,\tctx,\mtyp$.
\end{restatable}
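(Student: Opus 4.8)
The plan is to exploit the fact that a normal form of a \emph{proper} nature is necessarily a \emph{value}, and that every value is typable in system $\typesystem$; concretely, I expect to produce a derivation of the judgment $\judgv[\emset]{\emptyctx}{\emptyctx}{\tm}{\emsetnu{\nature}}$.

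First I would invoke \cref{lemCBV:forms_nf_natures}: since $\nature$ is proper, it yields one of two cases --- if $\nature = \abssym$ then $\tm$ is of the form $\lam{\var}{\tmtwo}$, and if $\nature = \natsym$ then $\tm$ is of the form $\valnat$. In either case $\tm$ is a value, so the second step is to apply the implication $(2) \Rightarrow (1)$ of \cref{lemCBV:typable_values_emptyctx} instantiated with $\fctx = \emptyctx$, $\tctx = \emptyctx$ and $\mcount = \emset$; this gives a derivation $\derivof{\deriv}{\judgv[\emset]{\emptyctx}{\emptyctx}{\tm}{\emsetnu{\nature}}}$, and the nature decorating the concluding multitype is exactly the $\nature$ indexing $\NFv{\nature}$, so the choices $\mcount = \emset$, $\fctx = \tctx = \emptyctx$, $\mtyp = \emsetnu{\nature}$ witness the claim. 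If one prefers a self-contained argument instead of citing \cref{lemCBV:typable_values_emptyctx}, the same derivation can be assembled directly by case analysis on the shape of $\tm$, using only the ``$I = \emptyset$'' instances of the rules: $\lam{\var}{\tmtwo}$ is typed by \vruleTypAbs with no premises, giving $\emsetnu{\abssym}$; $\zero$ is typed by \vruleTypZero with $I = \emptyset$, giving $\emsetnu{\natsym}$; and $\succ{\valnattwo}$ is typed by \vruleTypSucc with $I = \emptyset$ from a derivation of $\judgv[\emset]{\emptyctx}{\emptyctx}{\valnattwo}{\emsetnu{\natsym}}$ obtained by an inner induction on $\valnat$.

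There is no real obstacle here: all the genuine content is already in \cref{lemCBV:forms_nf_natures} --- the implication ``proper normal form $\Rightarrow$ value'' --- after which typability of a value is immediate. The reason this lemma is nonetheless isolated is that it is the base case of the completeness direction: combined with Subject Expansion (\cref{lemCBV:subject_expansion}), it shows that any term reducing to a proper normal form is typable, and since the multitype $\emsetnu{\nature}$ it produces is tight, it will equally serve as the base case for the tight completeness result.
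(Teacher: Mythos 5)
Your proof is correct, but it takes a different route from the paper's. The paper proves \cref{lemCBV:NF_typable} by a direct induction on the derivation of $\tm \in \NFv{\nature}$ (cases $\vruleNFAbs$, $\vruleNFZero$, $\vruleNFSuccNat$), and in the successor case it applies the induction hypothesis to get \emph{some} typing $\judgv[\mcount]{\fctx}{\tctx}{\tmthree}{\mtyp}$ of the subterm, then uses \cref{remCBV:valnat_has_multitype_nat} to see that $\mtyp$ is an $\natsym$-multitype so that $\vruleTypSucc$ applies; it then proves the tight version \cref{lemCBV:NF_tight_typable} by a second, separate induction. You instead factor everything through two previously established lemmas: \cref{lemCBV:forms_nf_natures} (a normal form of proper nature is a value of the matching shape) and the $(2)\Rightarrow(1)$ direction of \cref{lemCBV:typable_values_emptyctx} (a value admits the judgment $\judgv[\emset]{\emptyctx}{\emptyctx}{\val}{\emsetnu{\nature}}$). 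This yields the tight derivation $\judgv[\emset]{\emptyctx}{\emptyctx}{\tm}{\emsetnu{\nature}}$ directly, which witnesses the existential in the statement and simultaneously subsumes \cref{lemCBV:NF_tight_typable}, so no fresh induction is needed. There is no circularity, since both cited lemmas precede this one. What the paper's version buys in exchange for the extra induction is a slightly more general successor case (any typing of a natural value, not just the tight one, lifts through $\vruleTypSucc$), but for the purposes of the completeness theorems your single unified argument suffices, and your closing remark about its role as the base case of completeness (via \cref{lemCBV:subject_expansion}) matches how the paper uses it.
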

\begin{proof}
See \cref{app:quantitative_type_system_PCF} for details.
\end{proof}

We can even state a stronger result than the previous one,
namely that proper normal forms can be obtained from a \emph{tight} derivation. 
Moreover, the multi-counter of such tight derivations is empty, which
is crucial to obtain exact bounds by tight type derivations:

\begin{lemma}[Normal Forms are Tight Typable]
\label{lemCBV:NF_tight_typable}
Let $\tm \in \NFv{\nature}$ with $\nature$ a proper nature.
Then $\judgv[\emset]{\emptyctx}{\emptyctx}{\tm}{\emsetnu{\nature}}$.
\end{lemma}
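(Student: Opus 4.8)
The plan is to proceed by induction on the structure of $\tm$, using the characterization of proper normal forms provided by \cref{lemCBV:forms_nf_natures} together with the shape of the inference rules defining $\NFv{\nature}$. Since $\nature$ is a proper nature, by \cref{lemCBV:forms_nf_natures} there are only two shapes to consider: either $\nature = \abssym$ and $\tm = \lam{\var}{\tmtwo}$, or $\nature = \natsym$ and $\tm = \valnat$ is a natural value. The goal in each case is not merely to produce \emph{some} derivation, but specifically the derivation concluding $\judgv[\emset]{\emptyctx}{\emptyctx}{\tm}{\emsetnu{\nature}}$, i.e.\ with empty family context, empty typing context, empty multi-counter, and the tight multitype $\emsetnu{\nature}$.

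First, the case $\tm = \lam{\var}{\tmtwo}$: here I would simply apply rule $\vruleTypAbs$ with the index set $I = \emptyset$. This gives, with no premises, exactly the judgment $\judgv[\emset]{\emptyctx}{\emptyctx}{\lam{\var}{\tmtwo}}{\emsetnu{\abssym}}$, since $+_{i\in\emptyset}\mcount_i = \emset$, $+_{i\in\emptyset}\fctx_i = \emptyctx$, $+_{i\in\emptyset}\tctx_i = \emptyctx$, and $\msetnu{\abssym}{\optmtyp_i \to \mtyptwo_i}_{i\in\emptyset} = \emsetnu{\abssym}$. Note that no recursion on $\tmtwo$ is needed at all here, because the abstraction is typed as ``unused''. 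Second, the case $\tm = \valnat$: I would proceed by an inner induction on the structure of $\valnat$. If $\valnat = \zero$, rule $\vruleTypZero$ with $I = \emptyset$ yields $\judgv[\emset]{\emptyctx}{\emptyctx}{\zero}{\emsetnu{\natsym}}$ directly. If $\valnat = \succ{\valnattwo}$, then $\valnattwo$ is again a natural value, hence $\valnattwo \in \NFv{\natsym}$ by \cref{lemCBV:forms_nf_natures}, so by the inner IH we have $\judgv[\emset]{\emptyctx}{\emptyctx}{\valnattwo}{\emsetnu{\natsym}}$; applying rule $\vruleTypSucc$ with $I = \emptyset$ (the special case displayed in the text, where $\mtypnat = \emsetnu{\natsym}$ is split into the empty sum of $\natsym$-multitypes) gives $\judgv[\emset]{\emptyctx}{\emptyctx}{\succ{\valnattwo}}{\emsetnu{\natsym}}$, as required, since the multi-counter, family context and typing context are all carried over unchanged and the conclusion multitype is $\msetnu{\natsym}{\succTyp{\mtypnat_i}}_{i\in\emptyset} = \emsetnu{\natsym}$.

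There is essentially no serious obstacle in this lemma; the main point requiring care is simply recognizing that the ``degenerate'' instances of the rules $\vruleTypAbs$, $\vruleTypZero$, and $\vruleTypSucc$ with $I = \emptyset$ are exactly what produces the empty contexts, empty multi-counter, and tight multitype, rather than trying to build a genuine typing derivation for the subterms. One must also invoke \cref{lemCBV:forms_nf_natures} to argue that the only possible shapes for a proper normal form are $\lam{\var}{\tmtwo}$ (nature $\abssym$) and $\valnat$ (nature $\natsym$), which rules out any of the stuck-nature forms appearing and keeps the case analysis finite. Everything else is immediate from the definitions of sum and the neutrality of $\emset$, $\emptyctx$, and the empty index set in the typing rules.
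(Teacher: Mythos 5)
Your proposal is correct and follows essentially the same route as the paper: the paper's proof is an induction on the derivation of $\tm \in \NFv{\nature}$ (discarding the stuck-nature rules), with exactly the same three cases and the same use of the degenerate $I = \emptyset$ instances of $\vruleTypAbs$, $\vruleTypZero$, and $\vruleTypSucc$. Your phrasing as a structural induction on $\tm$ guided by \cref{lemCBV:forms_nf_natures} is an inessential repackaging of the same argument.
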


\begin{proof}
By induction on the derivation of $\tm \in \NFv{\nature}$.
Recall cases $\vruleNFApp$, $\vruleNFSuccErr$ and $\vruleNFIf$ do not apply since
$\stucksym$ is not a proper nature.
\begin{enumerate}
\item $\vruleNFAbs$.
  Then $\tm = \lam{\var}{\tmtwo} \in \NFv{\abssym}$
  and indeed $\judgv[\emset]{\emptyctx}{\emptyctx}{\lam{\var}{\tmtwo}}{\emsetnu{\abssym}}$
  by $\vruleTypAbs$.
\item $\vruleNFZero$.
  Then $\tm = \zero \in \NFv{\natsym}$
  and indeed $\judgv[\emset]{\emptyctx}{\emptyctx}{\zero}{\emsetnu{\natsym}}$
  by $\vruleTypZero$.
\item $\vruleNFSuccNat$.
  Then $\tm = \succ{\tmtwo} \in \NFv{\natsym}$ where $\tmtwo \in \NFv{\natsym}$.
  By \ih on $\tmtwo \in \NFv{\natsym}$,
  $\judgv[\emset]{\emptyctx}{\emptyctx}{\tmtwo}{\emsetnu{\natsym}}$,
  hence $\judgv[\emset]{\emptyctx}{\emptyctx}{\succ{\tmtwo}}{\emsetnu{\natsym}}$
  by applying $\vruleTypSucc$.
  \qedhere
\end{enumerate}
\end{proof}

\subsubsection{Soundness and Completeness of System $\typesystem$}

We now turn to the main results of this paper,
which are the properties of soundness and completeness of
$\typesystem$.
Both properties give an equivalence between typability and normalization.
The following theorem provides upper bounds for normalization sequences
of a given term,
for which no tightness condition is required:

\begin{theorem}[Soundness and Completeness of System $\typesystem$ with Upper Bounds]
\label{thmCBV:soundness_completeness}
Let $\tm$ be a closed term, and $\nature$ be a proper nature.
The following are equivalent:
\begin{enumerate}
\item
  $\judgv[\mcount]{\fctx}{\tctx}{\tm}{\mtyp}$
\item
  There exists a sequence of steps
  $\tm = \tm_0 \tov{\rulename_1} \tm_1 \hdots \tov{\rulename_n} \tm_n$
  where $\tm_n \in \NFv{\nature}$ and $\length{\mcount} \geq \length{\mset{\rulename_1, \hdots, \rulename_n}}$.
\end{enumerate}
\end{theorem}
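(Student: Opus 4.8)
The plan is to prove the two implications separately, each by a short induction. The direction from typability to normalization ($1 \Rightarrow 2$) rests on \cref{lemCBV:subject_reduction} (Subject Reduction) together with the characterization of normal forms (\cref{propCBV:characterization_NF}), while the direction from normalization to typability ($2 \Rightarrow 1$) rests on \cref{lemCBV:subject_expansion} (Subject Expansion) together with \cref{lemCBV:NF_typable} (Normal Forms are Typable). The crucial quantitative point in both cases is that the \emph{cardinality} of the multi-counter changes by exactly one along each reduction step: \cref{lemCBV:subject_reduction} removes one rule name and \cref{lemCBV:subject_expansion} adds one, so the counting bookkeeping is straightforward once the two lemmas are available.

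For $2 \Rightarrow 1$, I would induct on the length $n$ of the sequence $\tm = \tm_0 \tov{\rulename_1} \hdots \tov{\rulename_n} \tm_n \in \NFv{\nature}$. If $n = 0$, then $\tm = \tm_n \in \NFv{\nature}$ with $\nature$ proper, so \cref{lemCBV:NF_typable} produces a derivation $\judgv[\mcount]{\fctx}{\tctx}{\tm}{\mtyp}$, and $\length{\mcount} \geq 0$ holds trivially. If $n > 0$, the induction hypothesis applied to $\tm_1 \tov{\rulename_2} \hdots \tov{\rulename_n} \tm_n$ gives a derivation of $\tm_1$ with multi-counter $\mcount_1$ such that $\length{\mcount_1} \geq n - 1$; applying \cref{lemCBV:subject_expansion} to the step $\tm \tov{\rulename_1} \tm_1$ yields a derivation of $\tm$ with multi-counter $\mcount = \mset{\rulename_1} + \mcount_1$, whence $\length{\mcount} = 1 + \length{\mcount_1} \geq n$.

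For $1 \Rightarrow 2$, I would induct on $\length{\mcount}$, where $\derivof{\deriv}{\judgv[\mcount]{\fctx}{\tctx}{\tm}{\mtyp}}$. If $\tm$ is $\tovsil$-irreducible, then, $\tm$ being closed, \cref{propCBV:characterization_NF} gives $\tm \in \NFv{\enature}$ for some fallible nature $\enature$; one then checks $\enature \neq \stucksym$, so $\enature$ is proper and the empty sequence ($n = 0$) works, with $\length{\mcount} \geq 0$. If $\tm$ is $\tovsil$-reducible, choose any step $\tm \tov{\rulename_1} \tm_1$; \cref{lemCBV:subject_reduction} yields a derivation of $\tm_1$ with multi-counter $\mcount_1$ and $\mcount = \mset{\rulename_1} + \mcount_1$, so $\length{\mcount_1} = \length{\mcount} - 1 < \length{\mcount}$. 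The induction hypothesis then gives $\tm_1 \tov{\rulename_2} \hdots \tov{\rulename_n} \tm_n \in \NFv{\nature}$ with $\length{\mcount_1} \geq n - 1$; prepending $\tm \tov{\rulename_1} \tm_1$ produces the required sequence, and $\length{\mcount} = 1 + \length{\mcount_1} \geq n$.

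The step I expect to be the main obstacle is the fact used in the base case of $1 \Rightarrow 2$: a \emph{typable} term in normal form cannot be stuck, i.e.\ no term of $\NFv{\stucksym}$ is typable in $\typesystem$. I would isolate this as an auxiliary lemma and prove it by induction on the derivation of $\tm \in \NFv{\stucksym}$. In each of the three generating rules $\vruleNFApp$, $\vruleNFSuccErr$, $\vruleNFIf$, the only typing rule that can conclude the corresponding syntactic shape ($\vruleTypApp$, respectively $\vruleTypSucc$, respectively $\vruleTypIfZero$/$\vruleTypIfSucc$) imposes a \emph{nature} constraint on a subterm — for instance, $\vruleTypApp$ forces the left subterm of an application to receive an $\abssym$-multitype — and by \cref{lemCBV:forms_nf_natures} together with the induction hypothesis this subterm is either a normal form of the wrong proper nature (hence cannot receive that multitype) or is itself stuck (hence untypable), a contradiction in both cases. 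Granting this lemma, the remainder of the theorem is routine multiset arithmetic on the multi-counters.
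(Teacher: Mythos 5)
Your proposal is correct and follows the same overall decomposition as the paper: soundness ($1 \Rightarrow 2$) by induction on $\length{\mcount}$ using \cref{lemCBV:subject_reduction} and \cref{propCBV:characterization_NF}, and completeness ($2 \Rightarrow 1$) by induction on the length of the reduction sequence using \cref{lemCBV:subject_expansion} and \cref{lemCBV:NF_typable}; your multiset bookkeeping on the counters matches the paper's exactly. The one genuine difference is in the base case of soundness. The paper splits on whether $\tm \in \NFv{\nature}$ and, in the negative branch, invokes \cref{propCBV:characterization_NF} to conclude that $\tm$ is reducible; strictly speaking that proposition only says an irreducible closed term lies in $\NFv{\enature}$ for \emph{some} fallible $\enature$, so the paper is silently relying on the fact that a typable closed normal form cannot be stuck. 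You split instead on reducibility and isolate exactly this missing fact as an auxiliary lemma --- no closed term of $\NFv{\stucksym}$ is typable --- with a correct proof sketch by induction on the derivation of $\tm \in \NFv{\stucksym}$, using the syntax-directedness of $\vruleTypApp$, $\vruleTypSucc$ and $\vruleTypIfZero$/$\vruleTypIfSucc$ together with \cref{lemCBV:forms_nf_natures} and \cref{remCBV:valnat_has_multitype_nat}. This buys a fully self-contained argument where the paper leaves an implicit step, at the cost of one extra lemma the paper never states. (If the nature $\nature$ in item 2 is meant to be determined by the nature of $\mtyp$ rather than existentially quantified, you would additionally need the easy observation that the nature of a typable normal form agrees with the nature of its multitype, which falls out of the same case analysis.)
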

\begin{proof}
\,\\
($1 \Rightarrow 2$)
  By induction on the size of $\mcount$,
  analyzing whether $\tm \in \NFv{\nature}$ or not.
  \begin{itemize}
  \item If $\tm \in \NFv{\nature}$, 
    then $\tm$ is of the form $\lam{\var}{\tmtwo}$, $\zero$, or $\succ{\valnat}$,
    since $\nature$ is a proper nature.
    Hence $\tm$ can only be derived by rule $\vruleTypAbs$, $\vruleTypZero$, or $\vruleTypSucc$ respectively.
    In the three cases, we can take the empty evaluation sequence and we are done.
  \item If $\tm \notin \NFv{\nature}$.
    Then by \cref{propCBV:characterization_NF} it must exist a term $\tm'$
    and a rule name $\rulename$ such that $\tm \tov{\rulename} \tm'$.
    By \nameref{lemCBV:subject_reduction} there exists $\mcount'$ such that
    $\mcount = \mset{\rulename} + \mcount'$, and 
    $\judgv[\mcount']{\fctx}{\tctx}{\tm'}{\mtyp}$.
    By \ih on $\mcount'$, there exists a sequence of steps
    $\tm' \tov{\rulename_1} \tm_1 \hdots \tov{\rulename_n} \tm_n$ where
    $\tm_n \in \NFv{\nature}$ and $\length{\mcount'} \geq \length{\mset{\rulename_1, \hdots, \rulename_n}}$.
    By joining this sequence with the step
    $\tm \tov{\rulename} \tm'$ we obtain the sequence
    $\tm \tov{\rulename} \tm' \tov{\rulename_1} \tm_1 \hdots \tov{\rulename_n} \tm_n$,
    where $\mcount = \mset{\rulename} + \mcount'$,
    and therefore $\length{\mcount} \geq \length{\mset{\rulename, \rulename_1, \hdots, \rulename_n}}$.
  \end{itemize}

\noindent
($2 \Rightarrow 1$)
  By induction on $n$.
  \begin{enumerate}
  \item $n = 0$.
    Then $\tm \in \NFv{\nature}$, and we conclude with
    $\judgv[\mcount]{\fctx}{\tctx}{\tm}{\mtyp}$ by \cref{lemCBV:NF_typable}.
  \item $n > 0$, assuming the property holds for $n-1$.
    Taking the reduction sequence of $(n-1)$ steps from $\tm_1$ to $\tm_n$,
    and $\length{\mcount'} \geq \length{\mset{\rulename_2,\hdots,\rulename_n}}$
    we can apply the \ih, yielding
    $\judgv[\mcount']{\fctx}{\tctx}{\tm_1}{\mtyp}$.
    Since $\tm \tov{\rulename_1} \tm_1$, then 
    $\judgv[\mcount]{\fctx}{\tctx}{\tm}{\mtyp}$
    by \nameref{lemCBV:subject_expansion},
    with $\mcount = \mset{\rulename_1} + \mcount'$,
    so we conclude with $\length{\mcount} \geq \length{\mset{\rulename_1, \rulename_2, \hdots, \rulename_n}}$.
    \qedhere
  \end{enumerate}
\end{proof}

Soundness and completeness of system $\typesystem$ restricting type derivations
to tight ones provide exact bounds for normalization sequences of a given term:

\begin{restatable}[Tight Soundness and Completeness of System $\typesystem$]{theorem}{tightsoundnesscompleteness}
\label{thmCBV:tight_soundness_completeness}
Let $\tm$ be a closed term, and $\nature$ be a proper nature.
The following are equivalent:
\begin{enumerate}
\item
  $\judgv[\mcount]{\emptyctx}{\emptyctx}{\tm}{\emsetnu{\nature}}$
  with $\deriv$ a tight derivation
\item
  There exists a sequence of steps
  $\tm = \tm_0 \tov{\rulename_1} \tm_1 \hdots \tov{\rulename_n} \tm_n$
  where $\tm_n \in \NFv{\nature}$ and $\mcount = \mset{\rulename_1, \hdots, \rulename_n}$.
\end{enumerate}
\end{restatable}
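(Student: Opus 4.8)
The plan is to follow closely the argument used for \cref{thmCBV:soundness_completeness}, but upgrading the inequality $\length{\mcount}\geq\length{\mset{\rulename_1,\hdots,\rulename_n}}$ to an equality. Two facts make this possible. First, \cref{lemCBV:subject_reduction} and \cref{lemCBV:subject_expansion} leave the family context, the typing context and the multitype untouched, so they preserve \emph{tightness}. Second, a small auxiliary observation: a tight derivation of a proper normal form necessarily carries the \emph{empty} multi-counter. This observation follows from \cref{lemCBV:forms_nf_natures} — a proper normal form is either an abstraction or a natural value, hence in both cases a value — together with \cref{lemCBV:typable_values_emptyctx}, recalling that an abstraction is necessarily typed with an $\abssym$-multitype and a natural value with a $\natsym$-multitype (\cref{remCBV:valnat_has_multitype_nat}), so that the tight multitype $\emsetnu{\nature}$ matches the value's nature and the equivalence of \cref{lemCBV:typable_values_emptyctx} applies.

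For $(1 \Rightarrow 2)$, starting from the tight derivation $\derivof{\deriv}{\judgv[\mcount]{\emptyctx}{\emptyctx}{\tm}{\emsetnu{\nature}}}$, I would first apply \cref{thmCBV:soundness_completeness} to obtain an evaluation sequence $\tm = \tm_0 \tov{\rulename_1} \tm_1 \tov{\rulename_2} \cdots \tov{\rulename_n} \tm_n$ with $\tm_n$ a proper normal form. Iterating \cref{lemCBV:subject_reduction} along this sequence yields, for each $i$, a tight derivation $\judgv[\mcount_i]{\emptyctx}{\emptyctx}{\tm_i}{\emsetnu{\nature}}$ with $\mcount_{i-1} = \mset{\rulename_i} + \mcount_i$, so that $\mcount = \mset{\rulename_1,\hdots,\rulename_n} + \mcount_n$. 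Since $\tm_n$ has a tight derivation producing $\emsetnu{\nature}$, the auxiliary observation above forces $\mcount_n = \emset$, hence $\mcount = \mset{\rulename_1,\hdots,\rulename_n}$, as desired; along the way the same typing argument pins down the nature of $\tm_n$ to be exactly $\nature$.

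For $(2 \Rightarrow 1)$, I would argue by induction on $n$. When $n = 0$, $\tm = \tm_n \in \NFv{\nature}$ and $\mcount = \emset$, so \cref{lemCBV:NF_tight_typable} directly provides the required tight derivation $\judgv[\emset]{\emptyctx}{\emptyctx}{\tm}{\emsetnu{\nature}}$. When $n > 0$, applying the induction hypothesis to $\tm_1 \tov{\rulename_2} \cdots \tov{\rulename_n} \tm_n$ gives a tight derivation of $\tm_1$ with multi-counter $\mset{\rulename_2,\hdots,\rulename_n}$; then \cref{lemCBV:subject_expansion} applied to $\tm \tov{\rulename_1} \tm_1$ produces a derivation of $\tm$ with multi-counter $\mset{\rulename_1} + \mset{\rulename_2,\hdots,\rulename_n} = \mset{\rulename_1,\hdots,\rulename_n}$, still with empty contexts and the tight multitype $\emsetnu{\nature}$, hence tight. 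The main — and essentially only genuinely new — obstacle is the auxiliary observation about tight derivations of normal forms; everything else is a recombination of results already available, and I expect those parts to be routine.
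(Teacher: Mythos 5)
Your proposal is correct and follows essentially the same route as the paper: both directions rest on Subject Reduction/Expansion preserving the tight judgment shape, on \cref{lemCBV:typable_values_emptyctx} (via \cref{lemCBV:forms_nf_natures} and \cref{remCBV:valnat_has_multitype_nat}) to force an empty multi-counter at the proper normal form, and on \cref{lemCBV:NF_tight_typable} for the base case of completeness. The only cosmetic difference is that you obtain the reduction sequence by first invoking \cref{thmCBV:soundness_completeness} and then re-traversing it with Subject Reduction, whereas the paper inlines this as a single induction on the size of $\mcount$; the content is the same.
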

\begin{proof}
Follows the same structure as in \cref{thmCBV:soundness_completeness}, 
\textit{mutatis mutandis}.
The full proof is in \cref{app:quantitative_type_system_PCF}.

To prove soundness $(1 \Rightarrow 2)$,
in the case where $\tm \in \NFv{\nature}$, 
we know that $\tm$ must be a value,
hence the typing derivation
$\judgv[\mcount]{\emptyctx}{\emptyctx}{\tm}{\emsetnu{\nature}}$
of the hypothesis must be such that $\mcount = \emset$
by \cref{lemCBV:typable_values_emptyctx}.
To prove completeness $(2 \Rightarrow 1)$,
we resort to the stronger \cref{lemCBV:NF_tight_typable},
rather than just to \cref{lemCBV:NF_typable},
to ensure that we can construct a typing derivation
of the form
$\judgv[\emset]{\emptyctx}{\emptyctx}{\tm}{\emsetnu{\nature}}$.
\end{proof}

\section{Conclusions}

This paper proposes a quantitative study of a hybrid evaluation strategy for \PCF,
called \PCFh, without relying on any encoding of natural numbers or
fixed-point operators.

Our key contribution is a quantitative semantics for \PCFh,
by means of a non-idempotent intersection type system called $\typesystem$, 
which is sound and complete with respect to the strategy. 
System $\typesystem$ highlights the hybrid nature
of the \PCFh semantics, in the sense that \CBN and \CBV quantitative
behaviors coexist within the same framework.
Moreover, not only $\typesystem$
provides upper bounds for the length of normalization sequences to
normal form,  
but we also achieve exact bounds
by refining the typing derivations of $\typesystem$ to those that are \emph{tight}.

Frameworks such as \CBPV or the \emph{Bang Calculus}
are able to \emph{encode} both \CBN and \CBV cohesively, 
by distinguishing \emph{values} from \emph{computations}, instead of
exhibiting explicitly the hybrid behavior they have.
Therefore, it would be worth studying whether \PCFh can be embedded
into such calculi.

An interesting question is whether the quantitative information of other hybrid 
settings can be expressed using type systems such as $\typesystem$.
Another question for future work is the study of the inhabitation problem in a 
hybrid-type setting, which we conjecture to be decidable, given that it was 
already proven to be decidable in \CBN~\cite{BucciarelliKR14}, \CBV, and \CBPV~\cite{ArrialGK23}.

\bibliographystyle{plain}
\bibliography{biblio}

\newpage
\appendix

\section*{Technical Appendix}
\section{Omitted proofs of \cref{sec:PCF}}
\label{app:PCF}
\formsnfnatures*

\begin{proof}
By induction on the derivation of the judgment $\tm \in \NFv{\enature}$.
\begin{itemize}
\item $\vruleNFAbs$.
  Then $\tm = \lam{\var}{\tmtwo} \in \NFv{\abssym}$,
  where $\enature = \abssym$.
  This case checks item 1.
\item $\vruleNFApp$.
  Then $\tm = \tmtwo \, \tmthree \in \NFv{\stucksym}$,
  where $\enature = \stucksym$.
  This case trivially holds
  since $\tm$ is not an abstraction nor $\valnat$,
  and also $\enature \notin \set{\abssym, \natsym}$.
\item $\vruleNFZero$.
  Then $\tm = \zero \in \NFv{\natsym}$,
  where $\enature = \natsym$.
  This case checks item 2.
\item $\vruleNFSuccNat$.
  Then $\tm = \succ{\tmtwo} \in \NFv{\natsym}$,
  where $\enature = \natsym$, and it is derived from
  $\tmtwo \in \NFv{\natsym}$.
  Then $\tmtwo$ is of the form $\valnat$ by the \ih on $\tmtwo$.
  Hence by definition, this case checks item 2.
\item $\vruleNFSuccErr$.
  Then $\tm = \succ{\tmtwo} \in \NFv{\stucksym}$,
  where $\enature = \stucksym$, and it is derived from
  $\tmtwo \in \NFv{\enature'}$, $\enature' \neq \natsym$.
  By the \ih on $\tmtwo$, $\tmtwo$ is not of the form $\valnat$.
  Hence by definition, $\succ{\tmtwo}$ is not an abstraction nor a $\valnat$,
  so this case also holds.
\item $\vruleNFIf$.
  Then $\tm = \ifz{\tmtwo}{\tmthree}{\var}{\tmfour} \in \NFv{\stucksym}$,
  where $\enature = \stucksym$.
  This case trivially holds since $\enature \notin \set{\abssym, \natsym}$ and
  $\tm$ is not an abstraction nor of the form $\valnat$.
\end{itemize}
\end{proof}

\characterizationNF*

\begin{proof}
\,\\
$1 \Rightarrow 2)$
  By induction on the derivation of the judgment $\tm \in \NFv{\enature}$.
  Note that cases $\vruleNFAbs$ and $\vruleNFZero$
  are immediate since there are no rules for reducing abstractions or zero.
  \begin{itemize}
  \item $\vruleNFApp$.
    Then $\tm = \tmtwo \, \tmthree \in \NFv{\stucksym}$
    is derived from $\tmtwo \in \NFv{\enature}$ and
    $\tmthree \in \NFv{\enaturetwo}$, with $\enature \neq \abssym$.
    Hence $\tmtwo$ is $\tovsil$-irreducible
    by \ih on $\tmtwo$, and $\tmthree$ is $\tovsil$-irreducible
    by \ih on $\tmthree$.
    Moreover, it follows from \cref{lemCBV:forms_nf_natures} that 
    $\tmtwo$ is not an abstraction 
    since $\enature \neq \abssym$ by hypothesis.
    Therefore 
    $\tmtwo \, \tmthree$ is $\tovsil$-irreducible
    since it is not a redex nor its subterms reduce.
  \item $\vruleNFSuccNat$.
    Then $\tm = \succ{\tmthree} \in \NFv{\natsym}$
    is derived from $\tmthree \in \NFv{\natsym}$.
    The only rule to evaluate $\tm$ is $\vruleToCongSucc$.
    But given that $\tmthree$ is $\tovsil$-irreducible
    by the \ih on $\tmthree$, then $\succ{\tmthree}$ does not reduce.
  \item $\vruleNFSuccErr$.
    Analogous to the previous case.
  \item $\vruleNFIf$.
    Then $\tm = \ifz{\tmtwo}{\tmthree}{\var}{\tmfour} \in \NFv{\stucksym}$
    is derived from $\tmtwo \in \NFv{\enature}$, with $\enature \neq \natsym$.
    Hence $\tmtwo$ is $\tovsil$-irreducible by \ih on $\tmtwo$.
    Moreover, it follows from \cref{lemCBV:forms_nf_natures} that 
    $\tmtwo$ is not of the form $\valnat$
    since $\enature \neq \natsym$ by hypothesis.
    Therefore
    $\ifz{\tmtwo}{\tmthree}{\var}{\tmfour}$ 
    is $\tovsil$-irreducible since it is not a redex nor $\tmtwo$ reduces.
  \end{itemize}
\noindent
$2 \Rightarrow 1)$
  By induction on $\tm$, with $\tm$ closed so that $\tm \neq \var$.
  Note that case $\tm = \fix{\var}{\tmtwo}$ is not possible
  since rule $\vruleToFix$ states that any term of the form
  $\fix{\var}{\tmtwo}$ reduces.
  \begin{itemize}
  \item $\tm = \lam{\var}{\tmtwo}$.
    Assuming $\enature = \abssym$ and
    applying rule $\vruleNFAbs$ then
    $\lam{\var}{\tmtwo} \in \NFv{\abssym}$.
  \item $\tm = \tmtwo \, \tmthree$.
    Since $\tmtwo \, \tmthree$ is $\tovsil$-irreducible,
    it follows that both $\tmtwo$ and $\tmthree$ are also $\tovsil$-irreducible,
    as otherwise $\tm$ would reduce by rules $\vruleToCongAppL$ or $\vruleToCongAppR$.
    Consequently, there exists a fallible nature $\enature_1$ such that
    $\tmtwo \in \NFv{\enature_1}$
    and another fallible nature $\enature_2$ such that
    $\tmthree \in \NFv{\enature_2}$, by \ih on $\tmtwo$ and $\tmthree$ respectively.
    It is worth noting that $\tmtwo$ cannot be of the form $\lam{\var}{\tmtwo'}$,
    since otherwise $\tm$ would reduce by rule $\vruleToBeta$.
    Hence, it must be the case that $\enature_1 \neq \abssym$
    by \cref{lemCBV:forms_nf_natures}.
    Then $\tmtwo \, \tmthree \in \NFv{\stucksym}$ by applying rule $\vruleNFApp$,
    with $\enature = \stucksym$.
  \item $\tm = \zero$.
    Assuming $\enature = \natsym$ and
    applying rule $\vruleNFZero$ then
    $\zero \in \NFv{\natsym}$.
  \item $\tm = \succ{\tmtwo}$.
    Since $\succ{\tmtwo}$ is $\tovsil$-irreducible,
    it follows that $\tmtwo$ is also $\tovsil$-irreducible,
    as otherwise $\tm$ would reduce by rule $\vruleToCongSucc$.
    Then there exists a fallible nature $\enature'$ such that
    $\tmtwo \in \NFv{\enature'}$ by \ih on $\tmtwo$.
    To conclude, we apply rule $\vruleNFSuccNat$ if $\enature' = \natsym$
    or rule $\vruleNFSuccErr$ otherwise, and derive
    $\succ{\tmtwo} \in \NFv{\natsym}$ if $\enature' = \natsym$,
    otherwise $\succ{\tmtwo} \in \NFv{\stucksym}$.
  \item $\tm = \ifz{\tmtwo}{\tmthree}{\var}{\tmfour}$.
    Since $\ifz{\tmtwo}{\tmthree}{\var}{\tmfour}$ is $\tovsil$-irreducible,
    it follows that $\tmtwo$ is also $\tovsil$-irreducible,
    as otherwise $\tm$ would reduce by rule $\vruleToCongIf$.
    Then there exists a fallible nature $\enature'$ such that
    $\tmtwo \in \NFv{\enature'}$ by \ih on $\tmtwo$.
    Moreover, $\tmtwo$ is not of the form $\zero$ nor $\succ{\valnat}$,
    otherwise $\tm$ would reduce by rule $\vruleToIfZero$ or $\vruleToIfSucc$, respectively.
    Therefore it must be the case that $\enature' \neq \natsym$
    by \cref{lemCBV:forms_nf_natures}.
    Then
    $\ifz{\tmtwo}{\tmthree}{\var}{\tmfour} \in \NFv{\stucksym}$
    by applying rule $\vruleNFIf$, with $\enature = \stucksym$.
  \end{itemize}
\end{proof}

\diamond*
\begin{proof}
By induction on $\tm$.
Note cases $\tm = \var$, $\tm = \lam{\var}{\tmtwo}$ and $\tm = \zero$
are impossible by hypothesis.
\begin{itemize}
\item $\tm = \tmtwo \, \tmthree$.
  We can reduce an application using 
  rules $\vruleToBeta$, $\vruleToCongAppL$ and $\vruleToCongAppR$,
  so we have the following cases:
  \begin{itemize}
  \item $\vruleToBeta$/$\vruleToBeta$.
    This case does not apply since we would have that
    $\tm_1 = \tm_2$, which contradicts the hypothesis.
  \item $\vruleToBeta$/$\vruleToCongAppL$.
    We have
    $\tm = (\lam{\var}{\tmtwo'}) \, \val
     \tov{\bBeta}
     \tmtwo'\sub{\var}{\val} = \tm_1$,
    where
    $\tmtwo = \lam{\var}{\tmtwo'}$, $\tmthree = \val$,
    and $\rulename_1 = \bBeta$.
    On the other hand,
    $\tm = (\lam{\var}{\tmtwo'}) \, \val 
     \tov{\rulename_2}
     \tmtwo_2 \, \val = \tm_2$
    is derived from 
    $\lam{\var}{\tmtwo'} \tov{\rulename_2} \tmtwo_2$,
    but this is not possible since 
    there are no rules to reduce abstractions.
  \item $\vruleToBeta$/$\vruleToCongAppR$.
    Analogous to the previous case.
  \item $\vruleToCongAppL$/$\vruleToCongAppL$.
    We have
    $\tm = \tmtwo \, \tmthree \tov{\rulename_1} \tmtwo_1 \, \tmthree = \tm_1$
    derived from 
    $\tmtwo \tov{\rulename_1} \tmtwo_1$,
    and we have
    $\tm = \tmtwo \, \tmthree \tov{\rulename_2} \tmtwo_2 \, \tmthree = \tm_2$
    derived from 
    $\tmtwo \tov{\rulename_2} \tmtwo_2$,
    where $\tmtwo_1 \neq \tmtwo_2$
    since $\tmtwo_1 \, \tmthree \neq \tmtwo_2 \, \tmthree$ by hypothesis.
    We apply the \ih on $\tmtwo$, yielding $\tmtwo'$ such that
    $\tmtwo_1 \tov{\rulename_2} \tmtwo'$ and
    $\tmtwo_2 \tov{\rulename_1} \tmtwo'$.
    Applying rule $\vruleToCongAppL$ to both
    $\tmtwo_1$ and $\tmtwo_2$,
    we obtain
    $\tm_1 = \tmtwo_1 \, \tmthree \tov{\rulename_2} \tmtwo' \, \tmthree = \tm'$
    and
    $\tm_2 = \tmtwo_2 \, \tmthree \tov{\rulename_1} \tmtwo' \, \tmthree = \tm'$
    respectively.
    The following diagram summarizes the proof:
    \[
      \xymatrix{
        \tm = \tmtwo \, \tmthree 
          \arVr{\rulename_1}
          \arVd{\rulename_2}
      & \tmtwo_1 \, \tmthree = \tm_1
          \arsdVd{\rulename_2}
      \\
        \tm_2 = \tmtwo_2 \, \tmthree
          \arsdVr{\rulename_1}
      & \tmtwo' \, \tmthree = \tm'
    }
    \]
  \item $\vruleToCongAppR$/$\vruleToCongAppR$.
    Analogous to the previous case.
  \item $\vruleToCongAppL$/$\vruleToCongAppR$.
    We have
    $\tm = \tmtwo \, \tmthree \tov{\rulename_1} \tmtwo_1 \, \tmthree = \tm_1$
    derived from 
    (1) $\tmtwo \tov{\rulename_1} \tmtwo_1$,
    and on the other hand we have
    $\tm = \tmtwo \, \tmthree \tov{\rulename_2} \tmtwo \, \tmthree_2 = \tm_2$
    derived from 
    (2) $\tmthree \tov{\rulename_2} \tmthree_2$.
    Given (2), we can apply rule $\vruleToCongAppR$ to $\tmtwo_1 \, \tmthree$,
    yielding
    $\tm_1 = \tmtwo_1 \, \tmthree \tov{\rulename_2} \tmtwo_1 \, \tmthree_2 = \tm'$.
    Lastly, given (1), we can apply rule $\vruleToCongAppL$ to $\tmtwo \, \tmthree_2$,
    yielding
    $\tm_2 = \tmtwo \, \tmthree_2 \tov{\rulename_1} \tmtwo_1 \, \tmthree_2 = \tm'$.
    The following diagram summarizes the proof:
    \[
      \xymatrix{
        \tm = \tmtwo \, \tmthree 
          \arVr{\rulename_1}
          \arVd{\rulename_2}
      & \tmtwo_1 \, \tmthree = \tm_1
          \arsdVd{\rulename_2}
      \\
        \tm_2 = \tmtwo \, \tmthree_2
          \arsdVr{\rulename_1}
      & \tmtwo_1 \, \tmthree_2 = \tm'
    }
    \]
  \end{itemize}
\item $\tm = \succ{\tmtwo}$.
  The only rule to reduce $\tm$ is $\vruleToCongSucc$.
  Hence, we have
  $\tm = \succ{\tmtwo} \tov{\rulename_1} \succ{\tmtwo_1} = \tm_1$,
  derived from
  $\tmtwo \tov{\rulename_1} \tmtwo_1$,
  and on the other hand
  $\tm = \succ{\tmtwo} \tov{\rulename_2} \succ{\tmtwo_2} = \tm_2$,
  derived from
  $\tmtwo \tov{\rulename_2} \tmtwo_2$,
  where $\tmtwo_1 \neq \tmtwo_2$ since $\succ{\tmtwo_1} \neq \succ{\tmtwo_2}$
  by hypothesis.
  We can apply the \ih on $\tmtwo$, yielding $\tmtwo'$ such that 
  $\tmtwo_1 \tov{\rulename_2} \tmtwo'$ and $\tmtwo_2 \tov{\rulename_1} \tmtwo'$.
  Applying rule $\vruleToCongSucc$ to both $\tmtwo_1$ and $\tmtwo_2$, we obtain
  $\succ{\tmtwo_1} \tov{\rulename_2} \succ{\tmtwo'}$ and
  $\succ{\tmtwo_2} \tov{\rulename_1} \succ{\tmtwo'}$, respectively.
  The following diagram summarizes the proof:
    \[
      \xymatrix{
          \tm = \succ{\tmtwo} 
            \arVr{\rulename_1}
            \arVd{\rulename_2}
        & \succ{\tmtwo_1} = \tm_1
            \arsdVd{\rulename_2}
      \\
          \tm_2 = \succ{\tmtwo_2}
            \arsdVr{\rulename_1}
        & \succ{\tmtwo'} = \tm'
      }
    \]
\item $\tm = \ifz{\tmtwo}{\tmthree}{\var}{\tmfour}$.
  We can reduce $\tm$ using
  rules $\vruleToIfZero$, $\vruleToIfSucc$ and $\vruleToCongIf$,
  so we have the following cases:
  \begin{itemize}
  \item $\vruleToIfZero$/$\vruleToIfZero$.
    This case does not apply since it should be that
    $\tm_1 = \tm_2$, which contradicts the hypothesis.
  \item $\vruleToIfSucc$/$\vruleToIfSucc$.
    Analogous to the previous case.
  \item $\vruleToIfZero$/$\vruleToIfSucc$.
    This case is not possible since on the one hand
    we have $\tmtwo = \zero$ given that $\tm$ reduces by rule $\vruleToIfZero$,
    but on the other hand, $\tmtwo = \succ{\valnat}$ since $\tm$ also reduces
    by rule $\vruleToIfSucc$.
  \item $\vruleToIfZero$/$\vruleToCongIf$.
    We have
    $\tm = \ifz{\zero}{\tmthree}{\var}{\tmfour}
     \tov{\bIfZero}
     \tmthree = \tm_1$,
    where
    $\tmtwo = \zero$
    and $\rulename_1 = \bIfZero$.
    On the other hand
    $\tm = \ifz{\zero}{\tmthree}{\var}{\tmfour}
     \tov{\rulename_2}
     \ifz{\tmtwo_2}{\tmthree}{\var}{\tmfour} = \tm_2$
    is derived from 
    $\zero \tov{\rulename_2} \tmtwo_2$,
    but this is not possible since 
    there are no rules to reduce $\zero$.
  \item $\vruleToIfSucc$/$\vruleToCongIf$.
    We have
    $\tm = \ifz{\succ{\valnat}}{\tmthree}{\var}{\tmfour} 
     \tov{\bIfSucc}
     \tmfour\sub{\var}{\valnat} = \tm_1$
    where $\tmtwo = \succ{\valnat}$ and $\rulename_1 = \bIfSucc$,
    and we have
    $\tm = \ifz{\succ{\valnat}}{\tmthree}{\var}{\tmfour} 
     \tov{\rulename_2}
     \ifz{\tmtwo_2}{\tmthree}{\var}{\tmfour} = \tm_2$
    derived from 
    $\succ{\valnat} \tov{\rulename_2} \tmtwo_2$.
    This reduction can only be derived by rule $\vruleToCongSucc$,
    hence $\tmtwo_2 = \succ{\tmtwo'_2}$, and the reduction is derived from
    $\valnat \tov{\rulename_2} \tmtwo'_2$,
    but this is not possible since $\valnat$ is a value.
  \item $\vruleToCongIf$/
    $\vruleToCongIf$.
    We have
    $\tm = \ifz{\tmtwo}{\tmthree}{\var}{\tmfour} 
     \tov{\rulename_1} 
     \ifz{\tmtwo_1}{\tmthree}{\var}{\tmfour} = \tm_1$
    derived from 
    $\tmtwo \tov{\rulename_1} \tmtwo_1$,
    and we have
    $\tm = \ifz{\tmtwo}{\tmthree}{\var}{\tmfour} 
     \tov{\rulename_2} 
     \ifz{\tmtwo_2}{\tmthree}{\var}{\tmfour} = \tm_2$
    derived from 
    $\tmtwo \tov{\rulename_2} \tmtwo_2$,
    where $\tmtwo_1 \neq \tmtwo_2$ since 
    $\ifz{\tmtwo_1}{\tmthree}{\var}{\tmfour} 
    \neq \ifz{\tmtwo_2}{\tmthree}{\var}{\tmfour}$ 
    by hypothesis.
    We apply the \ih on $\tmtwo$, yielding $\tmtwo'$ such that
    $\tmtwo_1 \tov{\rulename_2} \tmtwo'$
    and
    $\tmtwo_2 \tov{\rulename_1} \tmtwo'$.
    Applying rule $\vruleToCongIf$ to both
    $\tmtwo_1$ and $\tmtwo_2$,
    we obtain
    $\tm_1 = \ifz{\tmtwo_1}{\tmthree}{\var}{\tmfour} 
    \tov{\rulename_2} 
    \ifz{\tmtwo'}{\tmthree}{\var}{\tmfour} = \tm'$
    and
    $\tm_2 = \ifz{\tmtwo_2}{\tmthree}{\var}{\tmfour} 
    \tov{\rulename_1} 
    \ifz{\tmtwo'}{\tmthree}{\var}{\tmfour} = \tm'$
    respectively.
    The following diagram summarizes the proof:
    \[
      \xymatrix{
        \tm = \ifz{\tmtwo}{\tmthree}{\var}{\tmfour}
          \arVr{\rulename_1} 
          \arVd{\rulename_2}
      & \ifz{\tmtwo_1}{\tmthree}{\var}{\tmfour} = \tm_1
          \arsdVd{\rulename_2}
      \\
        \tm_2 = \ifz{\tmtwo_2}{\tmthree}{\var}{\tmfour}
          \arsdVr{\rulename_1}
      & \ifz{\tmtwo'}{\tmthree}{\var}{\tmfour} = \tm'
      }
    \]
  \end{itemize}
\item $\tm = \fix{\var}{\tmtwo}$.
  This case does not apply since we would have that
  $\tm_1 = \tm_2$, which contradicts the hypothesis.
\end{itemize}
\end{proof}

\section{Omitted proofs of \cref{sec:quantitative_type_system_PCF}}
\label{app:quantitative_type_system_PCF}

Let I and J be two sets.
We write $I \uplus J$ to denote the \defn{disjoint union} of $I$ and $J$.

\multitypesplitting*

\begin{proof}
The first item is immediate by definition of $\mleq$.
Both directions of the third item follow from the first two items
by induction on the cardinality of $I$, which is a finite set.
For the direction $(\Rightarrow)$ of the second item, we consider
four subcases, depending on whether each of $\optmtyp_1$
and $\optmtyp_2$ is $\none$ or not.
We only discuss three subcases, as two of them are symmetric:
\begin{itemize}
\item
  If $\optmtyp_1 = \optmtyp_2 = \none$,
  then $\optmtyp_1 + \optmtyp_2 = \none \mleq \mtyp$
  so $\mtyp$ is of the form $\emsetnu{\nature}$ for some proper nature $\nature$.
  It suffices to take $\mtyp_1 = \mtyp_2 \defeq \emsetnu{\nature}$,
  as then
  $\mtyp = \emsetnu{\nature} = \emsetnu{\nature} + \emsetnu{\nature} = \mtyp_1 + \mtyp_2$,
  and $\optmtyp_i = \none \mleq \emsetnu{\nature} = \mtyp_i$
  for all $i \in \set{1,2}$.
\item
  If $\optmtyp_1 \neq \none$ and $\optmtyp_2 = \none$,
  then $\optmtyp_1 + \optmtyp_2 = \optmtyp_1 + \none = \optmtyp_1 \mleq \mtyp$.
  There are two subcases, depending on whether $\mtyp$ is
  an $\abssym$-multitype or a $\natsym$-multitype.
  We only treat the first case, the second one being
  symmetric.
  Let $\mtyp$ be of the form $\msetnu{\abssym}{\typabs_i}_\iI$.
  Then 
  $\optmtyp_1 = \mtyp = \msetnu{\abssym}{\typabs_i}_\iI$
  by definition of $\mleq$,
  since $\optmtyp_1 \mleq \mtyp$ and $\optmtyp_1 \neq \none$.
  It suffices to take $\mtyp_1 \defeq \msetnu{\abssym}{\typabs_i}_\iI$
  and $\mtyp_2 \defeq \emsetnu{\abssym}$,
  as then
  $\mtyp = \msetnu{\abssym}{\typabs_i}_\iI
        = \msetnu{\abssym}{\typabs_i}_\iI + \emsetnu{\abssym}
        = \mtyp_1 + \mtyp_2$,
  where $\optmtyp_1 \mleq \mtyp_1$ and 
  $\optmtyp_2 = \none \mleq \emsetnu{\abssym}$.
\item
  If $\optmtyp_1 \neq \none$ and $\optmtyp_2 \neq \none$,
  then since $\optmtyp_1 + \optmtyp_2$ is well-defined
  we know that $\optmtyp_1$ and $\optmtyp_2$ must be compatible.
  There are two cases, depending on whether they are
  $\abssym$-multitypes or optional $\natsym$-multitypes.
  We only treat the first case, the second one being symmetric.
  Then there exist sets $I,J$ such that
  $\optmtyp_1 = \msetnu{\abssym}{\typabs_i}_\iI$
  and
  $\optmtyp_2 = \msetnu{\abssym}{\typabs_j}_\jJ$.
  Furthermore, by hypothesis
  $\optmtyp_1 + \optmtyp_2
  = \msetnu{\abssym}{\typabs_k}_{k \in I \cup J}
  \mleq \mtyp$,
  so $\mtyp = \msetnu{\abssym}{\typabs_k}_{k \in I \cup J}$,
  and it suffices to take
  $\mtyp_1 \defeq \msetnu{\abssym}{\typabs_i}_\iI$
  and
  $\mtyp_2 \defeq \msetnu{\abssym}{\typabs_j}_\jJ$.
\end{itemize}

For the direction $(\Leftarrow)$, we want
to show that if there exist multitypes $\mtyp_1,\mtyp_2$
such that $\mtyp = \mtyp_1 + \mtyp_2$ and
$\optmtyp_i \mleq \mtyp_i$ for all $i \in \set{1,2}$,
then we have $\optmtyp_1 + \optmtyp_2 \mleq \mtyp$,
we consider four subcases again, depending on whether
each of $\mtyp_1$ and $\mtyp_2$ are empty or not.
The steps are analogous as the other direction.
\end{proof}

\begin{lemma}[Value Splitting / Merging]
\label{lemCBV:value_splitting_merging}
Let $\optmtyp_1, \optmtyp_2$ be optional multitypes, and 
let $\ftyp_1, \ftyp_2$ be multitype families.
Then the following are equivalent:
\begin{enumerate}
\item
  $\derivof{\deriv}{\judgv[\mcount]{\fctx}{\tctx}{\val}{\mtyp}}$
  with $\mtyp$ a multitype such that $\optmtyp_1 + \optmtyp_2 \mleq \mtyp$
\item
  There exist family contexts $\fctx_1, \fctx_2$, typing contexts $\tctx_1, \tctx_2$,
  multi-counters $\mcount_1, \mcount_2$, and multitypes $\mtyp_1, \mtyp_2$
  such that $\fctx = \fctx_1 + \fctx_2$ and $\tctx = \tctx_1 + \tctx_2$
  and $\mcount = \mcount_1 + \mcount_2$,
  and for all $i = 1, 2$,
  $\derivof{\deriv_i}{\judgv[\mcount_i]{\fctx_i}{\tctx_i}{\val}{\mtyp_i}}$ and
  $\optmtyp_i \mleq \mtyp_i$.
\end{enumerate}
\end{lemma}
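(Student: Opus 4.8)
The plan is to prove both implications simultaneously by structural induction on the value $\val$, following the grammar $\val \eqgram \lam{\var}{\tm} \mid \zero \mid \succ{\valnat}$, exactly as in the proof of \cref{lemCBV:typable_values_emptyctx}. Two preliminary remarks will do most of the bookkeeping. First, since $\val$ is a value, every multitype assigned to it is forced to share the nature of $\val$: an $\abssym$-multitype when $\val$ is an abstraction (only rule $\vruleTypAbs$ can conclude a judgment about an abstraction), and a $\natsym$-multitype when $\val = \valnat$ (by \cref{remCBV:valnat_has_multitype_nat}); hence $\mtyp$, $\mtyp_1$ and $\mtyp_2$ are always pairwise compatible and the sums $\mtyp_1 + \mtyp_2$ below are well defined without further ado. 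Second, all reasoning about the subsumption relation can be discharged through \cref{lemCBV:multitype_splitting}(2): that item asserts precisely that $\optmtyp_1 + \optmtyp_2 \mleq \mtyp$ holds iff $\mtyp$ decomposes as $\mtyp_1 + \mtyp_2$ with $\optmtyp_i \mleq \mtyp_i$, so in each direction the condition on $\optmtyp_1,\optmtyp_2$ turns into (resp.\ follows from) a plain additive decomposition of $\mtyp$, and the genuine content is to split or merge the underlying \emph{derivations}.

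For the splitting direction $(1 \Rightarrow 2)$ I would first use \cref{lemCBV:multitype_splitting}(2) to obtain $\mtyp_1,\mtyp_2$ with $\mtyp = \mtyp_1 + \mtyp_2$ and $\optmtyp_i \mleq \mtyp_i$, and then split $\deriv$ by case analysis on $\val$. If $\val = \lam{\var}{\tm}$, then $\deriv$ ends in $\vruleTypAbs$ with a premise family indexed by some $K$; the multiset identity $\mtyp = \mtyp_1 + \mtyp_2$ induces a partition $K = K_1 \uplus K_2$, and re-applying $\vruleTypAbs$ to the two sub-families yields $\derivof{\deriv_1}{\judgv[\mcount_1]{\fctx_1}{\tctx_1}{\val}{\mtyp_1}}$ and $\derivof{\deriv_2}{\judgv[\mcount_2]{\fctx_2}{\tctx_2}{\val}{\mtyp_2}}$ whose contexts and multi-counters are the partial sums over $K_1$ and $K_2$, hence add up to those of $\deriv$. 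The case $\val = \zero$ is identical but with all contexts and counters empty. If $\val = \succ{\valnat}$, then $\deriv$ ends in $\vruleTypSucc$ from a premise $\derivof{\derivtwo}{\judgv[\mcount]{\fctx}{\tctx}{\valnat}{\mtypnat}}$ with $\mtypnat = +_{\kK}\mtypnat_k$ and $\mtyp = \msetnat{\succTyp{\mtypnat_k}}_{\kK}$; again $\mtyp = \mtyp_1 + \mtyp_2$ gives a partition $K = K_1 \uplus K_2$, I set $\mtypnat^{(j)} \defeq +_{k \in K_j}\mtypnat_k$, and I apply the induction hypothesis to $\valnat$ (direction $1 \Rightarrow 2$) with the trivial subsumption $\mtypnat^{(1)} + \mtypnat^{(2)} = \mtypnat \mleq \mtypnat$ to split $\derivtwo$ into derivations of $\valnat : \mtypnat^{(1)}$ and $\valnat : \mtypnat^{(2)}$ --- here the conclusion of the \ih holds only up to $\mleq$, but since $\mtypnat^{(j)} \neq \none$ the relation collapses to equality. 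Applying $\vruleTypSucc$ to each half then produces $\deriv_1,\deriv_2$ with the desired contexts and counters.

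For the merging direction $(2 \Rightarrow 1)$ I would set $\mtyp \defeq \mtyp_1 + \mtyp_2$ (licit by the first preliminary remark), so that $\optmtyp_1 + \optmtyp_2 \mleq \mtyp$ is immediate from \cref{lemCBV:multitype_splitting}(2), and build $\deriv$ by fusing $\deriv_1$ and $\deriv_2$: in the $\vruleTypAbs$ and $\vruleTypZero$ cases by taking the disjoint union of the premise families (contexts and counters add up by construction), and in the $\vruleTypSucc$ case by invoking the induction hypothesis on $\valnat$ (direction $2 \Rightarrow 1$) to merge the two sub-derivations of $\valnat$ into a single one, whose type is $\mtypnat^{(1)} + \mtypnat^{(2)}$ after identifying a possibly syntactically different empty $\natsym$-multitype, followed by one application of $\vruleTypSucc$. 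The main obstacle --- and essentially the only place that needs care --- is the $\vruleTypSucc$ case: read downwards this rule is non-deterministic, since a $\natsym$-multitype splits into summands in many ways, so one must keep the index set $K$, the decomposition $\mtypnat = +_{\kK}\mtypnat_k$, and the partition $K = K_1 \uplus K_2$ coherently aligned across the induction, and handle the degenerate subcases $K_j = \emptyset$ where the \ih may deliver the empty multitype $\emsetnu{\natsym}$ only up to equality rather than on the nose. Everything else is routine distribution of finite sums of contexts over partitions of the index sets.
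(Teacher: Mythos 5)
Your proposal is correct and follows essentially the same route as the paper's proof: a case analysis on the last typing rule ($\vruleTypAbs$, $\vruleTypZero$, $\vruleTypSucc$, which for a value amounts to your structural induction on $\val$), with \cref{lemCBV:multitype_splitting}(2) discharging the subsumption bookkeeping, index-set partitions/unions handling the abstraction and zero cases, and a recursive appeal to the lemma on $\valnat$ in the $\vruleTypSucc$ case. The points you flag as delicate (the non-determinism of $\vruleTypSucc$ read downwards, the degenerate empty-index subcases, and the collapse of $\mleq$ to equality on non-$\none$ multitypes) are exactly the ones the paper also treats explicitly.
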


\begin{proof}
\,\\
$1 \Rightarrow 2)$
We proceed by induction on $\deriv$.
Since $\val$ is value, the only rules that apply are $\vruleTypAbs$,
$\vruleTypZero$ and $\vruleTypSucc$.
\begin{itemize}
\item $\vruleTypAbs$.
  Then $\val = \lam{\var}{\tm}$, and
  $\derivof{\deriv}{
    \judgv[+_\iI \mcount_i]
      {+_\iI \fctx_i}{+_\iI \tctx_i}
      {\lam{\var}{\tm}}{\msetabs{\optmtyptwo_i \to \mtypthree_i}_\iI}
  }$,
  where $\fctx = +_\iI \fctx_i$ and $\tctx = +_\iI \tctx_i$ and
  $\mcount = +_\iI \mcount_i$ and
  $\mtyp = \msetabs{\optmtyptwo_i \to \mtypthree_i}_\iI$.
  The judgment is derived from
  $\derivof{\deriv'_i}{\judgv[\mcount_i]{\fctx_i}{\tctx_i, \var : \optmtyptwo_i}{\tm}{\mtypthree_i}}$ 
  for each $\iI$.
  Given that $\optmtyp_1 + \optmtyp_2 \mleq \msetabs{\optmtyptwo_i \to \mtypthree_i}_\iI$,
  then there are two subcases to analyze, depending on whether $\mtyp$ is empty or not:
  \begin{itemize}
  \item 
    If $\mtyp = \emset$, then 
    $\mtyp = \msetabs{\optmtyptwo_i \to \mtypthree_i}_\iI = \emsetnu{\abssym}$,
    so $I = \emptyset$.
    Therefore $\deriv$ has no premises and moreover
    $\fctx = \emptyctx$ and $\tctx = \emptyctx$ and $\mcount = \emset$.
    Taking $\fctx_1 = \fctx_2 \defeq \emptyctx$ and $\tctx_1 = \tctx_2 \defeq \emptyctx$
    and $\mcount_1 = \mcount_2 \defeq \emset$ 
    and $\mtyp_1 = \mtyp_2 \defeq \emsetnu{\abssym}$, we obtain the derivations
    $\derivof{\deriv_1}{\judgv[\mcount_1]{\fctx_1}{\tctx_1}{\lam{\var}{\tm}}{\mtyp_1}}$ and 
    $\derivof{\deriv_2}{\judgv[\mcount_2]{\fctx_2}{\tctx_2}{\lam{\var}{\tm}}{\mtyp_2}}$
    by applying rule $\vruleTypAbs$.
    Then $\optmtyp_1 \mleq \mtyp_1$ and $\optmtyp_2 \mleq \mtyp_2$
    by \cref{lemCBV:multitype_splitting} (item 2).
    It is trivial to check that the conditions hold, since $\deriv_1 = \deriv_2 = \deriv$.
  \item 
    If $\mtyp \neq \emset$, then $I \neq \emptyset$, so we can split it as
    $I = I_1 \uplus I_2$ such that 
    $\mtyp_1 = \msetabs{\optmtyptwo_i \to \mtypthree_i}_{i \in I_1}$ and
    $\mtyp_2 = \msetabs{\optmtyptwo_i \to \mtypthree_i}_{i \in I_2}$.
    Taking $\fctx_1 \defeq +_{i \in I_1} \fctx_i$ and
    $\fctx_2 \defeq +_{i \in I_2} \fctx_i$, and 
    $\tctx_1 \defeq +_{i \in I_1} \tctx_i$ and
    $\tctx_2 \defeq +_{i \in I_2} \tctx_i$, and
    $\mcount_1 \defeq +_{i \in I_1} \mcount_i$ and
    $\mcount_2 \defeq +_{i \in I_2} \mcount_i$, and
    $\mtyp_1 \defeq \msetabs{\optmtyptwo_i \to \mtypthree_i}_{i \in I_1}$ and
    $\mtyp_2 \defeq \msetabs{\optmtyptwo_i \to \mtypthree_i}_{i \in I_2}$,
    we can then split $(\deriv'_i)_\iI$ into 
    $(\deriv'_i)_{i \in I_1}$ and $(\deriv'_i)_{i \in I_2}$.
    We can easily check that
    $\fctx = \fctx_1 + \fctx_2$ and $\tctx = \tctx_1 + \tctx_2$ and 
    $\mcount = \mcount_1 + \mcount_2$ and $\mtyp = \mtyp_1 + \mtyp_2$.
    Applying rule $\vruleTypAbs$ to both groups of derivations, we obtain
    $\derivof{\deriv_1}{\judgv[\mcount_1]{\fctx_1}{\tctx_1}{\lam{\var}{\tm}}{\mtyp_1}}$ and
    $\derivof{\deriv_2}{\judgv[\mcount_2]{\fctx_2}{\tctx_2}{\lam{\var}{\tm}}{\mtyp_2}}$.
    And moreover $\optmtyp_1 \mleq \mtyp_1$ and $\optmtyp_2 \mleq \mtyp_2$
    by \cref{lemCBV:multitype_splitting} (item 2).
  \end{itemize}
\item $\vruleTypZero$.
  Then $\val = \zero$ and 
  $\derivof{\deriv}{\judgv[\emset]{\emptyctx}{\emptyctx}{\zero}{\msetnat{\zeroTyp}_\iI}}$,
  where $\fctx = \emptyctx$ and $\tctx = \emptyctx$ and $\mcount = \emset$ and 
  $\mtyp = \msetnat{\zeroTyp}_\iI$.
  Since $\optmtyp_1 + \optmtyp_2 \mleq \msetnat{\zeroTyp}_\iI$, 
  there are two subcases to analyze, depending on whether $\mtyp$ is empty or not:
  \begin{itemize}
  \item
    If $\mtyp = \emset$, then $\optmtyp_1 = \optmtyp_2 = \none$ and
    $\msetnat{\zeroTyp}_\iI = \emsetnu{\natsym}$,
    so $I = \emptyset$.
    This subcase is analogous to the subcase $\mtyp = \emset$ for $\vruleTypAbs$.
  \item
    If $\mtyp \neq \emset$, then 
    $I \neq \emptyset$, so we can split it as
    $I = I_1 \uplus I_2$ so that $\msetnat{\zeroTyp}_\iI 
    = \msetnat{\zeroTyp}_{i \in I_1} + \msetnat{\zeroTyp}_{i \in I_2}$.
    Moreover,
    there exist multitypes $\mtyp_1, \mtyp_2$ such that
    $\optmtyp_1 \mleq \mtyp_1$ and $\optmtyp_2 \mleq \mtyp_2$
    by \cref{lemCBV:multitype_splitting} (item 2).
    Taking $\fctx_1 = \fctx_2 \defeq \emptyctx$, and $\tctx_1 = \tctx_2 \defeq \emptyctx$,
    and $\mcount_1 = \mcount_2 \defeq \emset$, and $\mtyp_1$ and $\mtyp_2$, we check
    $\fctx = \emptyctx = \fctx_1 + \fctx_2$ and $\tctx = \emptyctx = \tctx_1 + \tctx_2$
    and $\mcount = \emset = \mcount_1 + \mcount_2$.
    By rule $\vruleTypZero$ we obtain the derivations
    $\derivof{\deriv_1}{\judgv[\emset]{\emptyctx}{\emptyctx}{\zero}{\msetnat{\zeroTyp}_{i \in I_1}}}$
    and
    $\derivof{\deriv_2}{\judgv[\emset]{\emptyctx}{\emptyctx}{\zero}{\msetnat{\zeroTyp}_{i \in I_2}}}$.
  \end{itemize}
\item $\vruleTypSucc$.
  Then $\val = \succ{\valnat}$ and
  $\derivof{\deriv}{
    \judgv[\mcount]{\fctx}{\tctx}{\succ{\valnat}}{\msetnat{\succTyp{\mtypnat_i}}_\iI}
  }$,
  where $\mtyp = \msetnat{\succTyp{\mtypnat_i}}_\iI$.
  The judgment is derived from
  $\derivof{\deriv'}{\judgv[\mcount]{\fctx}{\tctx}{\valnat}{+_\iI \mtypnat_i}}$.
  Since $\optmtyp_1 + \optmtyp_2 \mleq \msetnat{\succTyp{\mtypnat_i}}_\iI$,
  there are two subcases to analyze, depending on whether $\mtyp$ is empty or not:
  \begin{itemize}
  \item
    Analogous to the subcase $\mtyp = \emset$ in cases $\vruleTypAbs$ and $\vruleTypZero$.
  \item 
    If $\mtyp \neq \emset$, then 
    $I \neq \emptyset$, so we can split it as
    $I = I_1 \uplus I_2$ so that $\msetnat{\succTyp{\mtypnat_i}}_\iI 
    = \msetnat{\succTyp{\mtypnat_i}}_{i \in I_1} + \msetnat{\succTyp{\mtypnat_i}}_{i \in I_2}
    = \mtyp_1 + \mtyp_2$.
    Then $\optmtyp_1 \mleq \mtyp_1$ and $\optmtyp_2 \mleq \mtyp_2$ 
    by \cref{lemCBV:multitype_splitting} (item 2).
    Moreover, the premise can be written as
    $\derivof{\deriv'}{\judgv[\mcount]{\fctx}{\tctx}{\valnat}{+_{i \in I_1} \mtypnat_i +_{i \in I_2} \mtypnat_i}}$,
    and $+_{i \in I_1} \mtypnat_i +_{i \in I_2} \mtypnat_i \mleq +_{i \in I_1} \mtypnat_i +_{i \in I_2} \mtypnat_i$
    by definition.
    We can apply \ih on $\deriv'$, yielding that there exist 
    family contexts $\fctx_1$, $\fctx_2$,
    typing contexts $\tctx_1$, $\tctx_2$,
    multi-counters $\mcount_1$, $\mcount_2$ and 
    multitypes $\mtypnat_1, \mtypnat_2$ such that
    $\fctx = \fctx_1 + \fctx_2$ and
    $\tctx = \tctx_1 + \tctx_2$ and
    $\mcount = \mcount_1 + \mcount_2$, and the derivations
    $\derivof{\deriv'_1}{\judgv[\mcount_1]{\fctx_1}{\tctx_1}{\valnat}{\mtypnat_1}}$ and
    $\derivof{\deriv'_2}{\judgv[\mcount_2]{\fctx_2}{\tctx_2}{\valnat}{\mtypnat_2}}$ hold,
    with $+_{i \in I_1} \mtypnat_i \mleq \mtypnat_1$ and $+_{i \in I_2} \mtypnat_i \mleq \mtypnat_2$.
    Taking $\fctx_1,\fctx_2$ and $\tctx_1,\tctx_2$ and $\mcount_1,\mcount_2$ and 
    the previously defined
    $\mtyp_1 \defeq \msetnat{\succTyp{\mtypnat_i}}_{i \in I_1}$ and
    $\mtyp_2 \defeq \msetnat{\succTyp{\mtypnat_i}}_{i \in I_2}$,
    then $\mtyp = \mtyp_1 + \mtyp_2$; 
    the remainder conditions are already checked in the \ih.
    Applying rule $\vruleTypSucc$ to both $\deriv'_1$ and $\deriv'_2$, we obtain
    $\derivof{\deriv_1}{\judgv[\mcount_1]{\fctx_1}{\tctx_1}{\succ{\valnat}}{\mtyp_1}}$ and
    $\derivof{\deriv_2}{\judgv[\mcount_2]{\fctx_2}{\tctx_2}{\succ{\valnat}}{\mtyp_2}}$, with
    $\optmtyp_1 \mleq \mtyp_1$ and $\optmtyp_2 \mleq \mtyp_2$.
  \end{itemize}
\end{itemize}

\noindent
$2 \Rightarrow 1)$
We proceed by induction on $\deriv_1$ and $\deriv_2$.
Since $\val$ is value, the only rules that apply are $\vruleTypAbs$,
$\vruleTypZero$ and $\vruleTypSucc$.
\begin{itemize}
\item $\vruleTypAbs$.
  Then $\val = \lam{\var}{\tm}$, and we have sets $I_1$ and $I_2$ 
  such that $\deriv_1$ is of the form
  \[
    \deriv_1 \defeq
      \indrule{\vruleTypAbs}{
        (\derivof{\deriv_i}{
          \judgv[\mcount_i]
            {\fctx_i}{\tctx_i, \var : \optmtyptwo_i}
            {\tm}
            {\mtypthree_i})_{i \in I_1}
        }
      }{
        \judgv[+_{i \in I_1} \mcount_i]
          {+_{i \in I_1} \fctx_i}{+_{i \in I_1} \tctx_i}
          {\lam{\var}{\tm}}
          {\msetabs{\optmtyptwo_i \to \mtypthree_i}_{i \in I_1}}
      }
  \]
  where $\fctx_1 = +_{i \in I_1} \fctx_i$ and $\tctx_1 = +_{i \in I_1} \tctx_i$ and
  $\mcount_1 = +_{i \in I_1} \mcount_i$ and
  $\mtyp_1 = \msetabs{\optmtyptwo_i \to \mtypthree_i}_{i \in I_1}$.
  And on the other hand $\deriv_2$ is of the form
  \[
    \deriv_2 \defeq
      \indrule{\vruleTypAbs}{
        (\derivof{\deriv_i}{
          \judgv[\mcount_i]
            {\fctx_i}{\tctx_i, \var : \optmtyptwo_i}
            {\tm}
            {\mtypthree_i})_{i \in I_2}
        }
      }{
        \judgv[+_{i \in I_2} \mcount_i]
          {+_{i \in I_2} \fctx_i}{+_{i \in I_2} \tctx_i}
          {\lam{\var}{\tm}}
          {\msetabs{\optmtyptwo_i \to \mtypthree_i}_{i \in I_2}}
      }
  \]
  where $\fctx_2 = +_{i \in I_2} \fctx_i$ and $\tctx_2 = +_{i \in I_2} \tctx_i$ 
  $\mcount_2 = +_{i \in I_2} \mcount_i$ and
  and $\mtyp_2 = \msetabs{\mtyptwo_i \to \mtypthree_i}_{i \in I_2}$.
  Moreover
  $\fctx = \fctx_1 + \fctx_2 = +_{i \in I_1} \fctx_i +_{i \in I_2} \fctx_i$ and
  $\tctx = \tctx_1 + \tctx_2 = +_{i \in I_1} \tctx_i +_{i \in I_2} \tctx_i$ and
  $\mcount = \mcount_1 + \mcount_2 = +_{i \in I_1} \mcount_i +_{i \in I_2} \mcount_i$ and
  $\optmtyp_1 \mleq \msetabs{\mtyptwo_i \to \mtypthree_i}_{i \in I_1}$ and 
  $\optmtyp_2 \mleq \msetabs{\mtyptwo_i \to \mtypthree_i}_{i \in I_2}$.
  By joining the premises of $\deriv_1$ and $\deriv_2$, and applying rule $\vruleTypAbs$
  we yield
  \[
    \deriv \defeq \left(
    \indrule{\ruleTypAbsV}{
      (\derivof{\deriv_i}{
        \judgv[\mcount_i]]
          {\fctx_i}{\tctx_i, \var : \optmtyptwo_i}
          {\tm}{\mtypthree_i}
      })_{i \in I_1 \uplus I_2}
    }{
      \judgv[+_{i \in I_1 \uplus I_2} \mcount_i]
        {+_{i \in I_1 \uplus I_2} \fctx_i}{+_{i \in I_1 \uplus I_2} \tctx_i}
        {\lam{\var}{\tm}}
        {\msetabs{\optmtyptwo_i \to \mtypthree_i}_{i \in I_1 \uplus I_2}}
    }
    \right)
  \]
  and $\optmtyp_1 + \optmtyp_2 \mleq \msetabs{\optmtyptwo_i \to \mtypthree_i}_{i \in I_1 \uplus I_2}$
  holds by \cref{lemCBV:multitype_splitting} (item 3).
\item $\vruleTypZero$.
  Analogous to the previous case.
\item $\vruleTypSucc$.
  Then $\val = \succ{\valnat}$, and we have sets $I_1$ and $I_2$
  such that $\deriv_1$ is of the form
  \[
    \deriv_1 \defeq 
      \indrule{\vruleTypSucc}{
        \derivof{\deriv'_1}{
          \judgv[\mcount_1]
            {\fctx_1}{\tctx_1}{\valnat}{+_{i \in I_1} \mtypnat_i}
        }
      }{
        \judgv[\mcount_1]
          {\fctx_1}{\tctx_1}{\succ{\valnat}}{\msetnat{\succTyp{\mtypnat_i}}_{i \in I_1}}
      }
  \]
  where $\mtyp_1 = \msetnat{\succTyp{\mtypnat_i}}_{i \in I_1}$.
  And on the other hand $\deriv_2$ is of the form
  \[
    \deriv_2 \defeq 
      \indrule{\vruleTypSucc}{
        \derivof{\deriv'_2}{
          \judgv[\mcount_2]
            {\fctx_2}{\tctx_2}{\valnat}{+_{i \in I_2} \mtypnat_i}
        }
      }{
        \judgv[\mcount_2]
          {\fctx_2}{\tctx_2}{\succ{\valnat}}{\msetnat{\succTyp{\mtypnat_i}}_{i \in I_2}}
      }
  \]
  where $\mtyp_2 = \msetnat{\succTyp{\mtypnat_i}}_{i \in I_2}$.
  Moreover
  $\fctx = \fctx_1 + \fctx_2$ and 
  $\tctx = \tctx_1 + \tctx_2$ and 
  $\mcount = \mcount_1 + \mcount_2$ and 
  $\optmtyp_1 \mleq \msetnat{\succTyp{\mtypnat_i}}_{i \in I_1}$ and
  $\optmtyp_2 \mleq \msetnat{\succTyp{\mtypnat_i}}_{i \in I_2}$.
  Let $j = 1, 2$, then let us define the optional multitype $\optmtyptwo_j$ as
  \[
    \optmtyptwo_j \defeq 
      \left\{ 
      \begin{array}{ll} 
        \none                    & \text{if } \optmtyp_j = \none \\ 
        +_{i \in I_j} \mtypnat_i & \text{otherwise}
      \end{array} \right.
  \]
  so that
  $\optmtyptwo_1 \mleq +_{i \in I_1} \mtypnat_i$ and
  $\optmtyptwo_2 \mleq +_{i \in I_2} \mtypnat_i$.
  We can apply the \ih, yielding
  $\derivof{\deriv'}{\judgv[\mcount]{\fctx}{\tctx}{\valnat}{\mtypnat}}$,
  where $\optmtyptwo_1 + \optmtyptwo_2 \mleq \mtypnat$.
  Moreover, $\mtypnat = +_{i \in I_1} \mtypnat_i +_{i \in I_2} \mtypnat_i$
  by \cref{lemCBV:multitype_splitting} (item 3).
  Applying rule $\vruleTypSucc$, we conclude
  $\judgv[\mcount]{\fctx}{\tctx}{\succ{\valnat}}{\msetnat{\succTyp{\mtypnat_i}}_{i \in I_1 \uplus I_2}}$.
\end{itemize}
\end{proof}

\valuesubstitution*

\begin{proof}
We proceed by induction on $\deriv$.
\begin{enumerate}
\item $\vruleTypVarOne$.
  Then $\derivof{\deriv}{\judgv[\emset]{\emptyctx}{\vartwo : \mtyp}{\vartwo}{\mtyp}}$,
  where $\fctx = \emptyctx$ and $(\tctx,\var:\optmtyptwo) = (\vartwo:\mtyp)$
  and $\mcount = \emset$ and $\tm = \vartwo$.
  We consider two subcases, depending on whether $\var = \vartwo$ or not:
  \begin{enumerate}
  \item
    If $\var = \vartwo$, then $\tctx = \emptyctx$ and $\optmtyptwo = \mtyp$.
    Since $\optmtyptwo = \mtyp \mleq \mtyptwo$, then $\mtyptwo = \mtyp$.
    We conclude by letting $\deriv' \defeq \derivtwo$.
  \item
    If $\var \neq \vartwo$,
    then $\tctx = (\vartwo:\mtyp)$ and thus $\optmtyptwo = \none$.
    By hypothesis,
    $\derivof{\derivtwo}{\judgv[\mcounttwo]{\fctxtwo}{\tctxtwo}{\val}{\mtyptwo}}$
    and moreover $\optmtyptwo = \none \mleq \mtyptwo$, 
    which means that $\mtyptwo$ must be of the form $\emsetnu{\nature}$ 
    for some proper nature $\nature$.
    Hence $\fctxtwo = \emptyctx$ and $\tctxtwo = \emptyctx$ and $\mcounttwo = \emset$
    by \cref{lemCBV:typable_values_emptyctx}.
    We conclude by letting $\deriv' \defeq \deriv$.
  \end{enumerate}
\item $\vruleTypVarTwo$.
  Then $\derivof{\deriv}{\judgv[\emset]{\vartwo : \fset{\mtyp}}{\emptyctx}{\vartwo}{\mtyp}}$,
  where $\fctx = \vartwo : \fset{\mtyp}$ and 
  $(\tctx,\var:\optmtyptwo) = \emptyctx$ and $\mcount = \emset$ and $\tm = \vartwo$.
  Moreover, $\tctx = \emptyctx$ and $\optmtyptwo = \none$,
  since $\var$ must be different from $\vartwo$ by the invariant.
  The rest of the proofs is analogous to subcase $\var \neq \vartwo$ from $\vruleTypVarOne$.
\item $\vruleTypAbs$.
  Then
  $\derivof{\deriv}{
    \judgv[+_\iI \mcount_i]
      {+_\iI \fctx_i}{+_\iI\tctx_i,\var:+_\iI\optmtyptwo_i}
      {\lam{\vartwo}{\tmtwo}}
      {\msetnu{\abssym}{\optmtypthree_i \to \mtypfour_i}_\iI}
  }$,
  where 
  $\fctx = +_\iI \fctx_i$ and $\tctx = +_\iI\tctx_i$
  and $\optmtyptwo = +_\iI\optmtyptwo_i$ and $\mcount = +_\iI \mcount_i$ 
  and $\tm = \lam{\vartwo}{\tmtwo}$ and 
  $\mtyp = \msetnu{\abssym}{\optmtypthree_i \to \mtypfour_i}_\iI$.
  The judgment is derived from
  $(\derivof{\deriv_i}{
    \judgv[\mcount_i]
      {\fctx_i}{\tctx_i,\var:\optmtyptwo_i,\vartwo:\optmtypthree_i}
      {\tmtwo}{\mtypfour_i})_\iI
  }$.
  Since $+_\iI \optmtyptwo_i = \optmtyptwo \mleq \mtyptwo$, then by 
  \cref{lemCBV:generalized_value_splitting_merging} there exist 
  family contexts $(\fctxtwo_i)_\iI$, typing contexts $(\tctxtwo_i)_\iI$,
  multi-counters $(\mcounttwo_i)_\iI$ and 
  multitypes $(\mtyptwo_i)_\iI$ such that $\fctxtwo = +_\iI \fctxtwo_i$ and 
  $\tctxtwo = +_\iI \tctxtwo_i$ and $\mcounttwo = +_\iI \mcounttwo_i$
  and for all $\iI$
  $\derivof{\derivtwo_i}{\judgv[\mcounttwo_i]{\fctxtwo_i}{\tctxtwo_i}{\val}{\mtyptwo_i}}$
  and $\optmtyptwo_i \mleq \mtyptwo_i$.
  We can apply the \ih on $\deriv_i$ with $\derivtwo_i$, yielding
  $(\derivof{\deriv'_i}{
    \judgv[\mcount_i + \mcounttwo_i]
      {\fctx_i + \fctxtwo_i}{(\tctx_i,\vartwo:\optmtypthree_i)+\tctxtwo_i}
      {\tmtwo\sub{\var}{\val}}{\mtypfour_i})_\iI
  }$.
  By \cref{lemCBV:relevance} and $\alpha$-conversion
  we can write $(\tctx_i,\vartwo:\optmtypthree_i)+\tctxtwo_i$
  as $(\tctx_i+\tctxtwo_i),\vartwo:\optmtypthree_i$.
  Then we can apply rule $\vruleTypAbs$ to $(\deriv'_i)_\iI$, yielding
  $\derivof{\deriv'}{
    \judgv[+_\iI (\mcount_i + \mcounttwo_i)]
      {+_\iI (\fctx_i + \fctxtwo_i)}{+_\iI (\tctx_i + \tctxtwo_i)}
      {\lam{\vartwo}{\tmtwo\sub{\var}{\val}}}
      {\msetabs{\optmtypthree_i\to\mtypfour_i}_\iI}
  }$, and we are done since $\lam{\vartwo}{\tmtwo\sub{\var}{\val}}
  = (\lam{\vartwo}{\tmtwo})\sub{\var}{\val}$.
\item $\vruleTypApp$.
  Then
  $\derivof{\deriv}{
    \judgv[\mset{\bBeta} + \mcount_1 + \mcount_2]
      {\fctx_1 + \fctx_2}{\tctx_1+\tctx_2,\var:\optmtyptwo_1+\optmtyptwo_2}
      {\tmtwo\,\tmthree}
      {\mtyp}}$, where
  $\fctx = \fctx_1 + \fctx_2$ and $\tctx = \tctx_1 + \tctx_2$ and 
  $\optmtyptwo = \optmtyptwo_1 + \optmtyptwo_2$ and
  $\mcount = \mset{\bBeta} + \mcount_1 + \mcount_2$ and
  $\tm = \tmtwo \, \tmthree$, 
  and the judgment is derived from 
  $\derivof{\deriv_1}{\judgv[\mcount_1]{\fctx_1}{\tctx_1,\var:\optmtyptwo_1}{\tmtwo}{\msetabs{\optmtypthree \to \mtyp}}}$
  and $\derivof{\deriv_2}{\judgv[\mcount_2]{\fctx_2}{\tctx_2,\var:\optmtyptwo_2}{\tmthree}{\mtypthree}}$
  and (1) $\optmtypthree \mleq \mtypthree$.
  By \cref{lemCBV:value_splitting_merging} there exist family contexts 
  $\fctxtwo_1$, $\fctxtwo_2$, typing contexts $\tctxtwo_1$, $\tctxtwo_2$,
  multi-counters $\mcounttwo_1$, $\mcounttwo_2$ 
  and multitypes $\mtyptwo_1$, $\mtyptwo_2$ 
  such that $\fctxtwo = \fctxtwo_1 + \fctxtwo_2$ and
  $\tctxtwo = \tctxtwo_1 + \tctxtwo_2$ and
  $\mcounttwo = \mcounttwo_1 + \mcounttwo_2$
  and for each 
  $i = 1,2$ we have
  $\derivof{\derivtwo_i}{\judgv[\mcounttwo_i]{\fctxtwo_i}{\tctxtwo_i}{\val}{\mtyptwo_i}}$ and
  $\optmtyptwo_i \mleq \mtyptwo_i$.
  By \ih on $\deriv_1$ with $\derivtwo_1$ and $\deriv_2$ with $\derivtwo_2$ we have
  $\derivof{\deriv'_1}{
    \judgv[\mcount_1 + \mcounttwo_1]
      {\fctx_1 + \fctxtwo_1}{\tctx_1 + \tctxtwo_1}
      {\tmtwo\sub{\var}{\val}}{\msetabs{\optmtypthree \to \mtyp}}}$ and
  $\derivof{\deriv'_2}{
    \judgv[\mcount_2 + \mcounttwo_2]
      {\fctx_2 + \fctxtwo_2}{\tctx_2+\tctxtwo_2}
      {\tmthree\sub{\var}{\val}}{\mtypthree}}$.
  We can apply rule $\vruleTypApp$
  with $\deriv'_1$, $\deriv'_2$ and (3) as premises, yielding
  $\derivof{\deriv'}
    {\judgv[\mcount + \mcounttwo]
      {\fctx + \fctxtwo}{\tctx + \tctxtwo}{\tmtwo\sub{\var}{\val}\,\tmthree\sub{\var}{\val}}{\mtyp}}$,
  and we are done since $\tmtwo\sub{\var}{\val}\,\tmthree\sub{\var}{\val}
  = (\tmtwo\,\tmthree)\sub{\var}{\val}$.
\item $\vruleTypZero$.
  Then $\derivof{\deriv}{\judgv[\emset]{\emptyctx}{\emptyctx}{\zero}{\msetnat{\zeroTyp}_\iI}}$,
  where $\fctx = \emptyctx$ and $\tctx = \emptyctx$ and $\optmtyptwo = \none$ 
  and $\mcount = \emset$ and $\tm = \zero$ and $\mtyp = \msetnat{\zeroTyp}_\iI$.
  Since $\none \mleq \mtyptwo$, then $\mtyptwo$ is of
  the form $\emsetnu{\nature}$, for some proper nature $\nature$.
  Moreover, $\fctxtwo = \emptyctx$ and $\tctxtwo = \emptyctx$ and $\mcounttwo = \emset$
  by \cref{lemCBV:typable_values_emptyctx}.
  We conclude by letting $\deriv' \defeq \deriv$, since $\zero\sub{\var}{\val} = \zero$.
\item $\vruleTypSucc$.
  Then 
  $\derivof{\deriv}{
    \judgv[\mcount]
      {\fctx}{\tctx, \var : \optmtyptwo}{\succ{\tmthree}}{\msetnat{\succTyp{\mtypnat_i}}_\iI}}$,
  where $\tm = \succ{\tmthree}$ and $\mtyp = \msetnat{\succTyp{\mtypnat_i}}_\iI$.
  The judgment is derived from
  $\derivof{\deriv_0}{\judgv[\mcount]{\fctx}{\tctx, \var : \optmtyptwo}{\tmthree}{+_\iI \mtypnat_i}}$.
  Applying the \ih on $\deriv_0$ with $\derivtwo$ we obtain
  $\derivof{\deriv'_0}{
  \judgv[\mcount + \mcounttwo]
    {\fctx + \fctxtwo}{\tctx + \tctxtwo}
    {\tmthree\sub{\var}{\val}}{+_\iI \mtypnat_i}}$.
  Applying rule $\vruleTypSucc$ we conclude with
  $\derivof{\deriv'}{
    \judgv[\mcount + \mcounttwo]
      {\fctx + \fctxtwo}{\tctx + \tctxtwo}
      {\succ{\tmthree\sub{\var}{\val}}}{\msetnat{\succTyp{\mtypnat_i}}_\iI}}$, 
  and we are done since $\succ{\tmthree\sub{\var}{\val}} = (\succ{\tmthree})\sub{\var}{\val}$.
\item $\vruleTypIfZero$.
  Then
  $\derivof{\deriv}{
    \judgv[\mset{\bIfZero} + \mcount_1 + \mcount_2]
      {\fctx_1 + \fctx_2}{\tctx_1 + \tctx_2, \var : \optmtyptwo_1 + \optmtyptwo_2}
      {\ifz{\tmtwo}{\tmthree}{\vartwo}{\tmfour}}
      {\mtyp}}$
  where $\fctx = \fctx_1+\fctx_2$, $\tctx = \tctx_1+\tctx_2$ and
  $\optmtyptwo = \optmtyptwo_1 + \optmtyptwo_2$ and
  $\mcount = \mset{\bIfZero} + \mcount_1 + \mcount_2$ and
  $\tm = \ifz{\tmtwo}{\tmthree}{\vartwo}{\tmfour}$.
  The judgment is derived from
  $\derivof{\deriv_1}{\judgv[\mcount_1]{\fctx_1}{\tctx_1,\var:\optmtyptwo_1}{\tmtwo}{\msetnat{\zeroTyp}}}$
  and $\derivof{\deriv_2}{\judgv[\mcount_2]{\fctx_2}{\tctx_2,\var:\optmtyptwo_2}{\tmthree}{\mtyp}}$.
  We may assume $\vartwo \notin \set{\var}\cup\fv{\val}$ by $\alpha$-conversion.
  By \cref{lemCBV:value_splitting_merging},
  there exist family contexts $\fctxtwo_1$, $\fctxtwo_2$, 
  typing contexts $\tctxtwo_1$, $\tctxtwo_2$,
  multi-counters $\mcounttwo_1$, $\mcounttwo_2$
  and multitypes $\mtyptwo_1$, $\mtyptwo_2$
  such that $\fctxtwo = \fctxtwo_1 + \fctxtwo_2$ and 
  $\tctxtwo = \tctxtwo_1 + \tctxtwo_2$ and
  $\mcounttwo = \mcounttwo_1 + \mcounttwo_2$
  and for each $i = 1, 2$ we have
  $\derivof{\derivtwo_i}{\judgv[\mcounttwo_i]{\fctxtwo_i}{\tctxtwo_i}{\val}{\mtyptwo_i}}$
  and $\optmtyptwo_i \mleq \mtyptwo_i$.
  By \ih on $\deriv_1$ with $\derivtwo_1$ and $\deriv_2$ with $\derivtwo_2$ we have
  $\derivof{\deriv'_1}{
    \judgv[\mcount_1 + \mcounttwo_1]
      {\fctx_1+\fctxtwo_1}{\tctx_1+\tctxtwo_1}
      {\tmtwo\sub{\var}{\val}}{\msetnat{\zeroTyp}}}$ and 
  $\derivof{\deriv'_2}{
    \judgv[\mcount_1 + \mcounttwo_1]
      {\fctx_2+\fctxtwo_2}{\tctx_2+\tctxtwo_2}
      {\tmthree\sub{\var}{\val}}{\mtyp}}$.
  Applying rule $\vruleTypIfZero$ we conclude with
  $\derivof{\deriv'}{
    \judgv[\mcount + \mcounttwo]
      {\fctx + \fctxtwo}{\tctx + \tctxtwo}
      {\ifz{\tmtwo\sub{\var}{\val}}{\tmthree\sub{\var}{\val}}{\vartwo}{\tmfour\sub{\var}{\val}}}
      {\mtyp}}$,
  and we are done since 
  $\ifz{\tmtwo\sub{\var}{\val}}{\tmthree\sub{\var}{\val}}{\vartwo}{\tmfour\sub{\var}{\val}} 
  = (\ifz{\tmtwo}{\tmthree}{\vartwo}{\tmfour})\sub{\var}{\val}$
\item $\vruleTypIfSucc$.
  Then
  $\derivof{\deriv}{
    \judgv[\mset{\bIfSucc} + \mcount_1 + \mcount_2]
      {\fctx_1 + \fctx_2}{\tctx_1 + \tctx_2, \var : \optmtyptwo_1 + \optmtyptwo_2}
      {\ifz{\tmtwo}{\tmthree}{\vartwo}{\tmfour}}{\mtyp}}$
  where $\fctx = \fctx_1 + \fctx_2$ and $\tctx = \tctx_1 + \tctx_2$ and
  $\optmtyptwo = \optmtyptwo_1 + \optmtyptwo_2$ and
  $\mcount = \mset{\bIfSucc} + \mcount_1 + \mcount_2$ and
  $\tm = \ifz{\tmtwo}{\tmthree}{\vartwo}{\tmfour}$.
  The judgment is derived from
  $\derivof{\deriv_1}{\judgv[\mcount_1]{\fctx_1}{\tctx_1,\var:\optmtyptwo_1}{\tmtwo}{\msetnat{\succTyp{\mtypnat}}}}$
  and 
  $\derivof{\deriv_2}{\judgv[\mcount_2]{\fctx_2}{\tctx_2,\var:\optmtyptwo_2,\vartwo:\optmtypnat}{\tmfour}{\mtyp}}$
  and (1) $\optmtypnat \mleq \mtypnat$.
  We may assume $\vartwo \notin \set{\var}\cup\fv{\val}$ by $\alpha$-conversion.
  By \cref{lemCBV:value_splitting_merging}
  there exist family contexts $\fctxtwo_1$, $\fctxtwo_2$, 
  typing contexts $\tctxtwo_1$, $\tctxtwo_2$,
  multi-counters $\mcounttwo_1$, $\mcounttwo_2$
  and multitypes $\mtyptwo_1$, $\mtyptwo_2$
  such that $\fctxtwo = \fctxtwo_1 + \fctxtwo_2$ and
  $\tctxtwo = \tctxtwo_1 + \tctxtwo_2$ and
  $\mcounttwo = \mcounttwo_1 + \mcounttwo_2$
  and such that for each $i = 1, 2$ we have
  $\derivof{\derivtwo_i}{\judgv[\mcounttwo_i]{\fctxtwo_i}{\tctxtwo_i}{\val}{\mtyptwo_i}}$
  and $\optmtyptwo_i \mleq \mtyptwo_i$.
  By \ih on $\deriv_1$ with $\derivtwo_1$ and $\deriv_2$ with $\derivtwo_2$ we have
  $\derivof{\deriv'_1}{
    \judgv[\mcount_1 + \mcounttwo_1]
      {\fctx_1+\fctxtwo_1}{\tctx_1+\tctxtwo_1}
      {\tmtwo\sub{\var}{\val}}
      {\msetnat{\succTyp{\mtypnat}}}}$ and
  $\derivof{\deriv'_2}{
    \judgv[\mcount_2 + \mcounttwo_2]
      {\fctx_2+\fctxtwo_2}{(\tctx_2,\vartwo:\optmtypnat)+\tctxtwo_2}
      {\tmfour\sub{\var}{\val}}{\mtyp}}$.
  By \cref{lemCBV:relevance} and $\alpha$-conversion
  we can write $(\tctx_2,\vartwo:\optmtypnat)+\tctxtwo_2$
  as $\tctx_2+\tctxtwo_2,\vartwo:\optmtypnat$.
  Applying rule $\vruleTypIfSucc$ with $\deriv'_1$, $\deriv'_2$ and (1)
  as premises we conclude with
  $\derivof{\deriv'}{
    \judgv[\mcount + \mcounttwo]
      {\fctx+\fctxtwo}{\tctx+\tctxtwo}
      {\ifz{\tmtwo\sub{\var}{\val}}{\tmthree\sub{\var}{\val}}{\vartwo}{\tmfour\sub{\var}{\val}}}
      {\mtyp}}$.
\item $\vruleTypFix$.
  Then
  $\derivof{\deriv}{
    \judgv[\mset{\bFix} + \mcount_0 +_\iI \mcount_i]
      {\fctx_0+_\iI\fctx_i}{\tctx_0+_\iI\tctx_i,\var:\optmtyptwo_0+_\iI\optmtyptwo_i}
      {\fix{\vartwo}{\tmtwo}}{\mtyp}}$ 
  where $\fctx = \fctx_0+_\iI\fctx_i$ and $\tctx = \tctx_0+_\iI\tctx_i$ and
  $\optmtyptwo = \optmtyptwo_0+_\iI\optmtyptwo_i$ and
  $\mcount = \mset{\bFix} + \mcount_0 +_\iI \mcount_i$ and
  $\tm = \fix{\vartwo}{\tmtwo}$.
  The judgment is derived from
  $\derivof{\deriv_0}{\judgv[\mcount_0]{\fctx_0,\vartwo:\fset{\mtypthree_i}_\iI}{\tctx_0,\var:\optmtyptwo_0}{\tmtwo}{\mtyp}}$
  and $\derivof{\deriv_i}{\judgv[\mcount_i]{\fctx_i}{\tctx_i,\var:\optmtyptwo_i}{\fix{\vartwo}{\tmtwo}}{\mtypthree_i}}$
  for each $\iI$.
  We may assume $\vartwo \notin \set{\var}\cup\fv{\val}$ by $\alpha$-conversion.
  By \cref{lemCBV:generalized_value_splitting_merging},
  there exist family contexts $\fctxtwo_0$, $(\fctxtwo_i)_\iI$,
  typing contexts $\tctxtwo_0$, $(\tctxtwo_i)_\iI$,
  multi-counters $\mcounttwo_0$, $(\mcounttwo_i)_\iI$
  and multitypes $\mtyptwo_0$, $(\mtyptwo_i)_\iI$
  such that $\fctxtwo = \fctxtwo_0 +_\iI \fctxtwo_i$ and
  $\tctxtwo = \tctxtwo_0 +_\iI \tctxtwo_i$ and
  $\mcounttwo = \mcounttwo_0 +_\iI \mcounttwo_i$
  and such that
  $\derivof{\derivtwo_i}{\judgv[\mcounttwo_i]{\fctxtwo_i}{\tctxtwo_i}{\val}{\mtyptwo_i}}$
  and $\optmtyptwo_i \mleq \mtyptwo_i$
  for each $i \in \set{0} \cup I$.
  By \ih on $\deriv_0$ with $\derivtwo_0$ and $\deriv_i$ with $\derivtwo_i$ for
  each $\iI$, we have
  $\derivof{\deriv'_0}{
    \judgv[\mcount_0 + \mcounttwo_0]
      {(\fctx_0,\vartwo:\fset{\mtypthree_i}_\iI)+\fctxtwo_0}{\tctx_0+\tctxtwo_0}
      {\tmtwo\sub{\var}{\val}}{\mtyp}}$ and
  $\derivof{\deriv'_i}{
    \judgv[\mcount_i + \mcounttwo_i]
      {\fctx_i+\fctxtwo_i}{\tctx_i+\tctxtwo_i}
      {\fix{\vartwo}{\tmtwo\sub{\var}{\val}}}
      {\mtypthree_i}}$ for each $\iI$.
  By \cref{lemCBV:relevance} and $\alpha$-conversion
  we can write $(\fctx_0,\vartwo:\fset{\mtypthree_i}_\iI)+\fctxtwo_0$
  as $\fctx_0+\fctxtwo_0,\vartwo:\fset{\mtypthree_i}_\iI$.
  Applying rule $\vruleTypFix$ with $\deriv'_0$ and $(\deriv'_i)_\iI$ as premises,
  we conclude with
  $\derivof{\deriv'}{
    \judgv[\mcount + \mcounttwo]
      {\fctx+\fctxtwo}{\tctx+\tctxtwo}
      {\fix{\vartwo}{\tmtwo\sub{\var}{\val}}}{\mtyp}}$,
  and we are done since 
  $\fix{\vartwo}{\tmtwo\sub{\var}{\val}} 
  = (\fix{\vartwo}{\tmtwo})\sub{\var}{\val}$
\end{enumerate}
\end{proof}

\substitution*

\begin{proof}
We proceed by induction on $\deriv$.
\begin{enumerate}
\item $\vruleTypVarOne$.
  Then $\derivof{\deriv}{\judgv[\emset]{\emptyctx}{\vartwo : \mtyp}{\vartwo}{\mtyp}}$,
  where $\fctx, \var: \fset{\mtyptwo_i}_\iI = \emptyctx$ and $\tctx = \vartwo:\mtyp$ and
  $\mcount = \emset$ and $\tm = \vartwo$.
  By the invariant we have $\var \neq \vartwo$, hence $I = \emptyset$.
  We conclude by letting $\deriv' \defeq \deriv$.
\item $\vruleTypVarTwo$.
  Then $\derivof{\deriv}{\judgv[\emset]{\vartwo : \fset{\mtyp}}{\emptyctx}{\vartwo}{\mtyp}}$,
  where $(\fctx, \var : \fset{\mtyptwo_i}_\iI) = \vartwo : \fset{\mtyp}$ and
  $\tctx = \emptyctx$, $\mcount = \emset$ and $\tm = \vartwo$.
  We consider two subcases, depending on whether $\var = \vartwo$ or not:
  \begin{enumerate}
  \item
    If $\var = \vartwo$, then $\fctx = \emptyctx$ and 
    $\fset{\mtyptwo_i}_\iI = \fset{\mtyp}$, \ie $I$ is a singleton,
    so we have only one judgment of the form
    $\judgv[\mcounttwo]{\fctxtwo}{\tctxtwo}{\tmsix}{\mtyptwo}$
    and $\mtyptwo = \mtyp$.
    We conclude by letting $\deriv' \defeq \derivtwo$.
  \item
    If $\var \neq \vartwo$,
    then $\fctx = \vartwo:\fset{\mtyp}$ and $\fset{\mtyptwo_i}_\iI = \efset$,
    that is, $I = \emptyset$.
    We conclude by letting $\deriv' \defeq \deriv$.
  \end{enumerate}
\item $\vruleTypAbs$.
  Then
  $\derivof{\deriv}{
    \judgv[+_\jJ \mcount_j]
      {+_\jJ \fctx_j, \var : +_\jJ \fset{\mtyptwo_i}_{i \in I_j}}{+_\jJ \tctx_j}
      {\lam{\vartwo}{\tmtwo}}
      {\msetabs{\optmtypthree_j \to \mtypfour_j}_\jJ}}$,
  where $\fctx = +_\jJ \fctx_j$ and $\tctx = +_\jJ \tctx_j$ and
  $\mcount = +_\jJ \mcount_j$ and $\tm = \lam{\vartwo}{\tmtwo}$ 
  and $\mtyp = \msetabs{\optmtypthree_j \to \mtypfour_j}_\jJ$ and
  we can write $I$ as $\uplus_\jJ (I_j)_\jJ$ so that
  $\fset{\mtyptwo_i}_\iI = +_\jJ \fset{\mtyptwo_i}_{i \in I_j}$.
  The judgment is derived from premises of the form
  $\derivof{\deriv_j}{
    \judgv[\mcount_j]
      {\fctx_j, \var : \fset{\mtyptwo_i}_{i \in I_j}}{\tctx_j, \vartwo : \optmtypthree_j}
      {\tmtwo}
      {\mtypfour_j}}$
  for each $\jJ$.
  Then for all $\jJ$ we can apply the \ih on $\deriv_j$ with $\derivtwo_i$ for all $i \in I_j$,
  yielding
  $(\derivof{\deriv'_j}{
    \judgv[\mcount_j +_{i \in I_j} \mcounttwo_i]
      {\fctx_j +_{i \in I_j} \fctxtwo_i}{(\tctx_j,\vartwo : \optmtypthree_j )+_{i \in I_j} \tctxtwo_i}
      {\tmtwo\sub{\var}{\tmsix}}
      {\mtypfour_j}})_\jJ$.
  By \cref{lemCBV:relevance} and $\alpha$-conversion
  we can write $(\tctx_j,\vartwo : \optmtypthree_j )+_{i \in I_j} \tctxtwo_i$
  as $\tctx_j+_{i \in I_j} \tctxtwo_i,\vartwo : \optmtypthree_j$.
  We conclude 
  $\derivof{\deriv'}{
    \judgv[\mcount +_\iI \mcounttwo_i]
      {\fctx +_\iI \fctxtwo_i}{\tctx +_\iI \tctxtwo_i}
      {\lam{\vartwo}{\tmtwo\sub{\var}{\tmsix}}}
      {\msetabs{\optmtypthree_j \to \mtypfour_j}}_\jJ}$.
  by applying rule $\vruleTypAbs$ to $(\deriv'_j)_\jJ$,
  and we are done since $\lam{\vartwo}{\tmtwo\sub{\var}{\val}}
  = (\lam{\vartwo}{\tmtwo})\sub{\var}{\val}$.
\item $\vruleTypApp$.
  Then
  $\derivof{\deriv}{
    \judgv[\mset{\bBeta} + \mcount_1 + \mcount_2]
      {\fctx_1 + \fctx_2, \var : \fset{\mtyptwo_j}_\jJ + \fset{\mtyptwo_k}_\kK}{\tctx_1 + \tctx_2}
      {\tmtwo \, \tmthree}
      {\mtyp}}$,
  where $\fctx = \fctx_1 + \fctx_2$ and $\tctx = \tctx_1 + \tctx_2$ and
  $\mcount = \mset{\bBeta} + \mcount_1 + \mcount_2$ and $\tm = \tmtwo \, \tmthree$,
  and we can write $I$ as $J \uplus K$ so that
  $\fset{\mtyptwo_i}_\iI = \fset{\mtyptwo_j}_\jJ + \fset{\mtyptwo_k}_\kK$.
  The judgment is derived from
  $\derivof{\deriv_1}{
    \judgv[\mcount_1]
      {\fctx_1, \var : \fset{\mtyptwo_j}_\jJ}{\tctx_1}
      {\tmtwo}
      {\msetabs{\optmtypthree \to \mtyp}}}$ and
  $\derivof{\deriv_2}{
    \judgv[\mcount_2]
      {\fctx_2, \var : \fset{\mtyptwo_k}_\kK}{\tctx_2}
      {\tmthree}
      {\mtypthree}}$,
  where (1) $\optmtypthree \mleq \mtypthree$.
  We can apply \ih on $\deriv_1$ and $(\derivtwo_j)_\jJ$, yielding
  $\derivof{\deriv'_1}{
    \judgv[\mcount_1 +_\jJ \mcounttwo_j]
      {\fctx_1 +_\jJ \fctxtwo_j}{\tctx_1 +_\jJ \tctxtwo_j}
      {\tmtwo\sub{\var}{\tmsix}}
      {\msetabs{\optmtypthree \to \mtyp}}}$.
  On the other hand, we can apply \ih on $\deriv_2$ and $(\derivtwo_k)_\kK$, yielding
  $\derivof{\deriv'_2}{
    \judgv[\mcount_2 +_\kK \mcounttwo_k]
      {\fctx_2 +_\kK \fctxtwo_k}{\tctx_2 +_\kK \tctxtwo_k}
      {\tmthree\sub{\var}{\tmsix}}
      {\mtypthree}}$.
  Applying rule $\vruleTypApp$ with $\deriv'_1$, $\deriv'_2$, and (1) as premises
  we conclude with
  $\derivof{\deriv'}{
    \judgv[\mcount +_\iI \mcounttwo_i]
      {\fctx +_\iI \fctxtwo_i}{\tctx +_\iI \tctxtwo_i}
      {\tmtwo\sub{\var}{\tmsix}\,\tmthree\sub{\var}{\tmsix}}
      {\mtyp}}$,
  and we are done since $\tmtwo\sub{\var}{\tmsix} \, \tmthree\sub{\var}{\tmsix}
  = (\tmtwo \, \tmthree)\sub{\var}{\tmsix}$.
\item $\vruleTypZero$.
  Then $\derivof{\deriv}{\judgv[\emset]{\emptyctx}{\emptyctx}{\zero}{\msetnat{\zeroTyp}_\jJ}}$,
  where $\fctx, \var : \fset{\mtyptwo_i}_\iI = \emptyctx$ and $\tctx = \emptyctx$
  and $\mcount = \emset$ and $\tm = \zero$ and $\mtyp = \msetnat{\zeroTyp}_\jJ$.
  Hence $\fset{\mtyptwo_i}_\iI = \efset$, then $I = \emptyset$.
  We conclude by letting $\deriv' \defeq \deriv$, since $\zero\sub{\var}{\tmsix} = \zero$.
\item $\vruleTypSucc$.
  Then
  $\derivof{\deriv}{
    \judgv[\mcount]
      {\fctx, \var : \fset{\mtyptwo_i}_\iI}{\tctx}
      {\succ{\tmthree}}
      {\msetnat{\succTyp{\mtypnat_j}}_\jJ}}$,
  where $\tm = \succ{\tmthree}$ and $\mtyp = \msetnat{\succTyp{\mtypnat_j}}_\jJ$.
  The judgment is derived from
  $\derivof{\deriv_0}{
    \judgv[\mcount]
      {\fctx, \var : \fset{\mtyptwo_i}_\iI}{\tctx}{\tmthree}{+_\jJ \mtypnat_j}}$.
  We can apply \ih on $\deriv_0$ with $(\derivtwo_i)_\iI$, yielding
  $\derivof{\deriv'_0}{
    \judgv[\mcount +_\iI \mcounttwo_i]
      {\fctx +_\iI \fctxtwo_i}{\tctx +_\iI \tctxtwo_i}
      {\tmthree\sub{\var}{\tmsix}}
      {+_\jJ \mtypnat_j}}$.
  Applying rule $\vruleTypSucc$, we conclude with
  $\derivof{\deriv'}{
    \judgv[\mcount +_\iI \mcounttwo_i]
      {\fctx +_\iI \fctxtwo_i}{\tctx +_\iI \tctxtwo_i}
      {\succ{\tmthree\sub{\var}{\tmsix}}}
      {\msetnat{\succTyp{\mtypnat_j}}_\jJ}}$,
  and we are done since $\succ{\tmthree\sub{\var}{\tmsix}} 
  = (\succ{\tmthree})\sub{\var}{\tmsix}$.
\item $\vruleTypIfZero$.
  Then
  $\derivof{\deriv}{
    \judgv[\mset{\bIfZero}+\mcount_1+\mcount_2]
      {\fctx_1 + \fctx_2, \var:\fset{\mtyptwo_j}_\jJ + \fset{\mtyptwo_j}_\kK}{\tctx_1 + \tctx_2}
      {\ifz{\tmtwo}{\tmthree}{\vartwo}{\tmfour}}
      {\mtyp}}$
  where $\fctx = \fctx_1+\fctx_2$ and $\tctx = \tctx_1+\tctx_2$ and
  $\mcount = \mset{\bIfZero}+\mcount_1+\mcount_2$ and
  $\tm = \ifz{\tmtwo}{\tmthree}{\vartwo}{\tmfour}$, and
  $I = J \uplus K$ so that $\fset{\mtyptwo_i}_\iI = 
  \fset{\mtyptwo_j}_\jJ + \fset{\mtyptwo_k}_\kK$.
  The judgment is derived from
  $\derivof{\deriv_1}{\judgv[\mcount_1]{\fctx_1,\var:\fset{\mtyptwo_j}_\jJ}{\tctx_1}{\tmtwo}{\msetnat{\zeroTyp}}}$
  and $\derivof{\deriv_2}{\judgv[\mcount_2]{\fctx_2,\var:\fset{\mtyptwo_k}_\kK}{\tctx_2}{\tmthree}{\mtyp}}$.
  By $\alpha$-conversion we may assume $\vartwo \notin \set{\var} \cup \fv{\tmsix}$.
  By \ih on both $\deriv_1$ with $(\derivtwo_j)_\jJ$ and $\deriv_2$ with $(\deriv_k)_\kK$ we have
  $\derivof{\deriv'_1}{
    \judgv[\mcount_1+_\jJ\mcounttwo_j]
      {\fctx_1+_\jJ\fctxtwo_j}{\tctx_1+_\jJ\tctxtwo_j}
      {\tmtwo\sub{\var}{\tmsix}}{\msetnat{\zeroTyp}}}$
  and
  $\derivof{\deriv'_2}{
    \judgv[\mcount_2+_\kK\mcounttwo_k]
      {\fctx_2+_\kK\fctxtwo_k}{\tctx_2+_\kK\tctxtwo_k}
      {\tmthree\sub{\var}{\tmsix}}{\mtyp}}$.
  Applying rule $\vruleTypIfZero$, we conclude with
  $\derivof{\deriv'}{
    \judgv[\mcount+_\iI\mcounttwo_i]
      {\fctx+_\iI\fctxtwo}{\tctx+_\iI\tctxtwo_i}
      {\ifz{\tmtwo\sub{\var}{\tmsix}}{\tmthree\sub{\var}{\tmsix}}{\vartwo}{\tmfour\sub{\var}{\tmsix}}}
      {\mtyp}}$,
  and we are done since $\ifz{\tmtwo\sub{\var}{\tmsix}}{\tmthree\sub{\var}{\tmsix}}{\vartwo}{\tmfour\sub{\var}{\tmsix}}
  = (\ifz{\tmtwo}{\tmthree}{\vartwo}{\tmfour})\sub{\var}{\tmsix}$.
\item $\vruleTypIfSucc$.
  Then
  $\derivof{\deriv}{
    \judgv[\mset{\bIfSucc}+\mcount_1+\mcount_2]
      {\fctx_1+\fctx_2+\var:\fset{\mtyptwo_j}_\jJ + \fset{\mtyptwo_k}_\kK}{\tctx_1+\tctx_2}
      {\ifz{\tmtwo}{\tmthree}{\vartwo}{\tmfour}}
      {\mtyp}}$
  where $\fctx = \fctx_1+\fctx_2$ and $\tctx = \tctx_1+\tctx_2$ and
  $\mcount = \mset{\bIfSucc}+\mcount_1+\mcount_2$ and
  $\tm = \ifz{\tmtwo}{\tmthree}{\vartwo}{\tmfour}$, and $I = J \uplus K$,
  so that $\fset{\mtyptwo_i}_\iI = \fset{\mtyptwo_j}_\jJ + \fset{\mtyptwo_k}_\kK$.
  The judgment is derived from
  $\derivof{\deriv_1}{\judgv[\mcount_1]{\fctx_1,\var:\fset{\mtyptwo_j}_\jJ}{\tctx_1}{\tmtwo}{\msetnat{\succTyp{\mtypnat}}}}$
  and $\derivof{\deriv_2}{\judgv[\mcount_2]{\fctx_2,\var:\fset{\mtyptwo_k}_\kK}{\tctx_2,\vartwo:\optmtypnat}{\tmfour}{\mtyp}}$
  and (1) $\optmtypnat \mleq \mtypnat$.
  By $\alpha$-conversion we may assume $\vartwo \notin \set{\var}\cup\fv{\tmsix}$.
  Then by the \ih on both $\deriv_1$ with $(\derivtwo_j)_\jJ$ and $\deriv_2$ with $(\derivtwo_k)_\kK$ we have
  $\derivof{\deriv'_1}{
    \judgv[\mcount_1+_\jJ\mcounttwo_j]
      {\fctx_1+_\jJ\fctxtwo_j}{\tctx_1+_\jJ\tctxtwo_j}
      {\tmtwo\sub{\var}{\tmsix}}
      {\msetnat{\succTyp{\mtypnat}}}}$
  and
  $\derivof{\deriv'_2}{
    \judgv[\mcount_2+_\kK\mcounttwo_k]
      {\fctx_2+_\kK\fctxtwo_k}{(\tctx_2,\vartwo:\optmtypnat)+_\kK\tctxtwo_k}
      {\tmfour\sub{\var}{\tmsix}}
      {\mtyp}}$. 
  By \cref{lemCBV:relevance} and $\alpha$-conversion
  we can write $(\tctx_2,\vartwo:\optmtypnat)+_\kK\tctxtwo_k$
  as $\tctx_2+_\kK\tctxtwo_k,\vartwo:\optmtypnat$.
  Applying rule $\vruleTypIfSucc$ with $\deriv'_1$, $\deriv'_2$ and (1) as premises
  we conclude with
  $\derivof{\deriv'}{
    \judgv[\mcount+_\iI\mcounttwo_i]
      {\fctx+_\iI\fctxtwo_i}{\tctx+_\iI\tctxtwo_i}
      {\ifz{\tmtwo\sub{\var}{\tmsix}}{\tmthree\sub{\var}{\tmsix}}{\vartwo}{\tmfour\sub{\var}{\tmsix}}}
      {\mtyp}}$.
\item $\vruleTypFix$.
  Then
  $\derivof{\deriv}{
    \judgv[\mset{\bFix}+\mcount_0+_\jJ\mcount_j]
      {\fctx_0 +_\jJ \fctx_j, \var : \fset{\mtyptwo_i}_{i\in I_0} +_\jJ \fset{\mtyptwo_i}_{i\in I_j}}
      {\tctx_0 +_\jJ \tctx_j}
      {\fix{\vartwo}{\tmtwo}}
      {\mtyp}}$
  where $\fctx = \fctx_0 +_\jJ \fctx_j$ and $\tctx = \tctx_0 +_\jJ \tctx_j$ and
  $\mcount = \mset{\bFix} + \mcount_0 +_\jJ \mcount_j$ and $\tm = \fix{\vartwo}{\tmtwo}$, 
  and we can write $I = I_0 \uplus_\jJ I_j$, so that 
  $\fset{\mtyptwo_i}_\iI = \fset{\mtyptwo_i}_{i \in I_0} +_\jJ \fset{\mtyptwo_i}_{i \in I_j}$.
  The judgment is derived from
  $\derivof{\deriv_0}{
    \judgv[\mcount_0]
      {\fctx_0, \var : \fset{\mtyptwo_i}_{i \in I_0}, \vartwo:\fset{\mtypthree_j}_\jJ}{\tctx_0}
      {\tmtwo}{\mtyp}}$ and
  $\derivof{\deriv_j}{
    \judgv[\mcount_j]
      {\fctx_j, \var : \fset{\mtyptwo_i}_{i \in I_j}}{\tctx_j}
      {\fix{\vartwo}{\tmtwo}}{\mtypthree_j}}$ for each $\jJ$.
  By $\alpha$-conversion we may assume $\vartwo \notin \set{\var}\cup\fv{\tmsix}$.
  Then by the \ih on $\deriv_0$ with $(\derivtwo_i)_{i \in I_0}$ and $(\deriv_j)_\jJ$
  with $(\derivtwo_i)_{i \in I_j}$ we have
  $\derivof{\deriv'_0}{
    \judgv[\mcount_0+_{i\in I_0}\mcounttwo_i]
      {(\fctx_0,\vartwo:\fset{\mtypthree_j}_\jJ)+_{i\in I_0} \fctxtwo_i}
      {\tctx_0+_{i\in I_0} \tctxtwo_i}
      {\tmtwo\sub{\var}{\tmsix}}{\mtyp}}$
  and
  $\derivof{\deriv'_j}{
    \judgv[\mcount_j+_{i \in I_j} \mcounttwo_i]
      {\fctx_j+_{i \in I_j}\fctxtwo_i}{\tctx_j+_{i\in I_j} \tctxtwo_i}
      {(\fix{\vartwo}{\tmtwo})\sub{\var}{\tmsix}}{\mtypthree_j}}$
  for each $\jJ$, with $(\fix{\vartwo}{\tmtwo})\sub{\var}{\tmsix} = 
  \fix{\vartwo}{\tmtwo\sub{\var}{\tmsix}}$.
  By \cref{lemCBV:relevance} and $\alpha$-conversion
  we can write $(\fctx_0,\vartwo:\fset{\mtypthree_j}_\jJ)+_{i\in I_0} \fctxtwo_i$
  as $\fctx_0+_{i\in I_0} \fctxtwo_i,\vartwo:\fset{\mtypthree_j}_\jJ$.
  Applying rule $\vruleTypFix$ we conclude with
  $\derivof{\deriv'}{
    \judgv[\mcount +_\iI \mcounttwo_i]
      {\fctx +_\iI \fctxtwo_i}{\tctx +_\iI \tctxtwo_i}
      {\fix{\vartwo}{\tmtwo\sub{\var}{\tmsix}}}{\mtyp}}$.
\end{enumerate}
\end{proof}

\subjectreduction*

\begin{proof}
By induction on the derivation of $\tm \tov{\rulename} \tm'$.
\begin{itemize}
\item $\vruleToBeta$.
  Then $\tm = (\lam{\var}{\tmtwo}) \, \val \tov{\bBeta} \tmtwo\sub{\var}{\val} = \tm'$.
  The conclusion of $\deriv$ can then only be derived using rule $\vruleTypApp$,
  so $\deriv$ has the form:
  \[
    \indrule{\vruleTypApp}{
      \indrule{\vruleTypAbs}{
        \derivof{\deriv_1}{\judgv[\mcount_1]{\fctx_1}{\tctx_1, \var : \optmtyptwo}{\tmtwo}{\mtyp}}
      }{
        \judgv[\mcount_1]
          {\fctx_1}{\tctx_1}
          {\lam{\var}{\tmtwo}}
          {\msetabs{\optmtyptwo \to \mtyp}}
      }
      (1)\ \optmtyptwo \mleq \mtyptwo
      \HS
      \derivof{\deriv_2}{\judgv[\mcount_2]{\fctx_2}{\tctx_2}{\val}{\mtyptwo}}
    }{
      \judgv[\mset{\bBeta} + \mcount_1 + \mcount_2]
        {\fctx_1 + \fctx_2}{\tctx_1 + \tctx_2}
        {(\lam{\var}{\tmtwo}) \, \val}
        {\mtyp}
    }
  \]
  where $\fctx = \fctx_1 + \fctx_2$ and $\tctx = \tctx_1 + \tctx_2$
  and $\mcount = \mset{\bBeta} + \mcount_1 + \mcount_2$.
  Given (1), we can apply \cref{lemCBV:value_substitution} to $\deriv_1$ with $\deriv_2$,
  yielding
  $\derivof{\deriv'}{
    \judgv[\mcount_1 + \mcount_2]
      {\fctx_1 + \fctx_2}{\tctx_1 + \tctx_2}{\tmtwo\sub{\var}{\val}}{\mtyp}}$,
   and we conclude with $\mcount' = \mcount_1 + \mcount_2$, since 
   $\mset{\bBeta} + \mcount' =\mset{\bBeta} + \mcount_1 + \mcount_2 = \mcount$. 
\item $\vruleToIfZero$.
  Then $\tm = \ifz{\zero}{\tmtwo}{\var}{\tmthree} \tov{\bIfZero} \tmtwo = \tm'$.
  The conclusion of $\deriv$ can then only be derived using rule $\vruleTypIfZero$,
  so $\deriv$ has the form:
  \[
    \indrule{\vruleTypIfZero}{
      \indrule{\vruleTypZero}{
        \emptyPremise
      }{
        \judgv[\emset]
          {\emptyctx}{\emptyctx}
          {\zero}
          {\msetnat{\zeroTyp}}
      }
        \derivof{\deriv'}{\judgv[\mcount']{\fctx}{\tctx}{\tmtwo}{\mtyp}}
 
    }{
      \judgv[\mset{\bIfZero} + \mcount']{\fctx}{\tctx}{\ifz{\zero}{\tmtwo}{\var}{\tmthree}}{\mtyp}
    }
  \]
  where $\mcount = \mset{\bIfZero} + \mcount'$.
  It suffices to note that $\deriv'$ is the needed derivation, so we are done.
\item $\vruleToIfSucc$.
  Then $\tm = \ifz{\succ{\valnat}}{\tmtwo}{\var}{\tmthree} \tov{\bIfSucc}
  \tmthree\sub{\var}{\valnat}= \tm'$.
  The conclusion of $\deriv$ can then only be derived using rule $\vruleTypIfSucc$,
  so $\deriv$ has the form:
  \[
    \indrule{\vruleTypIfSucc}{
      \indrule{\vruleTypSucc}{
        \derivof{\deriv_1}{\judgv[\mcount_1]{\fctx_1}{\tctx_1}{\valnat}{\mtypnat}}
      }{
        \judgv[\mcount_1]{\fctx_1}{\tctx_1}{\succ{\valnat}}{\msetnat{\succTyp{\mtypnat}}}
      }
      \optmtypnat \mleq \mtypnat
      \HS
      \derivof{\deriv_2}{\judgv[\mcount_2]{\fctx_2}{\tctx_2,\var:\optmtypnat}{\tmthree}{\mtyp}}
    }{
      \judgv[\mset{\bIfSucc} + \mcount_1 + \mcount_2]
        {\fctx_1 + \fctx_2}{\tctx_1 + \tctx_2}
        {\ifz{\succ{\valnat}}{\tmtwo}{\var}{\tmthree}}
        {\mtyp}
    }
  \]
  where $\fctx = \fctx_1 + \fctx_2$ and $\tctx = \tctx_1 + \tctx_2$
  and $\mcount = \mset{\bIfSucc} + \mcount_1 + \mcount_2$.
  Applying \cref{lemCBV:value_substitution} on $\deriv_2$ with $\deriv_1$ leads us to
  $\derivof{\deriv'}{
    \judgv[\mcount_1 + \mcount_2 ]
      {\fctx_1 + \fctx_2 }{\tctx_1 + \tctx_2 }
      {\tmthree\sub{\var}{\valnat}}
      {\mtyp}}$,
  where $\mcount' = \mcount_1 + \mcount_2$
  and we conclude since $\mset{\bIfSucc} + \mcount' =\mcount$.
\item $\vruleTypFix$.
  Then $\tm = \fix{\var}{\tmtwo}\tov{\rulename} \tmtwo\sub{\var}{\fix{\var}{\tmtwo}}= \tm'$.
  The conclusion of $\deriv$ can then only be derived by rule $\vruleTypFix$, so
  $\deriv$ has the form:
  \[
    \indrule{\vruleTypFix}{
      \derivof{\deriv_0}{\judgv[\mcount_0]{\fctx_0,\var:\fset{\mtyptwo_i}_\iI}{\tctx_0}{\tmtwo}{\mtyp}}
      \HS
      (\derivof{\deriv_i}{\judgv[\mcount_i]{\fctx_i}{\tctx_i}{\fix{\var}{\tmtwo}}{\mtyptwo_i}})_\iI
    }{
      \judgv[\mset{\bFix} +\mcount_0 +_\iI \mcount_i]
        {\fctx_0 +_\iI \fctx_i}{\tctx_0 +_\iI\tctx_i}
        {\fix{\var}{\tmtwo}}{\mtyp}
    }
  \]
  where $\fctx = \fctx_0 +_\iI\fctx_i$ and $\tctx = \tctx_0 +_\iI\tctx_i$ and
  $\mcount = \mcount_0 +_\iI\mcount_i$.
  By \Cref{lemCBV:substitution} 
  on $\deriv_0$ with $(\deriv_i)_\iI$ we obtain
  $\derivof{\deriv'}{
    \judgv[\mcount_0 +_\iI\mcount_i]
      {\fctx_0 +_\iI\fctx_i}{\tctx_0 +_\iI\tctx_i}
      {\tmtwo\sub{\var}{\fix{\var}{\tmtwo}}}
      {\mtyp}}$. 
  We let $\mcount' = \mcount_0 +_\iI\mcount_i$
  and we conclude since $\mset{\bFix} + \mcount'=
  \mset{\bFix} + \mcount_0 +_\iI\mcount_i = \mcount$.
\item $\vruleToCongAppL$.
  Then $\tm = \tmtwo \, \tmthree \tov{\rulename} \tmtwo' \, \tmthree = \tm'$,
  derived from $\tmtwo \tov{\rulename} \tmtwo'$.
  The conclusion of $\deriv$ can then only be derived by rule $\vruleTypApp$, 
  so $\deriv$ has the form:
  \[
    \indrule{\vruleTypApp}{
      \derivof{\deriv_1}{\judgv[\mcount_1]{\fctx_1}{\tctx_1}{\tmtwo}{\msetabs{\optmtyptwo \to \mtyp}}}
      \HS
      (1)\ \optmtyptwo \mleq \mtyptwo
      \HS
      \derivof{\deriv_2}{\judgv[\mcount_2]{\fctx_2}{\tctx_2}{\tmthree}{\mtyptwo}}
    }{
      \judgv[\mset{\bBeta} + \mcount_1 + \mcount_2]
        {\fctx_1 + \fctx_2}{\tctx_1 + \tctx_2}
        {\tmtwo \, \tmthree}{\mtyp}
    }
  \]
  where $\fctx = \fctx_1 + \fctx_2$ and $\tctx = \tctx_1 + \tctx_2$ and
  $\mcount = \mset{\bBeta} + \mcount_1 + \mcount_2$.
  By the \ih on $\tmtwo \tov{\rulename} \tmtwo'$ and using $\deriv_1$, 
  there exists $\mcount'_1$ such that $\mcount_1 = \mset{\rulename} + \mcount'_1$ and
  $\derivof{\deriv'_1}{\judgv[\mcount'_1]{\fctx_1}{\tctx_1}{\tmtwo'}{\msetabs{\optmtyptwo \to \mtyp}}}$.
  Applying rule $\vruleTypApp$ with $\deriv'_1$, (1) and $\deriv_2$ as premises,
  we conclude with
  $\derivof{\deriv'}{
    \judgv[\mset{\bBeta} + \mcount'_1 + \mcount_2]
      {\fctx_1 + \fctx_2}{\tctx_1 + \tctx_2}{\tmtwo' \, \tmthree}{\mtyp}}$.
  We let $\mcount' = \mset{\bBeta} + \mcount'_1 + \mcount_2$,
  thus $\mset{\rulename} + \mcount' 
  = \mset{\bBeta} + \mcount'_1 + \mcount_2
  = \mset{\bBeta} + \mcount_1 + \mcount_2 
  = \mcount$.
\item $\vruleToCongAppR$.
  Analogous to the previous case.
\item $\vruleToCongSucc$.
  Then $\tm = \succ{\tmtwo} \tov{\rulename} \succ{\tmtwo'} = \tm'$,
  derived from $\tmtwo \tov{\rulename} \tmtwo'$.
  The conclusion of $\deriv$ can then only be derived by rule $\vruleTypSucc$,
  so $\deriv$ has the form:
  \[
    \indrule{\vruleTypSucc}{
      \derivof{\deriv_0}{\judgv[\mcount]{\fctx}{\tctx}{\tmtwo}{+_\iI \mtyptwonat_i}}
    }{
      \judgv[\mcount]
        {\fctx}{\tctx}
        {\succ{\tmtwo}}{\msetnat{\succTyp{\mtyptwonat_i}}_\iI}
    }
  \]
  where $\mtyp = \msetnat{\succTyp{\mtyptwonat_i}}_\iI$.
  By the \ih on $\tmtwo \tov{\rulename} \tmtwo'$ and using $\deriv_0$, 
  there exists $\mcount'$ such that $\mcount = \mset{\rulename} + \mcount'$ and
  $\derivof{\deriv'_0}{\judgv[\mcount']{\fctx}{\tctx}{\tmtwo'}{+_\iI \mtyptwonat_i}}$.
  Applying rule $\vruleTypSucc$ with $\deriv'_0$ as premise,
  we conclude with
  $\derivof{\deriv'}{
    \judgv[\mcount']
      {\fctx}{\tctx}{\succ{\tmtwo'}}{\msetnat{\succTyp{\mtyptwonat_i}}_\iI}}$,
  where
  $\mcount = \mset{\rulename} + \mcount'$.
\item $\vruleToCongIf$.
  Then $\tm = \ifz{\tmtwo}{\tmthree}{\var}{\tmfour}
  \tov{\rulename} \ifz{\tmtwo'}{\tmthree}{\var}{\tmfour}= \tm'$,
  derived from $\tmtwo \tov{\rulename} \tmtwo'$.
  The conclusion of $\deriv$ can be derived either by rule $\vruleTypIfZero$
  or by rule $\vruleTypIfSucc$.
  Let us suppose the first case, as the second one is analogous.
  Then $\deriv$ is of the form:
  \[   
    \indrule{\vruleTypIfZero}{
      \derivof{\deriv_1}{\judgv[\mcount_1]{\fctx_1}{\tctx_1}{\tmtwo}{\msetnat{\zeroTyp}}}
      \HS
      \derivof{\deriv_2}{\judgv[\mcount_2]{\fctx_2}{\tctx_2}{\tmthree}{\mtyp}}
    }{
      \judgv[\mset{\bIfZero} + \mcount_1 + \mcount_2]{\fctx_1+\fctx_2}{\tctx_1 + \tctx_2}{\ifz{\tmtwo}{\tmthree}{\var}{\tmfour}}{\mtyp}
    }
  \]
  where $\fctx = \fctx_1 + \fctx_2$ and $\tctx = \tctx_1 + \tctx_2$ and
  $\mcount = \mset{\bIfZero} + \mcount_1 + \mcount_2$.
  By the \ih on $\tmtwo \tov{\rulename} \tmtwo'$ and using $\deriv_1$,
  there exists $\mcount'_1$ such that $\mcount_1 = \mset{\rulename} + \mcount'_1$ and
  $\derivof{\deriv'_1}{\judgv[\mcount'_1]{\fctx_1}{\tctx_1}{\tmtwo'}{\msetnat{\zeroTyp}}}$.
  Applying rule $\vruleTypIfZero$ with $\deriv'_1$ and $\deriv_2$ as premises,
  we conclude with
  $\derivof{\deriv'}{
    \judgv[\mset{\bIfZero} + \mcount'_1+\mcount_2]
      {\fctx_1+\fctx_2}{\tctx_1 + \tctx_2}
      {\ifz{\tmtwo'}{\tmthree}{\var}{\tmfour}}{\mtyp}}$.
  We let $\mcount' = \mset{\bIfZero} + \mcount'_1 + \mcount_2$,
  thus $\mset{\rulename} + \mcount'
  = \mset{\rulename} + \mset{\bIfZero} + \mcount'_1 + \mcount_2
  = \mset{\bIfZero} + \mcount_1 + \mcount_2
  = \mcount$.
\end{itemize}
\end{proof}

\valueantisubstitution*

\begin{proof}
We first consider the case $\tm = \var$, for which we have 
$\derivof{\deriv'}{\judgv[\mcount]{\fctx}{\tctx}{\val}{\mtyp}}$ as hypothesis.
Then taking $\fctx_1 \defeq \emptyctx$ and $\fctx_2 \defeq \fctx$ and
$\tctx_1 \defeq \emptyctx$ and $\tctx_2 \defeq \tctx$ and $\mcount_1 \defeq \emset$ and 
$\mcount_2 \defeq \mcount$ and $\optmtyptwo \defeq \mtyp$ and $\mtyptwo \defeq \mtyp$,
we conclude that the following hold:
\begin{enumerate}
\item
  $\derivof{\deriv}{\judgv[\emset]{\emptyctx}{\var : \mtyp}{\var}{\mtyp}}$, by rule $\vruleTypVarOne$
\item
  $\derivof{\derivtwo}{\judgv[\mcount]{\fctx}{\tctx}{\val}{\mtyp}}$, by letting $\derivtwo \defeq \deriv'$
\item
  $\fctx = \emptyctx + \fctx = \fctx_1 + \fctx_2$ and
  $\tctx = \emptyctx + \tctx = \tctx_1 + \tctx_2$ and
  $\mcount = \emset + \mcount = \mcount_1 + \mcount_2$ and
  $\mtyp \mleq \mtyp$
\end{enumerate}

If $\tm \neq \var$, then we proceed by induction on $\deriv'$.
\begin{itemize}
\item $\vruleTypVarOne$.
  Then $\tm = \vartwo$, with $\vartwo \neq \var$, and 
  $\derivof{\deriv'}{\judgv[\emset]{\emptyctx}{\vartwo : \mtyp}{\vartwo}{\mtyp}}$,
  where $\fctx = \emptyctx$ and $\tctx = \vartwo : \mtyp$ and $\mcount = \emset$.
  Taking $\fctx_1 = \fctx_2 \defeq \emptyctx$ and
  $\tctx_1 \defeq \vartwo : \mtyp$ and $\tctx_2 \defeq \emptyctx$ and 
  $\mcount_1 = \mcount_2 \defeq \emset$ and $\optmtyptwo \defeq \none$ and 
  $\mtyptwo \defeq \emsetnu{\nature}$ with $\nature$ a proper nature, 
  we conclude with:
  \begin{enumerate}
  \item
    $\derivof{\deriv}{\judgv[\emset]{\emptyctx}{\vartwo : \mtyp, \var : \none}{\vartwo}{\mtyp}}$,
    by letting $\deriv \defeq \deriv'$
  \item
    $\derivof{\derivtwo}{\judgv[\emset]{\emptyctx}{\emptyctx}{\val}{\emsetnu{\nature}}}$, 
    by \cref{lemCBV:typable_values_emptyctx}
  \item
    $\fctx = \emptyctx = \emptyctx + \emptyctx = \fctx_1 + \fctx_2$ and
    $\tctx = (\vartwo : \mtyp) + \emptyctx = \tctx_1 + \tctx_2$ and
    $\mcount = \emset = \emset + \emset = \mcount_1 + \mcount_2$ and
    $\none \mleq \emsetnu{\nature}$.
  \end{enumerate}
\item $\vruleTypVarTwo$.
  Analogous to the previous case.
\item $\vruleTypAbs$.
  Then 
  $\derivof{\deriv'}{
    \judgv[+_\iI \mcount_i]
      {+_\iI \fctx_i}{+_\iI \tctx_i}
      {\lam{\vartwo}{\tmtwo}}
      {\msetabs{\optmtypthree_i \to \mtypfour_i}_\iI}}$,
  with $\fctx = +_\iI \fctx_i$ and $\tctx = +_\iI \tctx_i$ and $\mcount = +_\iI \mcount_i$ and
  $\mtyp = \msetabs{\optmtypthree_i \to \mtypfour_i}_\iI$.
  Then $\tm = \lam{\vartwo}{\tmtwo'}$,
  with $\tmtwo = \tmtwo'\sub{\var}{\val}$.
  The judgment is derived from premises of the form
  $\derivof{\deriv'_i}{\judgv[\mcount_i]{\fctx_i}{\tctx_i, \vartwo : \optmtypthree_i}{\tmtwo}{\mtypfour_i}}$
  for each $\iI$.
  We can apply the \ih on $(\deriv_i)_\iI$, yielding that for each $\iI$ there exist
  family contexts $\fctx_{i1}$, $\fctx_{i2}$,
  typing contexts $\tctx_{i1}$, $\tctx_{i2}$, 
  multi-counters $\mcount_{i1}$, $\mcount_{i2}$,
  an optional multitype $\optmtyptwo_i$ and
  a multitype $\mtyptwo_i$ such that:
  \begin{enumerate}
  \item[i.1.]
    $(\derivof{\deriv_i}{
      \judgv[\mcount_{i1}]
        {\fctx_{i1}}{\tctx_{i1}, \vartwo : \optmtypthree_i, \var : \optmtyptwo_i}
        {\tmtwo'}
        {\mtypfour_i}})_\iI$
  \item[i.2.]
    $(\derivof{\derivtwo_i}{\judgv[\mcount_{i2}]{\fctx_{i2}}{\tctx_{i2}}{\val}{\mtyptwo_i}})_\iI$
  \item[i.3.]
    $\fctx_i = \fctx_{i1} + \fctx_{i2}$ and 
    $\tctx_i, \vartwo : \optmtypthree_i = (\tctx_{i1}, \vartwo : \optmtypthree_i) + \tctx_{i2}$ and
    $\mcount_i = \mcount_{i1} + \mcount_{i2}$ and 
    $\optmtyptwo_i \mleq \mtyptwo_i$.
  \end{enumerate}
  For each $\iI$, it is not necessary to split
  $\vartwo : \optmtypthree_i$, as we may assume $\vartwo \notin \fv{\val}$
  by $\alpha$-conversion, so $\var \notin \dom{\tctx_{i2}}$ by \cref{lemCBV:relevance}.
  Taking $\fctx_1 \defeq +_\iI \fctx_{i1}$ and $\fctx_2 \defeq +_\iI \fctx_{i2}$ and
  $\tctx_1 \defeq +_\iI \tctx_{i1}$ and $\tctx_2 \defeq +_\iI \tctx_{i2}$ and
  $\mcount_1 \defeq +_\iI \mcount_{i1}$ and $\mcount_2 \defeq +_\iI \mcount_{i2}$ and
  $\optmtyptwo \defeq +_\iI \optmtyptwo_i$ and $\mtyptwo \defeq +_\iI \mtyptwo_i$
  we conclude with:
  \begin{enumerate}
  \item
    $\derivof{\deriv}{
      \judgv[+_\iI \mcount_{i1}]
        {+_\iI \fctx_{i1}}{+_\iI \tctx_{i1}, \var : +_\iI \mtyptwo_i}
        {\lam{\vartwo}{\tmtwo'}}
        {\msetabs{\optmtypthree_i \to \mtypfour_i}_\iI}}$, 
    by applying rule $\vruleTypAbs$ to $(\deriv_i)_\iI$
  \item
    $\derivof{\derivtwo}{
      \judgv[+_\iI \mcount_{i2}]
        {+_\iI \fctx_{i2}}{+_\iI \tctx_{i2}}
        {\val}
        {+_\iI \mtyptwo_i}}$, 
    as the result of applying \cref{lemCBV:generalized_value_splitting_merging} to $(\derivtwo_i)_\iI$
  \item
    $\fctx = +_\iI \fctx_i = +_\iI (\fctx_{i1} + \fctx_{i2}) = \fctx_1 + \fctx_2$ and
    $\tctx = +_\iI \tctx_i = +_\iI (\tctx_{i1} + \tctx_{i2}) = \tctx_1 + \tctx_2$ and
    $\mcount = +_\iI \mcount_i = +_\iI (\mcount_{i1} + \mcount_{i2}) = \mcount_1 + \mcount_2$ and
    $+_\iI \optmtyptwo_i \mleq +_\iI \mtyptwo_i$ by \cref{lemCBV:generalized_value_splitting_merging}
  \end{enumerate}
\item $\vruleTypApp$.
  Then 
  $\derivof{\deriv'}{
    \judgv[\mset{\bBeta} + \mcount_a + \mcount_b]
      {\fctx_a + \fctx_b}
      {\tctx_a + \tctx_b}
      {\tmtwo \, \tmthree}
      {\mtyp}
  }$, where $\fctx = \fctx_a + \fctx_b$ and $\tctx = \tctx_a + \tctx_b$ and 
  $\mcount = \mset{\bBeta} + \mcount_a + \mcount_b$.
  Since $\tm \neq \var$, then 
  $\tm = \tmtwo' \, \tmthree'$ with $\tmtwo = \tmtwo'\sub{\var}{\val}$
  and $\tmthree = \tmthree'\sub{\var}{\val}$.
  The judgment is derived from
  $\derivof{\deriv'_a}{\judgv[\mcount_a]{\fctx_a}{\tctx_a}{\tmtwo}{\msetabs{\optmtypthree \to \mtyp}}}$ and
  $\derivof{\deriv'_b}{\judgv[\mcount_b]{\fctx_b}{\tctx_b}{\tmthree}{\mtypthree}}$,
  where $\optmtypthree \mleq \mtypthree$.
  We can apply \ih on $\deriv_a$, yielding that there exist family contexts
  $\fctx_{a1}$, $\fctx_{a2}$, typing contexts $\tctx_{a1}$, $\tctx_{a2}$, 
  multi-counters $\mcount_{a1}$, $\mcount_{a2}$, an optional multitype $\optmtyptwo_a$ 
  and a multitype $\mtyptwo_a$ satisfying:
  \begin{enumerate}
  \item[a.1.]
    $\derivof{\deriv_a}{\judgv[\mcount_{a1}]{\fctx_{a1}}{\tctx_{a1}, \var : \optmtyptwo_a}{\tmtwo'}{\msetabs{\optmtypthree \to \mtyp}}}$
  \item[a.2.]
    $\derivof{\derivtwo_a}{\judgv[\mcount_{a2}]{\fctx_{a2}}{\tctx_{a2}}{\val}{\mtyptwo_a}}$
  \item[a.3.]
    $\fctx_a = \fctx_{a1} + \fctx_{a2}$ and 
    $\tctx_a = \tctx_{a1} + \tctx_{a2}$ and 
    $\mcount_1 = \mcount_{a1} + \mcount_{a2}$ 
    and $\optmtyptwo_a \mleq \mtyptwo_a$
  \end{enumerate}
  On the other hand we can apply \ih on $\deriv_b$, yielding that there exist
  family contexts $\fctx_{b1}$, $\fctx_{b2}$, 
  typing contexts $\tctx_{b1}$, $\tctx_{b2}$, 
  multi-counters $\mcount_{b1}$, $\mcount_{b2}$,
  an optional multitype $\optmtyptwo_b$ and a multitype $\mtyptwo_b$ 
  satisfying: 
  \begin{enumerate}
  \item[b.1.]
    $\derivof{\deriv_b}{\judgv[\mcount_{b1}]{\fctx_{b1}}{\tctx_{b1}, \var : \optmtyptwo_b}{\tmthree'}{\mtypthree}}$
  \item[b.2.]
    $\derivof{\derivtwo_b}{\judgv[\mcount_{b2}]{\fctx_{b2}}{\tctx_{b2}}{\val}{\mtyptwo_b}}$
  \item[b.3.]
    $\fctx_b = \fctx_{b1} + \fctx_{b2}$ and 
    $\tctx_b = \tctx_{b1} + \tctx_{b2}$ and
    $\mcount_b = \mcount_{b1} + \mcount_{b2}$ and 
    $\optmtyptwo_b \mleq \mtyptwo_b$
  \end{enumerate}
  Then, taking $\fctx_1 \defeq \fctx_{a1} + \fctx_{b1}$ and 
  $\fctx_2 \defeq \fctx_{a2} + \fctx_{b2}$ and 
  $\tctx_1 \defeq \tctx_{a1} + \tctx_{b1}$ and $\tctx_2 \defeq \tctx_{a2} + \tctx_{b2}$ and
  $\mcount_1 \defeq \mset{\bBeta} + \mcount_{a1} + \mcount_{b1}$ and
  $\mcount_2 \defeq \mcount_{a2} + \mcount_{b2}$ and
  $\optmtyptwo \defeq \optmtyptwo_a + \optmtyptwo_b$ and 
  $\mtyptwo \defeq \mtyptwo_a + \mtyptwo_b$,
  we conclude that the following hold:
  \begin{enumerate}
  \item
    $\derivof{\deriv}{\judgv[\mcount_1]{\fctx_1}{\tctx_1, \var : \optmtyptwo}{\tmtwo' \, \tmthree'}{\mtyp}}$, 
    by applying rule $\vruleTypApp$ with $\deriv_a$ and $\deriv_b$ as premises,
    given that $\optmtypthree \mleq \mtypthree$ holds
  \item
    $\derivof{\derivtwo}{\judgv[\mcount_2]{\fctx_2}{\tctx_2}{\val}{\mtyptwo}}$, 
    as the result of applying \cref{lemCBV:value_splitting_merging}
    to $\derivtwo_a$ and $\derivtwo_b$
  \item
    It is easy to check that the remaining conditions hold.
  \end{enumerate}
\item $\vruleTypZero$.
  Analogous to case $\vruleTypVarOne$.
\item $\vruleTypSucc$.
  Then
  $\derivof{\deriv'}{\judgv[\mcount]{\fctx}{\tctx}{\succ{\tmthree}}{\msetnat{\succTyp{\mtypnat_i}}_\iI}}$,
  where $\mtyp = \msetnat{\succTyp{\mtypnat_i}}_\iI$. 
  Since $\tm \neq \var$, then $\tm = \succ{\tmthree'}$, 
  with $\tmthree = \tmthree'\sub{\var}{\val}$.
  The judgment is derived from
  $\derivof{\deriv'_0}{\judgv[\mcount]{\fctx}{\tctx}{\tmthree}{\mtypnat}}$,
  where $\mtypnat = +_\iI \mtypnat_i$.
  We can apply \ih on $\deriv'_0$, yielding that there exist
  family contexts $\fctx_1$, $\fctx_2$,
  typing contexts $\tctx_1$, $\tctx_2$,
  multi-counters $\mcount_1$, $\mcount_2$, 
  an optional multitype $\optmtyptwo$ and a multitype $\mtyptwo$ 
  such that:
  \begin{enumerate}
  \item[1'.]
    $\derivof{\deriv_0}{\judgv[\mcount_1]{\fctx_1}{\tctx_1, \var : \optmtyptwo}{\tmthree'}{\mtypnat}}$,
    where $\mtypnat = +_\iI \mtypnat_i$
  \item[2'.]
    $\derivof{\derivtwo_0}{\judgv[\mcount_2]{\fctx_2}{\tctx_2}{\val}{\mtyptwo}}$
  \item[3'.]
    $\fctx = \fctx_1 + \fctx_2$ and 
    $\tctx = \tctx_1 + \tctx_2$ and
    $\mcount = \mcount_1 + \mcount_2$ and 
    $\optmtyptwo \mleq \mtyptwo$
  \end{enumerate}
  By taking $\fctx_1$, $\fctx_2$, $\tctx_1$, $\tctx_2$, $\mcount_1$, $\mcount_2$,
  $\optmtyptwo$ and $\mtyptwo$ from the \ih, we conclude that 
  $\derivof{\deriv}{
    \judgv[\mcount_1]
      {\fctx_1}{\tctx_1, \var : \optmtyptwo}
      {\succ{\tmthree'}}
      {\msetnat{\succTyp{\mtypnat_i}}_\iI}}$, 
  by applying rule $\vruleTypSucc$ to $\deriv_0$.
  The remaining conditions to met already hold by the \ih,
  letting $\derivtwo \defeq \derivtwo_0$.
\item $\vruleTypIfZero$.
  Analogous to case $\vruleTypApp$.
\item $\vruleTypIfSucc$.
  Then
  $\derivof{\deriv'}{
    \judgv[\mset{\bIfSucc} + \mcount_a + \mcount_b]
      {\fctx_a + \fctx_b}{\tctx_a + \tctx_b}
      {\ifz{\tmtwo}{\tmthree}{\vartwo}{\tmfour}}
      {\mtyp}}$,
  where 
  $\fctx = \fctx_a + \fctx_b$ and $\tctx = \tctx_a + \tctx_b$ and
  $\mcount = \mset{\bIfSucc} + \mcount_a + \mcount_b$.
  Then $\tm = \ifz{\tmtwo'}{\tmthree'}{\vartwo}{\tmfour'}$, 
  where $\tmtwo = \tmtwo'\sub{\var}{\val}$, $\tmthree = \tmthree'\sub{\var}{\val}$
  and $\tmfour = \tmfour'\sub{\var}{\val}$.
  The judgment is derived from 
  $\derivof{\deriv'_a}{\judgv[\mcount_a]{\fctx_a}{\tctx_a}{\tmtwo}{\msetnat{\succTyp{\mtypnat}}}}$ and
  $\derivof{\deriv'_b}{\judgv[\mcount_b]{\fctx_b}{\tctx_b, \vartwo : \optmtypnat}{\tmfour}{\mtyp}}$,
  with $\optmtypnat \mleq \mtypnat$.
  We can apply \ih on $\deriv'_a$, yielding that there exist 
  family contexts $\fctx_{a1}$, $\fctx_{a2}$,
  typing contexts $\tctx_{a1}$, $\tctx_{a2}$,
  multi-counters $\mcount_{a1}$, $\mcount_{a2}$,
  an optional multitype $\optmtyptwo_a$ and a multitype $\mtyptwo_a$ such that:
  \begin{enumerate}
  \item[a.1.]
    $\derivof{\deriv_a}{\judgv[\mcount_{a1}]{\fctx_{a1}}{\tctx_{a1}, \var : \optmtyptwo_a}{\tmtwo'}{\msetnat{\succTyp{\mtypnat}}}}$
  \item[a.2.]
    $\derivof{\derivtwo_a}{\judgv[\mcount_{a2}]{\fctx_{a2}}{\tctx_{a2}}{\val}{\mtyptwo_a}}$
  \item[a.3.]
    $\fctx_a = \fctx_{a1} + \fctx_{a2}$ and 
    $\tctx_a = \tctx_{a1} + \tctx_{a2}$ and
    $\mcount_a = \mcount_{a1} + \mcount_{a2}$ and $\optmtyptwo_a \mleq \mtyptwo_a$
  \end{enumerate}
  And we can apply the \ih on $\deriv'_b$, yielding that there exist 
  family contexts $\fctx_{b1}$, $\fctx_{b2}$,
  typing contexts $\tctx_{b1}$, $\tctx_{b2}$,
  multi-counters $\mcount_{b1}$, $\mcount_{b2}$,
  an optional multitype $\optmtyptwo_b$ and a multitype $\mtyptwo_b$ such that:
  \begin{enumerate}
  \item[b.1.]
    $\derivof{\deriv_b}{\judgv[\mcount_{b1}]{\fctx_{b1}}{\tctx_{b1}, \vartwo : \optmtypnat, \var : \optmtyptwo_b}{\tmfour'}{\mtyp}}$
  \item[b.2.]
    $\derivof{\derivtwo_b}{\judgv[\mcount_{b2}]{\fctx_{b2}}{\tctx_{b2}}{\val}{\mtyptwo_b}}$
  \item[b.3.]
    $\fctx_b = \fctx_{b1} + \fctx_{b2}$ and $\tctx_b = \tctx_{b1} + \tctx_{b2}$ and 
    $\mcount_b = \mcount_{b1} + \mcount_{b2}$ and $\optmtyptwo_b \mleq \mtyptwo_b$.
  \end{enumerate}
  We are allowed to split
  $\tctx_b, \vartwo : \optmtypnat$ into $(\tctx_{b1}, \vartwo : \optmtypnat)$
  and $\tctx_{b2}$ by \cref{lemCBV:relevance},
  since by $\alpha$-conversion we can assume $\vartwo \notin \fv{\val}$.
  Taking $\fctx_1 \defeq \fctx_{a1} + \fctx_{b1}$ and 
  $\fctx_2 \defeq \fctx_{a2} + \fctx_{b2}$ and 
  $\tctx_1 \defeq \tctx_{a1} + \tctx_{b1}$ and 
  $\tctx_2 \defeq \tctx_{a2} + \tctx_{b2}$ and 
  $\mcount_1 \defeq \mset{\bIfSucc} + \mcount_{a1} + \mcount_{b1}$ and 
  $\mcount_2 \defeq \mcount_{a2} + \mcount_{b2}$ and
  $\optmtyptwo \defeq \optmtyptwo_a + \optmtyptwo_b$ and 
  $\mtyptwo \defeq \mtyptwo_a + \mtyptwo_b$,
  we conclude that the following hold:
  \begin{enumerate}
  \item
    $\derivof{\deriv}{
      \judgv[\mcount_1]
        {\fctx_1}{\tctx_1, \var : \optmtyptwo}
        {\ifz{\tmtwo'}{\vartwo}{\tmthree'}{\tmfour'}}
        {\mtyp}}$,
    by applying rule $\vruleTypIfSucc$ to $\deriv_a$ and $\deriv_b$,
    given that $\optmtypnat \mleq \mtypnat$ holds
  \item
    $\derivof{\derivtwo}{\judgv[\mcount_2]{\fctx_2}{\tctx_2}{\val}{\mtyptwo}}$,
    as the result of applying \cref{lemCBV:value_splitting_merging} 
    to $\derivtwo_a$ and $\derivtwo_b$
  \item
    $\mcount = \mset{\bIfSucc} + \mcount_a + \mcount_b = \mcount_1 + \mcount_2$, 
    and it is easy to check that the remaining conditions hold
  \end{enumerate}
\item $\vruleTypFix$.
  Then
  $\derivof{\deriv'}{
    \judgv[\mset{\bFix} + \mcount_0 +_\iI \mcount_i]
      {\fctx_0 +_\iI \fctx_i}
      {\tctx_0 +_\iI \tctx_i}
      {\fix{\vartwo}{\tmtwo}}
      {\mtyp}
  }$, where $\fctx = \fctx_0 +_\iI \fctx_i$ and $\tctx = \tctx_0 +_\iI \tctx_i$ 
  and $\mcount = \mset{\bFix} + \mcount_0 +_\iI \mcount_i$.
  Then $\tm = \fix{\vartwo}{\tmtwo'}$, 
  where $\tmtwo = \tmtwo'\sub{\var}{\val}$.
  The judgment is derived from 
  $\derivof{\deriv'_0}{
    \judgv[\mcount_0]
      {\fctx_0, \vartwo : \fset{\mtypthree_i}_\iI}{\tctx_0}
      {\tmtwo}{\mtyp}}$ and
  $(\derivof{\deriv'_i}{
    \judgv[\mcount_i]
      {\fctx_i}{\tctx_i}
      {\fix{\vartwo}{\tmtwo}}
      {\mtypthree_i}})_\iI$.
  We can then apply the \ih on $\deriv'_0$, yielding that then there exist
  family contexts $\fctx_{01}$, $\fctx_{02}$,
  typing contexts $\tctx_{01}$, $\tctx_{02}$,
  multi-counters $\mcount_{01}$, $\mcount_{02}$,
  an optional multitype $\optmtyptwo_0$ and a multitype $\mtyptwo_0$ such that:
  \begin{enumerate}
  \item[1'.]
    $\derivof{\deriv_0}{
      \judgv[\mcount_{01}]
        {\fctx_{01}, \vartwo : \fset{\mtypthree_i}_\iI}
        {\tctx_{01}, \var : \optmtyptwo_0}
        {\tmtwo'}
        {\mtyp}}$
  \item[2'.]
    $\derivof{\derivtwo_0}{\judgv[\mcount_{02}]{\fctx_{02}}{\tctx_{02}}{\val}{\mtyptwo_0}}$
  \item[3'.]
    $\fctx_0 = \fctx_{01} + \fctx_{02}$ and
    $\tctx_0 = \tctx_{01} + \tctx_{02}$ and 
    $\mcount_0 = \mcount_{01} + \mcount_{02}$ and $\optmtyptwo_0 \mleq \mtyptwo_0$.
  \end{enumerate}
  Note that it is not necessary to split $\fset{\mtypthree_i}_\iI$ since
  we may assume $\vartwo \notin \fv{\val}$ by $\alpha$-conversion,
  so $\vartwo \notin \dom{\fctx_{02}}$ by \cref{lemCBV:relevance}.
  The conclusions of $\deriv'_i$ for each $\iI$ are smaller than $\deriv'$,
  hence we can apply the \ih on $(\deriv'_i)_\iI$,
  yielding that for each $\iI$ there exist
  family contexts $(\fctx_{i1})_\iI$, $(\fctx_{i2})_\iI$,
  typing contexts $(\tctx_{i1})_\iI$, $(\tctx_{i2})_\iI$,
  multi-counters $(\mcount_{i1})_\iI$, $(\mcount_{i2})_\iI$,
  an optional multitype $(\optmtyptwo_i)_\iI$ and a multitype $(\mtyptwo_i)_\iI$ 
  such that:
  \begin{enumerate}
  \item[i.1.]
    $\derivof{\deriv_i}{
      \judgv[\mcount_{i1}]
        {\fctx_{i1}}{\tctx_{i1}, \var : \optmtyptwo_i}
        {\fix{\vartwo}{\tmtwo'}}
        {\mtypthree_i}}$
  \item[i.2.]
    $\derivof{\derivtwo_i}{\judgv[\mcount_{i2}]{\fctx_{i2}}{\tctx_{i2}}{\val}{\mtyptwo_i}}$
  \item[i.3.]
    $\fctx_i = \fctx_{i1} + \fctx_{i2}$ and
    $\tctx_i = \tctx_{i1} + \tctx_{i2}$ and 
    $\mcount_i = \mcount_{i1} + \mcount_{i2}$ and 
    $\optmtyptwo_i \mleq \mtyptwo_i$
  \end{enumerate}
  Taking 
  $\fctx_1 \defeq \fctx_{01} +_\iI \fctx_{i1}$ and $\fctx_2 \defeq \fctx_{02} +_\iI \fctx_{i2}$ and
  $\tctx_1 \defeq \tctx_{01} +_\iI \tctx_{i1}$ and $\tctx_2 \defeq \tctx_{02} +_\iI \tctx_{i2}$ and
  $\mcount_1 \defeq \mset{\bFix} + \mcount_{01} +_\iI \mcount_{i1}$ and 
  $\mcount_2 \defeq \mcount_{02} +_\iI \mcount_{i2}$ and
  $\optmtyptwo \defeq \optmtyptwo_0 +_\iI \optmtyptwo_i$, and 
  $\mtyptwo \defeq \mtyptwo_0 +_\iI \mtyptwo_i$,
  we conclude that the following hold:
  \begin{enumerate}
  \item
    $\derivof{\deriv}{\
    \judgv[\mcount_1]
      {\fctx_1}{\tctx_1, \var : \optmtyptwo}
      {\fix{\vartwo}{\tmtwo'}}
      {\mtyp}}$,
    by applying rule $\vruleTypFix$ to $\deriv_0$ and $(\deriv_i)_\iI$
  \item
    $\derivof{\derivtwo}{\judgv[\mcount_2]{\fctx_2}{\tctx_2}{\val}{\mtyptwo}}$,
    as the result of applying \cref{lemCBV:generalized_value_splitting_merging} 
    to $\derivtwo_0$ and $(\derivtwo_i)_\iI$
  \item
    It is easy to check that the remainder conditions hold.
  \end{enumerate}
\end{itemize}
\end{proof}

\antisubstitution*

\begin{proof}
We first consider the case $\tm = \var$, for which we have 
$\derivof{\deriv'}{\judgv[\mcount]{\fctx}{\tctx}{\tmsix}{\mtyp}}$ as hypothesis.
Taking a set $I$ such that its cardinality is 1, and
$\fctx_0 \defeq \emptyctx$, $\fctxtwo \defeq \fctx$ and
$\tctx_0 \defeq \emptyctx$, $\tctxtwo \defeq \tctx$ and
$\mcount_0 \defeq \emset$, $\mcounttwo \defeq \mcount$ and $\mtyptwo \defeq \mtyp$,
then the following hold:
\begin{enumerate}
\item
  $\derivof{\deriv}{\judgv[\emset]{\var : \fset{\mtyp}}{\emptyctx}{\var}{\mtyp}}$,
  by rule $\vruleTypVarTwo$
\item
  $\derivof{\derivtwo_i}{\judgv[\mcount]{\fctx}{\tctx}{\tmsix}{\mtyp}}$, by hypothesis,
  since $I$ is a singleton, and we let $\derivtwo_i \defeq \deriv'$
\item
  $\fctx = \emptyctx + \fctx = \fctx_0 + \fctxtwo$ and
  $\tctx = \emptyctx + \tctx = \tctx_0 + \tctxtwo$ and
  $\mcount = \emset + \mcount = \mcount_0 + \mcounttwo$
\end{enumerate}
If $\tm \neq \var$ then we proceed by induction on $\deriv'$.
\begin{itemize}
\item $\vruleTypVarOne$.
  Then $\tm = \vartwo$, with $\vartwo \neq \var$, and
  $\derivof{\deriv'}{\judgv[\emset]{\emptyctx}{\vartwo : \mtyp}{\vartwo}{\mtyp}}$,
  with $\fctx = \emptyctx$ and $\tctx = \vartwo : \mtyp$ and $\mcount = \emset$.
  Taking $I \defeq \emptyctx$ and $\fctx_0 \defeq \fctx$ and $\tctx_0 \defeq \tctx$ 
  and $\mcount_0 \defeq \mcount$, we conclude with
  $\derivof{\deriv}{\judgv[\emset]{\emptyctx}{\vartwo : \mtyp}{\vartwo}{\mtyp}}$
  by letting $\deriv \defeq \deriv'$.
  \cref{it:anti_substitution_2} and \cref{it:anti_substitution_3} 
  hold trivially since $I = \emptyset$.
\item $\vruleTypVarTwo$.
  Analogous to the previous case.
\item $\vruleTypAbs$.
  Then
  $\derivof{\deriv'}{
    \judgv[+_\jJ \mcount_j]
      {+_\jJ \fctx_j}{+_\jJ \tctx_j}
      {\lam{\vartwo}{\tmtwo}}
      {\msetabs{\optmtypthree_j \to \mtypfour_j}_\jJ}}$,
  with $\fctx = +_\jJ \fctx_j$ and $\tctx = +_\jJ \tctx_j$ 
  and $\mcount = +_\jJ \mcount_j$ 
  and $\mtyp = \msetabs{\optmtypthree_j \to \mtypfour_j}_\jJ$.
  Then $\tm = \lam{\vartwo}{\tmtwo'}$,
  with $\tmtwo = \tmtwo'\sub{\var}{\tmsix}$.
  The judgment is derived from premises of the form
  $\derivof{\deriv'_j}{
    \judgv[\mcount_j]
      {\fctx_j}{\tctx_j, \vartwo : \optmtypthree_j}
      {\tmtwo}
      {\mtypfour_j}}$ for each $\jJ$.
  We can apply the \ih on $(\deriv'_j)_\jJ$, yielding that
  for each $\jJ$ there exist
  a finite set $I_j$,
  family contexts $\fctx_{j0}$ and $(\fctxtwo_k)_{k \in I_j}$,
  typing contexts $\tctx_{j0}$ and $(\tctxtwo_k)_{k \in I_j}$,
  multi-counters and $\mcount_{j0}$ and $(\mcounttwo_k)_{k \in I_j}$ 
  and multitypes $(\mtyptwo_k)_{k \in I_j}$
  such that:
  \begin{enumerate}
  \item[j.1.]
    $\derivof{\deriv_j}{
      \judgv[\mcount_{j0}]
        {\fctx_{j0}, \var : \fset{\mtyptwo_k}_{k \in I_j}}{\tctx_{j0}, \vartwo : \optmtypthree_j}
        {\tmtwo'}
        {\mtypfour_j}}$
  \item[j.2.]
    $(\derivof{\derivtwo_k}{\judgv[\mcounttwo_k]{\fctxtwo_k}{\tctxtwo_k}{\tmsix}{\mtyptwo_k}})_{k \in I_j}$
  \item[j.3.]
    $\fctx_j = \fctx_{j0} +_{k \in I_j} \fctxtwo_k$ and
    $\tctx_j, \vartwo : \optmtypthree_j 
    = \tctx_{j0}, \vartwo : \optmtypthree_j +_{k \in I_j} \tctxtwo_k$ and
    $\mcount_j = \mcount_{j0} +_{k \in I_j} \mcounttwo_k$.
  \end{enumerate}
  Note that for each $\jJ$ 
  it is not necessary to split $\optmtypthree_j$ since we
  may assume $\vartwo \notin \fv{\tmsix}$
  by $\alpha$-conversion, so $\vartwo \notin \dom{+_{k \in I_j} \tctxtwo_k}$ 
  by \cref{lemCBV:relevance}.
  Taking $I \defeq \uplus_\jJ I_j$ and
  $\fctx_0 \defeq +_\jJ \fctx_{j0}$ and
  $\fctxtwo_i \defeq +_{k \in I_j} \fctxtwo_k$ and
  $\tctx_0 \defeq +_\jJ \tctx_{j0}$ and
  $\tctxtwo_i \defeq +_{k \in I_j} \tctxtwo_k$ and
  $\mcount_0 \defeq +_\jJ \mcount_{j0}$ and
  $\mcounttwo_i \defeq +_{k \in I_j} \mcounttwo_k$ and
  $\mtyptwo_i \defeq +_{k \in I_j} \mtyptwo_k$, we have:
  \begin{enumerate}
  \item
    $\derivof{\deriv}{
      \judgv[\mcount_0]
        {\fctx_0, \var : \fset{\mtyptwo_i}_\iI}{\tctx_0}
        {\lam{\vartwo}{\tmtwo'}}
        {\msetabs{\optmtypthree_j \to \mtypfour_j}_\jJ}}$,
    by applying rule $\vruleTypAbs$ to $(\deriv_j)_\jJ$
  \item
    $(\derivof{\derivtwo_i}{\judgv[\mcounttwo_i]{\fctxtwo_i}{\tctxtwo_i}{\tmsix}{\mtyptwo_i}})_\iI$,
    as the result of applying \cref{lemCBV:generalized_value_splitting_merging} 
    to $(\derivtwo_k)_{k \in I_j}$ for each $\jJ$
  \item
    $\fctx = +_\jJ \fctx_j = +_\jJ (\fctx_{j0} +_{k \in I_j} \fctxtwo_k) = \fctx_0 +_\iI \fctxtwo_i$ and
    $\tctx = +_\jJ \tctx_j = +_\jJ (\tctx_{j0} +_{k \in I_j} \tctxtwo_k) = \tctx_0 +_\iI \tctxtwo_i$ and
    $\mcount = +_\jJ \mcount_j = +_\jJ (\mcount_{j0} +_{k \in I_j} \mcounttwo_k) = \mcount_0 +_\iI \mcounttwo_i$
  \end{enumerate}
\item $\vruleTypApp$.
  Then
  $\derivof{\deriv'}{
    \judgv[\mset{\bBeta} + \mcount_a + \mcount_b]
      {\fctx_a + \fctx_b}{\tctx_a + \tctx_b}
      {\tmtwo \, \tmthree}
      {\mtyp}}$,
  with $\fctx = \fctx_a + \fctx_b$ and $\tctx = \tctx_a + \tctx_b$ and
  $\mcount = \mset{\bBeta} + \mcount_a + \mcount_b$.
  Then $\tm = \tmtwo' \, \tmthree'$,
  with $\tmtwo \, \tmthree = (\tmtwo' \, \tmthree')\sub{\var}{\tmsix}$.
  The judgment is derived from
  $\derivof{\deriv'_a}{\judgv[\mcount_a]{\fctx_a}{\tctx_a}{\tmtwo}{\msetabs{\optmtypthree \to \mtyp}}}$
  and $\derivof{\deriv'_b}{\judgv[\mcount_b]{\fctx_b}{\tctx_b}{\tmthree}{\mtypthree}}$,
  where $\optmtypthree \mleq \mtypthree$.
  We can apply the \ih on $\deriv_a$, yielding that there exist
  a finite set $I_a$, family contexts $\fctx_{a0}$ and $(\fctxtwo_i)_{i \in I_a}$,
  typing contexts $\tctx_{a0}$ and $(\tctxtwo_i)_{i \in I_a}$, 
  multi-counters $\mcount_{a0}$ and $(\mcounttwo_i)_{i \in I_a}$ 
  and multitypes $(\mtyptwo_i)_{i \in I_a}$ such that:
  \begin{enumerate}
  \item[a.1.]
    $\derivof{\deriv_a}{
      \judgv[\mcount_{a0}]
        {\fctx_{a0}, \var : \fset{\mtyptwo_i}_{i \in I_a}}{\tctx_{a0}}
        {\tmtwo'}
        {\msetabs{\optmtypthree \to \mtyp}}}$
  \item[a.2.]
    $(\derivof{\derivtwo_i}{\judgv[\mcounttwo_i]{\fctxtwo_i}{\tctxtwo_i}{\tmsix}{\mtyptwo_i}})_{i \in I_a}$
  \item[a.3.]
    $\fctx_a = \fctx_{a0} +_{i \in I_a} \fctxtwo_i$ and
    $\tctx_a = \tctx_{a0} +_{i \in I_a} \tctxtwo_i$ and
    $\mcount_a = \mcount_{a0} +_{i \in I_a} \mcounttwo_i$
  \end{enumerate}
  And we can apply the \ih on $\deriv_b$, yielding that there exist
  a finite set $I_b$,
  family contexts $\fctx_{b0}$ and $(\fctxtwo_i)_{i \in I_b}$,
  typing contexts $\tctx_{b0}$ and $(\tctxtwo_i)_{i \in I_b}$,
  multi-counters $\mcount_{b0}$ and $(\mcounttwo_i)_{i \in I_b}$ 
  and multitypes $(\mtyptwo_i)_{i \in I_b}$ such that:
  \begin{enumerate}
  \item[b.1.]
    $\derivof{\deriv_b}{
      \judgv[\mcount_{b0}]
        {\fctx_{b0}, \var : \fset{\mtyptwo_i}_{i \in I_b}}{\tctx_{b0}}
        {\tmthree'}
        {\mtypthree}}$
  \item[b.2.]
    $(\derivof{\derivtwo_i}{\judgv[\mcounttwo_i]{\fctxtwo_i}{\tctxtwo_i}{\tmsix}{\mtyptwo_i}})_{i \in I_b}$
  \item[b.3.]
    $\fctx_b = \fctx_{b0} +_{i \in I_b} \fctxtwo_i$ and
    $\tctx_b = \tctx_{b0} +_{i \in I_b} \tctxtwo_i$ and
    $\mcount_b = \mcount_{b0} +_{i \in I_b} \mcounttwo_i$
  \end{enumerate}
  Taking $I \defeq I_a \uplus I_b$ and $\fctx_0 \defeq \fctx_{a0} + \fctx_{b0}$ and
  $(\fctxtwo_i)_\iI \defeq (\fctxtwo_i)_{i \in I_a \uplus I_b}$ and
  $\tctx_0 \defeq \tctx_{a0} + \tctx_{b0}$ and
  $(\tctxtwo_i)_\iI \defeq (\tctxtwo_i)_{i \in I_a \uplus I_b}$ and
  $\mcount_0 \defeq \mset{\bBeta} + \mcount_{a0} + \mcount_{b0}$ and
  $(\mcounttwo_i)_\iI \defeq (\mcounttwo_i)_{i \in I_a \uplus I_b}$
  and $(\mtyptwo_i)_\iI \defeq (\mtyptwo_i)_{i \in I_a \uplus I_b}$,
  we have:
  \begin{enumerate}
  \item
    $\derivof{\deriv}{
      \judgv[\mset{\bBeta} + \mcount_{a0} + \mcount_{b0}]
        {\fctx_{a0} + \fctx_{b0}, \var : \fset{\mtyptwo_i}_{I_a \uplus I_b}}{\tctx_{a0} + \tctx_{b0}}
        {\tmtwo' \, \tmthree'}
        {\mtyp}}$,
    by applying rule $\vruleTypApp$ to $\deriv_a$ and $\deriv_b$,
    given $\optmtypthree \mleq \mtypthree$
  \item
    $(\derivof{\derivtwo_i}{\judgv[\mcounttwo_i]{\fctxtwo_i}{\tctxtwo_i}{\tmsix}{\mtyptwo_i}})_{i \in I_a \uplus I_b}$
    as the result of applying \cref{lemCBV:generalized_value_splitting_merging} 
    to $(\derivtwo_i)_{i \in I_a}$ and $(\derivtwo_i)_{i \in I_b}$
  \item
    $\fctx = \fctx_a + \fctx_b 
    = \fctx_{a0} + \fctx_{b0} +_{I_a \uplus I_b} \fctxtwo_i 
    = \fctx_0 +_\iI \fctxtwo_i$ and
    $\tctx = \tctx_a + \tctx_b 
    = \tctx_{a0} + \tctx_{b0} +_{I_a \uplus I_b} \tctxtwo_i 
    = \tctx_0 +_\iI \tctxtwo_i$ and
    $\mcount = \mset{\bBeta} + \mcount_a + \mcount_b 
    = \mset{\bBeta} + \mcount_{a0} + \mcount_{b0} +_{I_a \uplus I_b} \mcounttwo_i 
    = \mcount_0 +_\iI \mcounttwo_i$
  \end{enumerate}
\item $\vruleTypZero$.
  Analogous to case $\vruleTypVarOne$.
\item $\vruleTypSucc$.
  Then
  $\derivof{\deriv'}{\judgv[\mcount]{\fctx}{\tctx}{\succ{\tmthree}}{\msetnat{\succTyp{\mtypnat_j}}_\jJ}}$,
  where $\mtyp = \msetnat{\succTyp{\mtypnat_j}}_\jJ$.
  Then $\tm = \succ{\tmthree'}$, 
  with $\tmthree = \tmthree'\sub{\var}{\tmsix}$.
  The judgment is derived from
  $\derivof{\deriv'_0}{\judgv[\mcount]{\fctx}{\tctx}{\tmthree}{\mtypnat}}$,
  where $\mtypnat = +_\jJ \mtypnat_j$.
  We can apply the \ih on $\deriv'_0$, yielding that there exist
  a finite set $I$, family contexts $\fctx_0$ and $(\fctxtwo_i)_\iI$,
  typing contexts $\tctx_0$ and $(\tctxtwo_i)_\iI$,
  multi-counters $\mcount_0$ and $(\mcounttwo_i)_\iI$ and multitypes $(\mtyptwo_i)_\iI$
  such that:
  \begin{enumerate}
  \item[1'.]
    $\derivof{\deriv'_0}{
      \judgv[\mcount_0]
        {\fctx_0, \var : \fset{\mtyptwo_i}_\iI}{\tctx_0}
        {\tmthree'}
        {\mtypnat}}$, where
    $\mtypnat = +_\jJ \mtypnat_j$
  \item[2'.]
    $(\derivof{\derivtwo_{0i}}{\judgv[\mcounttwo_i]{\fctxtwo_i}{\tctxtwo_i}{\tmsix}{\mtyptwo_i}})_\iI$
  \item[3'.]
    $\fctx = \fctx_0 +_\iI \fctxtwo_i$ and
    $\tctx = \tctx_0 +_\iI \tctxtwo_i$ and
    $\mcount = \mcount_0 +_\iI \mcounttwo_i$
  \end{enumerate}
  By taking $I$ and $\fctx_0$ and $(\fctxtwo_i)_\iI$ and $\tctx_0$ and 
  $(\tctxtwo_i)_\iI$ and $\mcount_0$ and $(\mcounttwo_i)_\iI$ and $(\mtyptwo_i)_\iI$
  from the \ih, we obtain
  $\derivof{\deriv}{
    \judgv[\mcount_0]
      {\fctx_0, \var : \fset{\mtyptwo_i}_\iI}{\tctx_0 +_\iI \tctxtwo_i}
      {\succ{\tmthree'}}
      {\msetnat{\succTyp{\mtypnat_j}}_\jJ}}$
  by applying rule $\vruleTypSucc$ with $\deriv_0$ as premise,
  and the condition in \cref{it:anti_substitution_2} holds by letting $\derivtwo_i \defeq \derivtwo_{i0}$
  for all $\iI$.
  The conditions in \cref{it:anti_substitution_3} hold by the \ih.
\item $\vruleTypIfZero$.
  Analogous to case $\vruleTypApp$.
\item $\vruleTypIfSucc$.
  Then
  $\derivof{\deriv'}{
    \judgv[\mset{\bIfSucc} + \mcount_a + \mcount_b]
      {\fctx_a + \fctx_b}{\tctx_a + \tctx_b}
      {\ifz{\tmtwo}{\tmthree}{\vartwo}{\tmfour}}
      {\mtyp}}$,
  where $\fctx = \fctx_a + \fctx_b$ and $\tctx = \tctx_a + \tctx_b$ and
  $\mcount = \mset{\bIfSucc} + \mcount_a + \mcount_b$.
  Then $\tm = \ifz{\tmtwo'}{\tmthree'}{\vartwo}{\tmfour'}$,
  with $\tmtwo = \tmtwo'\sub{\var}{\tmsix}$ and $\tmthree = \tmthree'\sub{\var}{\tmsix}$
  and $\tmfour = \tmfour'\sub{\var}{\tmsix}$.
  The judgment is derived from
  $\derivof{\deriv'_a}{\judgv[\mcount_a]{\fctx_a}{\tctx_a}{\tmtwo}{\msetnat{\succTyp{\mtypnat}}}}$ and
  $\derivof{\deriv'_b}{\judgv[\mcount_b]{\fctx_b}{\tctx_b, \vartwo : \optmtypnat}{\tmfour}{\mtyp}}$,
  where $\optmtypnat \mleq \mtypnat$.
  We can apply the \ih on $\deriv'_a$, yielding that there exist
  a finite set $I_a$,
  family contexts $\fctx_{a0}$ and $(\fctxtwo_i)_{i \in I_a}$,
  typing contexts $\tctx_{a0}$ and $(\tctxtwo_i)_{i \in I_a}$,
  multi-counters $\mcount_{a0}$ and $(\mcounttwo_i)_{i \in I_a}$ 
  and multitypes $(\mtyptwo_i)_{i \in I_a}$ such that:
  \begin{enumerate}
  \item[a.1.]
    $\derivof{\deriv_a}{
      \judgv[\mcount_{a0}]
        {\fctx_{a0}, \var : \fset{\mtyptwo_i}_{i \in I_a}}{\tctx_{a0}}
        {\tmtwo'}
        {\msetnat{\succTyp{\mtypnat}}}}$
  \item[a.2.]
    $(\derivof{\derivtwo_i}{\judgv[\mcounttwo_i]{\fctxtwo_i}{\tctxtwo_i}{\tmsix}{\mtyptwo_i}})_{i \in I_a}$
  \item[a.3.]
    $\fctx_a = \fctx_{a0} +_{i \in I_a} \fctxtwo_i$ and
    $\tctx_a = \tctx_{a0} +_{i \in I_a} \tctxtwo_i$ and
    $\mcount_a = \mcount_{a0} +_{i \in I_a} \mcounttwo_i$
  \end{enumerate}
  And we can apply the \ih on $\deriv'_b$, yielding that there exist
  a finite set $I_b$,
  family contexts $\fctx_{b0}$ and $(\fctxtwo_i)_{i \in I_b}$,
  typing contexts $\tctx_{b0}$ and $(\tctxtwo_i)_{i \in I_b}$,
  multi-counters $\mcount_{b0}$ and $(\mcounttwo_i)_{i \in I_b}$ 
  and multitypes $(\mtyptwo_i)_{i \in I_b}$ such that:
  \begin{enumerate}
  \item[b.1.]
    $\derivof{\deriv_b}{
      \judgv[\mcount_{b0}]
        {\fctx_{b0}, \var : \fset{\mtyptwo_i}_{i \in I_b}}{\tctx_{b0}, \vartwo : \optmtypnat}
        {\tmfour'}
        {\mtyp}}$
  \item[b.2.]
    $(\derivof{\derivtwo_i}{\judgv[\mcounttwo_i]{\fctxtwo_i}{\tctxtwo_i}{\tmsix}{\mtyptwo_i}})_{i \in I_b}$
  \item[b.3.]
    $\fctx_b = \fctx_{b0} +_{i \in I_b} \fctxtwo_i$ and
    $\tctx_b, \vartwo : \optmtypnat 
    = \tctx_{b0}, \vartwo : \optmtypnat +_{i \in I_b} \tctxtwo_i$ and
    $\mcount_b = \mcount_{b0} +_{i \in I_b} \mcounttwo_i$.
    Note that it is not necessary to split $\optmtypnat_j$ since we
    may assume $\vartwo \notin \fv{\tmsix}$
    by $\alpha$-conversion, so $\vartwo \notin \dom{+_{i \in I_b} \tctxtwo_i}$ 
    by \cref{lemCBV:relevance}.
  \end{enumerate}
  Taking $I \defeq I_a \uplus I_b$ and $\fctx_0 \defeq \fctx_{a0} + \fctx_{b0}$ and
  $(\fctxtwo_i)_\iI \defeq (\fctxtwo_i)_{i \in I_a \uplus I_b}$ and
  $\tctx_0 \defeq \tctx_{a0} + \tctx_{b0}$ and
  $(\tctxtwo_i)_\iI \defeq (\tctxtwo_i)_{i \in I_a \uplus I_b}$ and
  $\mcount_0 \defeq \mset{\bIfSucc} + \mcount_{a0} + \mcount_{b0}$ and
  $(\mcounttwo_i)_\iI \defeq (\mcounttwo_i)_{i \in I_a \uplus I_b}$
  and $(\mtyptwo_i)_\iI \defeq (\mtyptwo_i)_{i \in I_a \uplus I_b}$,
  we have:
  \begin{enumerate}
  \item
    $\derivof{\deriv}{
      \judgv[\mset{\bIfSucc} + \mcount_{a0} + \mcount_{b0}]
        {\fctx_{a0} + \fctx_{b0}, \var : \fset{\mtyptwo_i}_{I_a \uplus I_b}}{\tctx_{a0} + \tctx_{b0}}
        {\ifz{\tmtwo'}{\tmthree'}{\vartwo}{\tmfour'}}
        {\mtyp}}$,
    by applying rule $\vruleTypIfSucc$ to $\deriv_a$ and $\deriv_b$,
    given $\optmtypthree \mleq \mtypthree$
  \item
    $(\derivof{\derivtwo_i}{\judgv[\mcounttwo_i]{\fctxtwo_i}{\tctxtwo_i}{\tmsix}{\mtyptwo_i}})_{i \in I_a \uplus I_b}$
    as the result of applying \cref{lemCBV:generalized_value_splitting_merging} 
    to $(\derivtwo_i)_{i \in I_a}$ and $(\derivtwo_i)_{i \in I_b}$
  \item
    $\fctx = \fctx_a + \fctx_b 
    = \fctx_{a0} + \fctx_{b0} +_{I_a \uplus I_b} \fctxtwo_i 
    = \fctx_0 +_\iI \fctxtwo_i$ and
    $\tctx = \tctx_a + \tctx_b 
    = \tctx_{a0} + \tctx_{b0} +_{I_a \uplus I_b} \tctxtwo_i 
    = \tctx_0 +_\iI \tctxtwo_i$ and
    $\mcount = \mset{\bIfSucc} + \mcount_a + \mcount_b 
    = \mset{\bIfSucc} + \mcount_{a0} + \mcount_{b0} +_{I_a \uplus I_b} \mcounttwo_i 
    = \mcount_0 +_\iI \mcounttwo_i$
  \end{enumerate}
\item $\vruleTypFix$.
  Then
  $\derivof{\deriv'}{
    \judgv[\mset{\bFix} + \mcount' +_\jJ \mcount'_j]
      {\fctx' +_\jJ \fctx'_j}{\tctx' +_\jJ \tctx'_j}
      {\fix{\vartwo}{\tmtwo}}
      {\mtyp}}$,
  where 
  $\fctx = \fctx' +_\jJ \fctx'_j$ and
  $\tctx = \tctx' +_\jJ \tctx'_j$ and
  $\mcount = \mset{\bFix} + \mcount' +_\jJ \mcount'_j$.
  Then $\tm = \fix{\vartwo}{\tmtwo'}$,
  with $\tmtwo = \tmtwo'\sub{\var}{\tmsix}$.
  The judgment is derived from
  $\derivof{\deriv'_0}{
    \judgv[\mcount']
      {\fctx', \vartwo : \fset{\mtypthree_j}_\jJ}{\tctx'}{\tmtwo}{\mtyp}}$ and
  $(\derivof{\deriv'_j}{
    \judgv[\mcount'_j]
      {\fctx'_j}{\tctx'_j}
      {\fix{\vartwo}{\tmtwo}}
      {\mtypthree_j}})_\jJ$.
  We can apply \ih on $\tmtwo$, yielding that there exist
  a finite set $I'$, family contexts $\fctx'_0$ and $(\fctxtwo'_i)_{i \in I'}$,
  typing contexts $\tctx'_0$ and $(\tctxtwo'_i)_{i \in I'}$,
  multi-counters $\mcount'_0$ and $(\mcounttwo'_i)_{i \in I'}$
  and multitypes $(\mtyptwo_i)_{i \in I'}$ such that:
  \begin{enumerate}
  \item[1'.]
    $\derivof{\deriv_0}{
      \judgv[\mcount'_0]
        {\fctx'_0, \vartwo : \fset{\mtypthree_j}_\jJ, \var : \fset{\mtyptwo_i}_{i \in I'}}{\tctx'_0}
        {\tmtwo'}{\mtyp}}$
  \item[2'.]
    $(\derivof{\derivtwo_i}{\judgv[\mcount'_i]{\fctxtwo'_i}{\tctxtwo'_i}{\tmsix}{\mtyptwo_i}})_{i \in I'}$
  \item[3'.]
    $\fctx', \vartwo : \fset{\mtypthree_j}_\jJ 
    = \fctx'_0, \vartwo : \fset{\mtypthree_j}_\jJ +_{i \in I'} \fctxtwo'_i$ and
    $\tctx' = \tctx'_0 +_{i \in I'} \tctxtwo'_i$ and
    $\mcount' = \mcount'_0 +_{i \in I'} \mcounttwo'_i$.
    Note that it is not necessary to split $\fset{\mtypthree_j}_\jJ$ since
    may assume $\vartwo \notin \fv{\tmsix}$
    by $\alpha$-conversion, so $\vartwo \notin \dom{+_{i \in I'} \fctxtwo_i}$ 
    by \cref{lemCBV:relevance}.
    \end{enumerate}
  Since the conclusion of $\deriv'_j$ is smaller than (1) for all $\jJ$,
  we can then apply the \ih on $(\deriv'_j)_\jJ$, yielding that for each $\jJ$, 
  there exist a finite set $I'_j$,
  family contexts $\fctx'_{j0}$ and $(\fctxtwo'_i)_{i \in I'_j}$,
  typing contexts $\tctx'_{j0}$ and $(\tctxtwo'_i)_{i \in I'_j}$,
  multi-counters $\mcount'_{j0}$ and $(\mcounttwo'_i)_{i \in I'_j}$ and
  multitypes $(\mtyptwo_i)_{i \in I'_j}$ such that:
  \begin{enumerate}
  \item[j.1.]
    $\derivof{\deriv_j}{
      \judgv[\mcount'_{j0}]
        {\fctx'_{j0}, \var : \fset{\mtyptwo_i}_{i \in I'_j}}{\tctx'_{j0}}
        {\fix{\vartwo}{\tmtwo'}}
        {\mtypthree_j}}$
  \item[j.2.]
    $(\derivof{\derivtwo_i}{\judgv[\mcount'_i]{\fctxtwo'_i}{\tctxtwo'_i}{\tmsix}{\mtyptwo_i}})_{i \in I'_j}$
  \item[j.3.]
    $\fctx = \fctx_{j0} +_{i \in I_j} \fctxtwo_i$ and
    $\tctx = \tctx_{j0} +_{i \in I_j} \tctxtwo_i$ and
    $\mcount = \mcount_{j0} +_{i \in I_j} \mcounttwo_i$
  \end{enumerate}
  Taking
  $I \defeq I' \uplus_\jJ I'_j$ and $\fctx_0 \defeq \fctx'_0 +_\jJ \fctx'_{j0}$ and 
  $(\fctxtwo_i)_\iI \defeq (\fctxtwo'_i)_{I' \uplus_\jJ I'_j}$ and
  $\tctx_0 \defeq \tctx'_0 +_\jJ \tctx'_{j0}$ and
  $(\tctxtwo_i)_\iI \defeq (\tctxtwo'_i)_{I' \uplus_\jJ I'_j}$ and
  $\mcount_0 \defeq \mset{\bFix} + \mcount'_0 +_\jJ \mcount'_{j0}$ and
  $(\mcounttwo_i)_\iI \defeq (\mcounttwo'_i)_{I' \uplus_\jJ I'_j}$ and
  and $(\mtyptwo_i)_\iI \defeq (\mtyptwo_i)_{I' \uplus_\jJ I'_j}$
  such that:
  \begin{enumerate}
  \item
    $\derivof{\deriv}{
      \judgv[\mcount_0]
        {\fctx_0, \var : \fset{\mtyptwo_i}_\iI}{\tctx_0}
        {\fix{\vartwo}{\tmtwo'}}
        {\mtyp}}$,
    by applying rule $\vruleTypFix$ to $\deriv_0$ and $(\deriv_j)_\jJ$
  \item
    $(\derivof{\derivtwo_i}{\judgv[\mcounttwo_i]{\fctxtwo_i}{\tctxtwo_i}{\tmsix}{\mtyptwo_i}})_\iI$,
    as the result of applying \cref{lemCBV:generalized_value_splitting_merging} 
    to $(\derivtwo_i)_{i \in I'}$ and $(\derivtwo_i)_{i \in I_j}$ for all $\jJ$
  \item
    $\fctx = \fctx' +_\jJ \fctx'_j
    = \fctx'_0 +_\jJ \fctx'_{j0} +_{I' \uplus_\jJ I'_j} (\fctxtwo'_i)_{I' \uplus_\jJ I'_j} 
    = \fctx_0 +_\iI \fctxtwo_i$ and
    $\tctx = \tctx' +_\jJ \tctx'_j
    = \tctx'_0 +_\jJ \tctx'_{j0} +_{I' \uplus_\jJ I'_j} (\tctxtwo'_i)_{I' \uplus_\jJ I'_j} 
    = \tctx_0 +_\iI \tctxtwo_i$ and
    $\mcount_0 = \mset{\bFix} + \mcount' +_\jJ \mcount'_j
    = \mcount'_0 +_\jJ \mcount'_{j0} +_{I' \uplus_\jJ I'_j} (\mcounttwo'_i)_{I' \uplus_\jJ I'_j} 
    = \mcount_0 +_\iI \mcounttwo_i$
  \end{enumerate}
\end{itemize}
\end{proof}

\subjectexpansion*

\begin{proof}
We proceed by induction on the derivation of $\tm \tov{\rulename} \tm'$.
\begin{enumerate}
\item $\vruleToBeta$.
  Then
  $\tm = (\lam{\var}{\tmtwo})\,\val \tov{\bBeta} \tmtwo\sub{\var}{\val} = \tm'$,
  with $\derivof{\deriv'}{\judgv[\mcount']{\fctx}{\tctx}{\tmtwo\sub{\var}{\val}}{\mtyp}}$.
  By \nameref{lemCBV:value_anti_substitution}, there exist
  $\fctx_1$, $\fctx_2$, $\tctx_1$, $\tctx_2$, $\mcount'_1$, $\mcount'_2$,
  $\optmtyptwo$ and $\mtyptwo$ such that
  $\fctx = \fctx_1 + \fctx_2$ and $\tctx = \tctx_1 + \tctx_2$ and
  $\mcount' = \mcount'_1 + \mcount'_2$ and $\optmtyptwo \mleq \mtyptwo$
  and $\derivof{\deriv_0}{\judgv[\mcount'_1]{\fctx_1}{\tctx_1, \var : \optmtyptwo}{\tmtwo}{\mtyp}}$
  and $\derivof{\derivtwo}{\judgv[\mcount'_2]{\fctx_2}{\tctx_2}{\val}{\mtyptwo}}$ hold.
  Hence we build the derivation $\deriv$ as:
  \[
    \indrule{\vruleTypApp}{
      \indrule{\vruleTypAbs}{
        \derivof{\deriv_0}{\judgv[\mcount'_1]{\fctx_1}{\tctx_1, \var : \optmtyptwo}{\tmtwo}{\mtyp}}
      }{
        \judgv[\mcount'_1]{\fctx_1}{\tctx_1}{\lam{\var}{\tmtwo}}{\msetabs{\optmtyptwo \to \mtyp}}
      }
      \optmtyptwo \mleq \mtyptwo
      \HS
      \derivof{\derivtwo}{\judgv[\mcount'_2]{\fctx_2}{\tctx_2}{\val}{\mtyptwo}}
    }{
      \judgv[\mset{\bBeta} + \mcount']{\fctx}{\tctx}{(\lam{\var}{\tmtwo})\,\val}{\mtyp}
    }
  \]
\item $\vruleToIfZero$.
  Then $\tm = \ifz{\zero}{\tm'}{\var}{\tmtwo} \tov{\bIfZero} \tm'$
  with $\derivof{\deriv'}{\judgv[\mcount']{\fctx}{\tctx}{\tm'}{\mtyp}}$,
  and we build the derivation $\deriv$ as:
  \[
    \indrule{\vruleTypIfZero}{
      \indrule{\vruleTypZero}{
        \emptyPremise
      }{
        \judgv[\emset]{\emptyctx}{\emptyctx}{\zero}{\mset{\zeroTyp}}
      }
      \derivof{\deriv'}{\judgv[\mcount']{\fctx}{\tctx}{\tm'}{\mtyp}}
    }{
      \judgv[\mset{\bIfZero}+\mcount']{\fctx}{\tctx}{\ifz{\zero}{\tm'}{\var}{\tmtwo}}{\mtyp}
    }
  \]
\item $\vruleToIfSucc$.
  Then 
  $\tm = \ifz{\succ{\valnat}}{\tmtwo}{\var}{\tmthree}
  \tov{\bIfSucc} \tmthree\sub{\var}{\valnat} = \tm'$
  with $\derivof{\deriv'}{\judgv[\mcount']{\fctx}{\tctx}{\tmthree\sub{\var}{\valnat}}{\mtyp}}$.
  By \nameref{lemCBV:value_anti_substitution} there exist
  family contexts $\fctx_1$, $\fctx_2$,
  typing contexts $\tctx_1$, $\tctx_2$,
  multi-counters $\mcount'_1$, $\mcount'_2$,
  an optional multitype $\optmtyptwo$, and a multitype $\mtyptwo$
  such that
  $\fctx = \fctx_1 + \fctx_2$ and
  $\tctx = \tctx_1 + \tctx_2$ and
  $\mcount' = \mcount'_1 + \mcount'_2$ and
  $\optmtyptwo \mleq \mtyptwo$
  and
  $\derivof{\deriv_0}{\judgv[\mcount'_1]{\fctx_1}{\tctx_1,\var:\optmtyptwo}{\tmthree}{\mtyp}}$
  $\derivof{\derivtwo}{\judgv[\mcount'_2]{\fctx_2}{\tctx_2}{\valnat}{\mtyptwo}}$.
  Moreover $\mtyptwo$ is of the form $\mtypnat$ by \cref{remCBV:valnat_has_multitype_nat}.
  Hence we build $\deriv$ as:
  \[
    \indrule{\vruleTypIfSucc}{
      \indrule{\vruleTypSucc}{
        \derivof{\derivtwo}{\judgv[\mcount'_2]{\fctx_2}{\tctx_2}{\valnat}{\mtyptwo}}
      }{
        \judgv[\mcount'_2]{\fctx_2}{\tctx_2}{\succ{\valnat}}{\mset{\succTyp{\mtyptwo}}}
      }
      \optmtyptwo \mleq \mtyptwo
      \HS
      \derivof{\deriv_0}{\judgv[\mcount'_1]{\fctx_1}{\tctx_1,\var:\optmtyptwo}{\tmthree}{\mtyp}}
    }{
      \judgv[\mset{\bIfSucc}+\mcount']{\fctx}{\tctx}{\ifz{\succ{\valnat}}{\tmtwo}{\var}{\tmthree}}{\mtyp}
    }
  \]
\item $\vruleToFix$.
  Then $\tm = \fix{\var}{\tmtwo}
        \tov{\bFix} \tmtwo\sub{\var}{\fix{\var}{\tmtwo}} = \tm'$
  with $\judgv[\mcount']{\fctx}{\tctx}{\tmtwo\sub{\var}{\fix{\var}{\tmtwo}}}{\mtyp}$.
  By \nameref{lemCBV:anti_substitution},
  there exist a finite set $I$,
  family contexts $\fctx_0$, $(\fctx_i)_\iI$,
  typing contexts $\tctx_0$, $(\tctx_i)_\iI$,
  multi-counters $\mcount'_0$, $(\mcount'_i)_\iI$,
  and multitypes $(\mtyptwo_i)_\iI$
  such that
  $\fctx = \fctx_0 +_\iI \fctx_i$ and
  $\tctx = \tctx_0 +_\iI \tctx_i$ and
  $\mcount' = \mcount'_0 +_\iI \mcount'_i$,
  and such that
  $\derivof{\deriv_0}{\judgv[\mcount'_0]{\fctx_0,\var:\fset{\mtyptwo_i}_\iI}{\tctx_0}{\tmtwo}{\mtyp}}$
  and
  $\derivof{\derivtwo_i}{\judgv[\mcount'_i]{\fctx_i}{\tctx_i}{\fix{\var}{\tmtwo}}{\mtyptwo_i}}$ hold
  for each $\iI$.
  Hence we build $\deriv$ as:
  \[
    \indrule{\vruleTypFix}{
      \derivof{\deriv_0}{\judgv[\mcount'_0]{\fctx_0,\var:\fset{\mtyptwo_i}_\iI}{\tctx_0}{\tmtwo}{\mtyp}}
      \HS
      (\derivof{\deriv_i}{\judgv[\mcount'_i]{\fctx_i}{\tctx_i}{\fix{\var}{\tmtwo}}{\mtyptwo_i}})_\iI
    }{
      \judgv[\mset{\bFix}+\mcount']{\fctx}{\tctx}{\fix{\var}{\tmtwo}}{\mtyp}
    }
  \] 
\item $\vruleToCongAppL$.
  Then
  $\tm = \tmtwo \, \tmthree \tov{\rulename} \tmtwo' \, \tmthree = \tm'$,
  which is derived from $\tmtwo \tov{\rulename} \tmtwo'$.
  Moreover, we have 
  $\derivof{\deriv'}{\judgv[\mcount']{\fctx}{\tctx}{\tmtwo' \, \tmthree}{\mtyp}}$,
  which can only be derived by rule $\vruleTypApp$,
  thus $\deriv'$ has the form:
  \[
    \indrule{\vruleTypApp}{
      \judgv[\mcount'_1]
        {\fctx_1}{\tctx_1}
        {\tmtwo'}{\msetabs{\optmtyptwo \to \mtyp}}
      \HS
      (1)\ \optmtyptwo \mleq \mtyptwo
      \HS
      \derivof{\deriv_2}{
        \judgv[\mcount'_2]
          {\fctx_2}{\tctx_2}
          {\tmthree}{\mtyptwo}
      }
    }{
      \judgv[\mset{\bBeta} + \mcount'_1 + \mcount'_2]
        {\fctx_1 + \fctx_2}{\tctx_1 + \tctx_2}
        {\tmtwo' \, \tmthree}{\mtyp}
    }
  \]
  where $\fctx = \fctx_1 + \fctx_2$ and $\tctx = \tctx_1 + \tctx_2$ and
  $\mcount' = \mset{\bBeta} + \mcount'_1 + \mcount'_2$.
  By the \ih on $\tmtwo \tov{\rulename} \tmtwo'$ there exists $\mcount_1$ such that
  $\mcount_1 = \mset{\rulename} + \mcount'_1$, and 
  $\derivof{\deriv_1}{\judgv[\mcount_1]{\fctx_1}{\tctx_1}{\tmtwo}{\msetabs{\optmtyptwo \to \mtyp}}}$.
  Applying rule $\vruleTypApp$ with $\deriv_1$, (1) and $\deriv_2$ as premises,
  we conclude
  $\derivof{\deriv}{
    \judgv[\mset{\bBeta} + \mcount_1 + \mcount'_2]
      {\fctx_1 + \fctx_2}{\tctx_1 + \tctx_2}
      {\tmtwo \, \tmthree}{\mtyp}}$,
  where $\mcount = \mset{\bBeta} + \mcount_1 + \mcount'_2$ and verifies
  $\mcount = \mset{\bBeta} + \mset{\rulename} + \mcount'_1 + \mcount'_2 
  = \mset{\rulename} + \mcount'$.
\item $\vruleToCongAppR$.
  Analogous to the previous case.
\item $\vruleToCongSucc$.
  Then
  $\tm = \succ{\tmtwo} \tov{\rulename} \succ{\tmtwo'} = \tm'$,
  which is derived from $\tmtwo \tov{\rulename} \tmtwo'$.
  Moreover, we have
  $\derivof{\deriv'}{\judgv[\mcount']{\fctx}{\tctx}{\succ{\tmtwo'}}{\mtyp}}$,
  which can only be derived by rule $\vruleTypSucc$,
  thus $\deriv'$ has the form:
  \[
    \indrule{\vruleTypSucc}{
      \judgv[\mcount']{\fctx}{\tctx}{\tmtwo'}{+_\iI \mtyptwonat_i}
    }{
      \judgv[\mcount']
        {\fctx}{\tctx}{\succ{\tmtwo'}}{\msetnat{\succTyp{\mtyptwonat_i}}_\iI}
    }
  \]
  where
  $\mtyp = \msetnat{\succTyp{\mtyptwonat_i}}_\iI$.
  By the \ih on $\tmtwo \tov{\rulename} \tmtwo'$, there exists $\mcount$ such that
  $\mcount = \mset{\rulename} + \mcount'$, and 
  $\derivof{\deriv_0}{\judgv[\mcount]{\fctx}{\tctx}{\tmtwo}{+_\iI \mtyptwonat_i}}$.
  Applying rule $\vruleTypSucc$ to $\deriv_0$, we conclude with
  $\derivof{\deriv}{\judgv[\mcount]{\fctx}{\tctx}{\succ{\tmtwo}}{\msetnat{\succTyp{\mtyptwonat_i}}_\iI}}$,
  where $\mcount$ verifies
  $\mcount = \mset{\rulename} + \mcount'$ by the \ih.
\item $\vruleToCongIf$.
  Then
  $\tm = \ifz{\tmtwo}{\tmthree}{\var}{\tmfour}
   \tov{\rulename} \ifz{\tmtwo'}{\tmthree}{\var}{\tmfour} = \tm'$
  which is derived from $\tmtwo \tov{\rulename} \tmtwo'$.
  Moreover, we have
  $\derivof{\deriv'}{\judgv[\mcount']{\fctx}{\tctx}{\ifz{\tmtwo'}{\tmthree}{\var}{\tmfour}}{\mtyp}}$.
  We consider two subcases, depending on whether the conclusion of $\deriv'$ 
  is obtained from $\vruleTypIfZero$ or $\vruleTypIfSucc$:
  \begin{enumerate}
  \item $\vruleTypIfZero$.
    Then there exist
    family contexts $\fctx_1$, $\fctx_2$,
    typing contexts $\tctx_1$, $\tctx_2$, and
    multi-counters $\mcount'_1$, $\mcount'_2$
    such that
    $\fctx = \fctx_1 + \fctx_2$ and
    $\tctx = \tctx_1 + \tctx_2$ and
    $\mcount' = \mset{\bIfZero} + \mcount'_1 + \mcount'_2$,
    and $\derivof{\deriv'_1}{\judgv[\mcount'_1]{\fctx_1}{\tctx_1}{\tmtwo'}{\mset{\zeroTyp}}}$
    and $\derivof{\deriv'_2}{\judgv[\mcount'_2]{\fctx_2}{\tctx_2}{\tmthree}{\mtyp}}$.
    By the \ih on $\tmtwo \tov{\rulename} \tmtwo'$, there exits $\mcount_1$ such that
    $\mcount_1 = \mset{\rulename} + \mcount'_1$, and
    $\derivof{\deriv_1}{\judgv[\mset{\rulename}+\mcount'_1]{\fctx_1}{\tctx_1}{\tmtwo}{\mset{\zeroTyp}}}$.
    Hence we build $\deriv$ as:
    \[
      \indrule{\vruleTypIfZero}{
        \derivof{\deriv_1}{\judgv[\mset{\rulename}+\mcount'_1]{\fctx_1}{\tctx_1}{\tmtwo}{\msetnat{\zeroTyp}}}
        \HS
        \derivof{\deriv'_2}{\judgv[\mcount'_2]{\fctx_2}{\tctx_2}{\tmthree}{\mtyp}}
      }{
        \judgv[\mset{\rulename}+\mcount']{\fctx}{\tctx}{\ifz{\tmtwo}{\tmthree}{\var}{\tmfour}}{\mtyp}
      }
    \]
  \item $\vruleTypIfSucc$.
    Then there exist
    family contexts $\fctx_1$, $\fctx_2$,
    typing contexts $\tctx_1$, $\tctx_2$,
    multi-counters $\mcount'_1$, $\mcount'_2$,
    an optional $\natsym$-multitype $\optmtypnat$
    and a $\natsym$-multitype $\mtypnat$
    such that
    $\fctx = \fctx_1 + \fctx_2$ and
    $\tctx = \tctx_1 + \tctx_2$ and
    $\mcount' = \mset{\bIfSucc} + \mcount'_1 + \mcount'_2$ and
    (1) $\optmtypnat \mleq \mtypnat$,
    and
    $\derivof{\deriv'_1}{\judgv[\mcount'_1]{\fctx_1}{\tctx_1}{\tmtwo'}{\msetnat{\succTyp{\mtypnat}}}}$
    and
    $\derivof{\deriv'_2}{\judgv[\mcount'_2]{\fctx_2}{\tctx_2,\var:\optmtypnat}{\tmfour}{\mtyp}}$.
    By the \ih on $\tmtwo \tov{\rulename} \tmtwo'$, there exists $\mcount_1$ such that
    $\mcount_1 = \mset{\rulename} + \mcount'_1$, and
    $\derivof{\deriv_1}{\judgv[\mset{\rulename}+\mcount'_1]{\fctx_1}{\tctx_1}{\tmtwo}{\msetnat{\succTyp{\mtypnat}}}}$.
    Hence we build $\deriv$ as:
    \[
      \indrule{\vruleTypIfSucc}{
        \derivof{\deriv_1}{\judgv[\mset{\rulename}+\mcount'_1]{\fctx_1}{\tctx_1}{\tmtwo}{\msetnat{\succTyp{\mtypnat}}}}
        \HS
        (1)\ \optmtypnat \mleq \mtypnat
        \HS
        \derivof{\deriv'_2}{\judgv[\mcount'_2]{\fctx_2}{\tctx_2,\var:\optmtypnat}{\tmfour}{\mtyp}}
      }{
        \judgv[\mset{\rulename}+\mcount']{\fctx}{\tctx}{\ifz{\tmtwo}{\tmthree}{\var}{\tmfour}}{\mtyp}
      }
    \]
  \end{enumerate}
\end{enumerate}
\end{proof}

\NFtypable*

\begin{proof}
By induction on the derivation of $\tm \in \NFv{\nature}$.
Recall cases $\vruleNFApp$, $\vruleNFSuccErr$ and $\vruleNFIf$ do not apply since
$\stucksym$ is not a proper nature.
\begin{enumerate}
\item $\vruleNFAbs$.
  Then $\tm = \lam{\var}{\tmtwo} \in \NFv{\abssym}$
  and we can type $\tm$ using $\vruleTypAbs$.
\item $\vruleNFZero$.
  Then $\tm = \zero \in \NFv{\natsym}$
  and we can type $\tm$ using $\vruleTypZero$.
\item $\vruleNFSuccNat$.
  Then $\tm = \succ{\tmthree} \in \NFv{\natsym}$ where $\tmthree \in \NFv{\natsym}$.
  By \ih on $\tmthree \in \NFv{\natsym}$,
  $\judgv[\mcount]{\fctx}{\tctx}{\tmthree}{\mtyp}$.
  Moreover, since $\tmthree$ is a normal form with $\natsym$ nature, it must be
  the case that $\tmthree = \valnat$, so by \cref{remCBV:valnat_has_multitype_nat},
  $\mtyp$ is of the form $\mtypnat$, which can be written as $+_\iI \mtypnat_i$.
  Hence $\judgv[\mcount]{\fctx}{\tctx}{\succ{\tmthree}}{\msetnu{\natsym}{\succTyp{\mtypnat_i}}_\iI}$
  by applying $\vruleTypSucc$.
  \qedhere
\end{enumerate}
\end{proof}

\tightsoundnesscompleteness*
\begin{proof}
\,\\
($1 \Rightarrow 2$)
  By induction on the size of $\mcount$,
  analyzing whether $\tm \in \NFv{\nature}$ or not.
  \begin{itemize}
  \item If $\tm \in \NFv{\nature}$, 
    then $\tm$ is of the form $\lam{\var}{\tmtwo}$, $\zero$, or $\succ{\valnat}$,
    given that $\nature$ is a proper nature.
    Hence $\tm$ can only be derived by rule $\vruleTypAbs$, $\vruleTypZero$, or $\vruleTypSucc$ respectively.
    In the three cases we obtain $\mcount = \emset$ by \cref{lemCBV:typable_values_emptyctx},
    so taking $n = 0$ we are done.
  \item If $\tm \notin \NFv{\nature}$.
    Then by \cref{propCBV:characterization_NF} it must exist a term $\tm'$
    and a rule name $\rulename$ such that $\tm \tov{\rulename} \tm'$.
    By \nameref{lemCBV:subject_reduction} there exists $\mcount'$ such that
    $\mcount = \mset{\rulename} + \mcount'$, and 
    $\judgv[\mcount']{\emptyctx}{\emptyctx}{\tm'}{\emsetnu{\nature}}$.
    By \ih on $\mcount'$, there exists a sequence of steps
    $\tm' \tov{\rulename_1} \tm_1 \hdots \tov{\rulename_n} \tm_n$ where
    $\tm_n \in \NFv{\nature}$ and $\mcount' = \mset{\rulename_1, \hdots, \rulename_n}$.
    By joining this reduction sequence with the reduction step
    $\tm \tov{\rulename} \tm'$ we have a reduction sequence,
    $\tm \tov{\rulename} \tm' \tov{\rulename_1} \tm_1 \hdots \tov{\rulename_n} \tm_n$,
    and $\mcount = \mset{\rulename} + \mset{\rulename_1, \hdots, \rulename_n}$,
    so we are done.
  \end{itemize}

\noindent
($2 \Rightarrow 1$)
  By induction on $n$.
  \begin{enumerate}
  \item $n = 0$.
    Then $\tm \in \NFv{\nature}$, and $\mcount = \emset$.
    Therefore 
    $\judgv[\emset]{\emptyctx}{\emptyctx}{\tm}{\emsetnu{\nature}}$ holds
    by \cref{lemCBV:NF_tight_typable}.
  \item $n > 0$, assuming the property holds for $n-1$.
    Taking the reduction sequence of $(n-1)$ steps from $\tm_1$ to $\tm_n$,
    and $\mcount' = \mset{\rulename_2,\hdots,\rulename_n}$
    we can apply \ih, yielding 
    $\judgv[\mcount']{\emptyctx}{\emptyctx}{\tm_1}{\emsetnu{\nature}}$.
    Since $\tm \tov{\rulename_1} \tm_1$, then by \nameref{lemCBV:subject_expansion}
    we have $\judgv[\mcount]{\emptyctx}{\emptyctx}{\tm}{\emsetnu{\nature}}$, 
    with $\mcount = \mset{\rulename_1} + \mcount'$, as stated in the hypothesis,
    so we are done.
  \end{enumerate}
\end{proof}

\end{document}